   \def\UseBibLatex{1}%
   \providecommand{\SoCGVer}[1]{}%
   \providecommand{\NotSoCGVer}[1]{#1}%
\def\input@path{{lipics/}{../lipics/}}
\providecommand{\SoCGVer}[1]{#1}%
\providecommand{\NotSoCGVer}[1]{}%
\newcommand{\SarielComp}[1]{}
\newcommand{\NotSarielComp}[1]{#1}%
\newcommand{\SarielComp}[1]{#1}%
\newcommand{\NotSarielComp}[1]{}%
\newcommand{\IfPrinterVer}[2]{#2}%
   \titleformat{\paragraph}[runin]
   {\normalfont\bfseries}
   {\theparagraph}
   {1em}
   {\addperiod}
   \newcommand{\addperiod}[1]{#1.}%
\providecommand{\BibLatexMode}[1]{}
\providecommand{\BibTexMode}[1]{#1}
    \renewcommand{\BibLatexMode}[1]{}
    \renewcommand{\BibTexMode}[1]{#1}
    \renewcommand{\BibLatexMode}[1]{#1}
    \renewcommand{\BibTexMode}[1]{}
\newcommand{\UsePackage}[1]{%
  \IfFileExists{../styles/#1.sty}{%
      \usepackage{../styles/#1}%
   }{%
      \IfFileExists{./styles/#1.sty}{%
         \usepackage{styles/#1}%
      }{%
         \usepackage{#1}%
      }%
   }%
}
   \theoremstyle{plain}%
   \newtheorem{fact}[theorem]{Fact}
   \newtheorem{invariant}[theorem]{Invariant}
   \newtheorem{question}[theorem]{Question}
   \newtheorem{prop}[theorem]{Proposition}
   \newtheorem{openproblem}[theorem]{Open Problem}
   \theoremstyle{plain}%
   \newtheorem{defn}[theorem]{Definition}
   \newtheorem{problem}[theorem]{Problem}
   \newtheorem{xca}[theorem]{Exercise}
   \newtheorem{exercise_h}[theorem]{Exercise}
   \newtheorem{assumption}[theorem]{Assumption}%
   \newtheorem{proofof}{Proof of\!}%
\theoremstyle{plain}%
\newtheorem{theorem}{Theorem}[section]
\newtheorem{lemma}[theorem]{Lemma}
\newtheorem{corollary}[theorem]{Corollary}
\newtheorem{observation}[theorem]{Observation}
\theoremstyle{plain}%
\newtheorem*{remark:unnumbered}[theorem]{Remark}%
\newtheorem{remark}[theorem]{Remark}%
\newtheorem{defn}[theorem]{Definition}
\newtheorem{example}[theorem]{Example}
\newcommand{\myqedsymbol}{\rule{2mm}{2mm}}
\theoremstyle{nonumberplain}%
\newtheorem{proof}{Proof:}%
\definecolor{nalmostblack}{rgb}{0, 0, 0.7}
\providecommand{\emphic}[2]{%
   \textcolor{nalmostblack}{%
      \textbf{\emph{#1}}}%
   \index{#2}}
\providecommand{\emphi}[1]{\emphic{#1}{#1}}
\providecommand{\emphcolor}[1]{\textcolor{nalmostblack}{{#1}}}
\definecolor{almostblack}{rgb}{0, 0, 0.5}
\providecommand{\emphw}[1]{{\emph{{\textcolor{almostblack}{#1}}}}}%
\numberwithin{figure}{section}%
\numberwithin{table}{section}%
\numberwithin{equation}{section}%
\newcommand{\SarielThanks}[1]{\thanks{Department of Computer Science;
      University of Illinois; 201 N. Goodwin Avenue; Urbana, IL,
      61801, USA; %
      \href{mailto:sariel.spam@illinois.edu}%
      {sariel@illinois.edu}; %
      \url{http://sarielhp.org/}. #1}}
\newcommand{\HLink}[2]{{\hyperref[#2]{#1~\ref*{#2}}}}
\newcommand{\HLinkS}[2]{{\hyperref[#2]{#1\ref*{#2}}}}
\newcommand{\HLinkY}[2]{\hyperref[#2]{#1}}
\newcommand{\HLinkSuffix}[3]{\hyperref[#2]{#1\ref*{#2}{#3}}}
\newcommand{\figlab}[1]{\label{fig:#1}}
\newcommand{\figref}[1]{\HLink{Figure}{fig:#1}}
\newcommand{\thmlab}[1]{{\label{theo:#1}}}
\newcommand{\thmref}[1]{\HLink{Theorem}{theo:#1}}
\newcommand{\thmrefY}[2]{\HLinkY{#2}{theo:#1}}
\newcommand{\corlab}[1]{\label{cor:#1}}
\newcommand{\seclab}[1]{\label{sec:#1}}
\newcommand{\secref}[1]{\HLink{Section}{sec:#1}}
\newcommand{\Rects}{\Mh{\mathcal{R}}}%
\newcommand{\Objs}{\Mh{\mathcal{O}}}%
\newcommand{\RectsInfX}[1]{\Rects^{\rectangleC}\pth{#1}}%
\newcommand{\apndlab}[1]{\label{apnd:#1}}
\newcommand{\apndref}[1]{\HLink{Appendix}{apnd:#1}}
\newcommand{\obslab}[1]{\label{observation:#1}}
\newcommand{\obsref}[1]{\HLink{Observation}{observation:#1}}
\newcommand{\lemlab}[1]{\label{lemma:#1}}
\newcommand{\lemref}[1]{\HLink{Lemma}{lemma:#1}}%
\newcommand{\Xlemref}[1]{\noexpand{\noexpand\HLink{Lemma}{lemma:#1}}}%
\newcommand{\Xcorref}[1]{\noexpand{\noexpand\HLink{Corollary}{cor:#1}}}%
\providecommand{\eqlab}[1]{}%
\renewcommand{\eqlab}[1]{\label{equation:#1}}
\newcommand{\Eqref}[1]{\HLinkSuffix{Eq.~(}{equation:#1}{)}}
\providecommand{\remove}[1]{}%
\newcommand{\Set}[2]{\left\{ #1 \;\middle\vert\; #2 \right\}}
\newcommand{\pbrcx}[1]{\left[ {#1} \right]}%
\newcommand{\Prob}[1]{\mathop{\mathbf{Pr}}\!\pbrcx{#1}}
\newcommand{\Ex}[2][\!]{\mathop{\mathbf{E}}#1\pbrcx{#2}}
\newcommand{\ceil}[1]{\left\lceil {#1} \right\rceil}
\newcommand{\floor}[1]{\left\lfloor {#1} \right\rfloor}
\newcommand{\cardin}[1]{\left| {#1} \right|}%
\renewcommand{\th}{th\xspace}
\renewcommand{\Re}{\mathbb{R}}%
\newlist{compactenumA}{enumerate}{5}%
\setlist[compactenumA]{topsep=0pt,itemsep=-1ex,partopsep=1ex,parsep=1ex,%
   label=(\Alph*)}%
\newlist{compactenuma}{enumerate}{5}%
\setlist[compactenuma]{topsep=0pt,itemsep=-1ex,partopsep=1ex,parsep=1ex,%
   label=(\alph*)}%
\newlist{compactenumI}{enumerate}{5}%
\setlist[compactenumI]{topsep=0pt,itemsep=-1ex,partopsep=1ex,parsep=1ex,%
   label=(\Roman*)}%
\newlist{compactenumi}{enumerate}{5}%
\setlist[compactenumi]{topsep=0pt,itemsep=-1ex,partopsep=1ex,parsep=1ex,%
   label=(\roman*)}%
\newlist{compactitem}{itemize}{5}%
\setlist[compactitem]{label=\ensuremath{\bullet}}%
\setlist[compactitem]{topsep=0pt,itemsep=-1ex,partopsep=1ex,parsep=1ex,%
   label=\ensuremath{\bullet}}%
\providecommand{\IntRange}[1]{\mleft\llbracket #1 \mright\rrbracket}
\newcommand{\IRX}[1]{\IntRange{#1}}%
\newcommand{\si}[1]{#1}
\providecommand{\Mh}[1]{#1}%
\newcommand{\eps}{\varepsilon}
\newcommand{\etal}{\textit{et~al.}\xspace}
\newcommand{\Term}[1]{\textsf{#1}}
\newcommand{\StavThanks}[1]{%
   \thanks{Department of Computer Science; University of Illinois; 201
      N. Goodwin Avenue; Urbana, IL, 61801, USA; %
      \href{mailto:stava2.spam@illinois.edu}%
      {stava2@illinois.edu}; %
      \url{https://publish.illinois.edu/stav-ashur}. %
      #1}}
\providecommand{\G}{\Mh{G}}%
\renewcommand{\G}{\Mh{G}}%
\newcommand{\VV}{\Mh{V}}
\newcommand{\EE}{\Mh{E}}
\newcommand{\EGX}[1]{\EE\pth{#1}}%
\newcommand{\PSA}{\Mh{Q}}%
\newcommand{\pa}{\Mh{u}}
\newcommand{\pb}{\Mh{v}}
\newcommand{\PA}{\Mh{U}}%
\newcommand{\PB}{\Mh{V}}%
\newcommand{\Y}{\Mh{Y}}
\newcommand{\heightX}[1]{\mathrm{height}\pth{#1}}
\newcommand{\Line}{\Mh{\ell}}%
\newcommand{\Tree}{\Mh{T}}%
\newcommand{\region}{\Mh{\mathcalb{r}}}%
\newcommand{\ts}{\hspace{0.6pt}}
\newcommand{\radom}{{\ts\searrow}}%
\newcommand{\rectangleC}{\vcenter{\hbox{\scalebox{1.0}[0.5]{$\square$}}}}
\newcommand{\rankX}[1]{\mathrm{rank}\pth{#1}}
\newcommand{\Forest}{\Mh{\mathcal{F}}}
\newcommand{\rectsymbol}{\vcenter{\hbox{\scalebox{1.0}[0.5]{$\square$}}}}
\newcommand{\rectY}[2]{{\rectsymbol\pth{#1,#2}}}
\newcommand{\BC}{\Mh{\mathcal{C}}}%
\newcommand{\BCA}{\Mh{\mathcal{D}}}%
\newcommand{\BCB}{\Mh{\mathcal{F}}}%
\newcommand{\B}{\Mh{{B}}}%
\newcommand{\WeightX}[1]{\Mh{\omega} \pth{#1}}
\newcommand{\SaveContent}[2]{%
   \expandafter\newcommand{#1}{#2}%
}
\DeclareFontFamily{U}{BOONDOX-calo}{\skewchar\font=45 }
\DeclareFontShape{U}{BOONDOX-calo}{m}{n}{<-> s*[1.05] BOONDOX-r-calo}{}
\DeclareFontShape{U}{BOONDOX-calo}{b}{n}{<-> s*[1.05] BOONDOX-b-calo}{}
\DeclareMathAlphabet{\mathcalb}{U}{BOONDOX-calo}{m}{n}
\SetMathAlphabet{\mathcalb}{bold}{U}{BOONDOX-calo}{b}{n}
\DeclareMathAlphabet{\mathbcalb}{U}{BOONDOX-calo}{b}{n}
\newcommand{\maximaX}[1]{\mathsf{max}_\domby\pth{#1}}%
\newcommand{\amaximaX}[1]{\mathsf{max}_\adom\pth{#1}}%
\newcommand{\domMN}[1]{\mathsf{min}_\domby\pth{#1}}%
\newcommand{\adomMN}[1]{\mathsf{min}_\adom\pth{#1}}%
\newcommand{\BHX}[1]{\mathsf{bh}\pth{#1}}%
\newcommand{\argmax}{\mathsf{argmax}}%
\newcommand{\Arr}{\Mh{\mathcal{A}}}
\newcommand{\ArrX}[1]{\Arr\pth{#1}}
\newcommand{\BH}{\mathcal{B}}%
   \newcommand{\myparagraph}[1]{%
      \bigskip%
      \noindent%
      \textbf{#1.}
   }%
   \newcommand{\myparagraph}[1]{%
      \paragraph{#1}
   }%
\providecommand{\TPDF}[2]{\texorpdfstring{#1}{#2}}
\newcommand{\edgesR}{E^{\rectangleC}}%
\newcommand{\RIG}{\Term{RIG}\xspace}%
\newcommand{\RIGraph}{\Mh{\mathcal{G}}}
\newcommand{\RIGX}[1]{\RIGraph^{\rectangleC}\pth{#1}}%
\newcommand{\RIGDX}[1]{\RIGraph^{\rectangleC}_\domby\pth{#1}}%
\newcommand{\RIGAX}[1]{\RIGraph^{\rectangleC}_\adom\pth{#1}}%
\newcommand{\RIGKY}[2]{\RIGraph^{\rectangleC}_{\leq #1}\pth{#2}}%
\newcommand{\rA}{\mathsf{R}}
\newcommand{\AlgorithmI}[1]{{%
      \textcolor[named]{AlgorithmColor}{\texttt{\bf{#1}}}%
   }}
\newcommand{\listX}[1]{\Mh{L}\pth{#1}}%
\newcommand{\listSX}[1]{\Mh{L}_{#1}}%
\newcommand{\pushOp}{\AlgorithmI{push}\xspace}%
\newcommand{\popOp}{\AlgorithmI{pop}\xspace}%
\newcommand{\reportOp}{\AlgorithmI{report}\xspace}%
\providecommand{\Mh}[1]{#1}
\newcommand{\p}{\Mh{p}}%
\newcommand{\q}{\Mh{q}}%
\newcommand{\pc}{\Mh{u}}%
\newcommand{\pd}{\Mh{v}}%
\newcommand{\concat}{|}
\newcommand{\Quad}{\Mh{\mathcal{Q}}}%
\newcommand{\QuadX}[1]{\Mh{\mathcal{Q}}\pth{#1}}%
\newcommand{\rect}{{\Mh{\mathsf{r}}}}%
\newcommand{\rectA}{\Mh{\mathsf{b}}}%
\newcommand{\rectB}{\Mh{\mathsf{d}}}%
\newcommand{\pl}{\Mh{\mathcalb{l}}}%
\newcommand{\pr}{\Mh{\mathcalb{r}}}%
\providecommand{\L}{\Mh{L}}%
\renewcommand{\L}{\Mh{L}}%
\newcommand{\R}{\Mh{{R}}}%
\newcommand{\BSC}{\Mh{\mathsf{B}}}%
\newcommand{\TSC}{\Mh{\mathsf{T}}}%
\newcommand{\interX}[1]{\mathrm{int}\pth{#1}}%
\newcommand{\ShadowC}{\Mh{\mathcal{S}}}%
\newcommand{\shadowY}[2]{\mathcal{S}\pth{#1,#2}}%
\newcommand{\domby}{{\nearrow}}%
\newcommand{\adom}{{\ts\nwarrow}}%
\newcommand{\Ndomby}{{\swarrow}}%
\newcommand{\Nadom}{{\ts\searrow}}%
\newcommand{\pth}[2][\!]{\mleft({#2}\mright)}%
\renewcommand{\P}{\Mh{\ensuremath{P}}\xspace}%
\begin{document}

\title{On the Rectangles Induced by Points}

\SoCGVer{%
   \author{Anonymous}{}{}{}{}{}%
}%
\NotSoCGVer{%
   \author{%
      Stav Ashur%
      \StavThanks{}%
      \and%
      Sariel Har-Peled%
      \SarielThanks{Work on this paper was partially supported by a
         NSF AF award CCF-2317241.  }%
   }%
}

\SoCGVer{%
   \remove{%
      \author{Stav Ashur}%
      {Department of Computer Science, University of Illinois, 201
         N. Goodwin Avenue, Urbana, IL 61801, USA}%
      {stava2@illinois.edu}%
      {https://orcid.org/0000-0003-0533-8978}%
      {}%
      \author{Sariel Har-Peled}%
      {Department of Computer Science, University of Illinois, 201
         N. Goodwin Avenue, Urbana, IL 61801, USA}%
      {sariel@illinois.edu}%
      {https://orcid.org/0000-0003-2638-9635}%
      {Work on this paper was partially supported by a NSF AF award
         CCF-1907400.}%
   }%
}%

\SoCGVer{%
   \authorrunning{Anony Mous and Friends}%
}

\SoCGVer{%
   \ccsdesc[500]{Theory of computation~Computational geometry}%
   \keywords{Geometric graphs, Fault-tolerant spanners}%
}

\NotSoCGVer{\date{\today}}

\maketitle

\begin{abstract}
    A set $\P$ of $n$ points in the plane induces a set of
    Delaunay-type axis-parallel rectangles $\Rects$, potentially of
    quadratic size, where a rectangle is in $\Rects$ if it has two
    points of $\P$ as corners, and no other point of $\P$ in it. We
    study various algorithmic problems related to this set of
    rectangles, including how to compute it in near linear time, and
    handle various algorithmic tasks on it, such as computing its
    union and depth.  The set of rectangles $\Rects$ induces the
    \emphw{rectangle influence graph} $\G = (\P,\Rects)$, which we
    also study. Potentially our most interesting result is showing
    that this graph can be described as the union of $O(n)$ bicliques,
    where the total weight of the bicliques is $O(n \log^2 n)$, where
    the weight of a bicliques is the number of its vertices.
\end{abstract}

\section{Introduction}

For two points $\p, \q$ in the plane, let $\rectY{\p}{\q}$ denote the
axis-aligned rectangle having $\p$ and $\q$ as corners.  For a set
$\P$ of $n$ points in the plane (in general position), consider the
set of all such (closed) rectangles that contain only their two
defining corners:
\begin{equation}
    \Rects
    =
    \RectsInfX{\P}
    =%
    \Set{\rectY{\p}{\q}}{ \p, \q \in \P \text{ and }
       \cardin{\rectY{\p}{\q}\cap \P} = 2   }.
    \eqlab{rect:set}
\end{equation}

The graph induced by this set of rectangles is the \emphi{rectangle
   influence graph} (\emphcolor{\RIG}) of $\P$, denoted by
$\RIGraph = \RIGX{\P} = (\P, \edgesR)$, with
\begin{math}
    \edgesR = \Set{ \p \q}{\p, \q \in \P, \rectY{\p}{\q} \in
       \RectsInfX{\P}}.
\end{math}
The (implicit) graph $\RIGraph$ (and the associated set $\Rects$) IS
quite interesting, as IT can have a quadratic number of edges, but ITS
fully defined by the $n$ points of $P$.

\myparagraph{Previous work on rectangle influence graphs} %
\RIG{}s were first defined by Ichino and Sklansky \cite{is-rngmfv-85}
who studied it as a special case of relative neighborhood graphs
\cite{t-rngfps-80}. Several papers were dedicated to the problem of
reporting rectangularly visible pairs $(\p,\q)$, i.e. whose rectangle
of influence $\rectY{\p}{\q}$ is empty, and constructing data
structures capable of reporting all points rectangularly visible from
a query point \cite{mow-vv-87, ow-rv-88, gno-fadedd-89}. These
settings were later generalized by de Berg \etal \cite{dbco-gadp-92}
to feature an input consisting of a set of points and a set of
visibility obstacles. A closely related and well-studied problem is
rectangle influence drawability \cite{egllmw-rridg-94, llmw-ridp-98,
   bbm-ridgwf3c-99, mn-ridfcpg-05, mmn-oridi-09,zv-oridpg-09,
   \si{sz-cridit-10}} in which one needs to determine whether a given
graph $\G=(V,E)$ has a realization of its vertices as a set $\P$ of
points in the plane, such that $G = \RIGX{\P}$ (or $\G$ is a subgraph
of $\RIGraph$ in the weak rectangle influence drawability problem).

\begin{figure}[h]
    \centering
    \begin{tabular}{c|c|c}
      \includegraphics[page=1,scale=0.99]{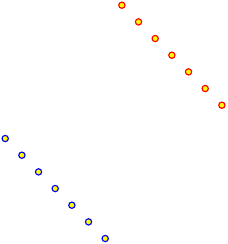}%
      \qquad%
      \phantom{}
      &%
        \includegraphics[page=2,scale=0.99]{figs/two_diagonals}%
      &%
        \qquad%
        \includegraphics[page=3,scale=0.99]{figs/two_diagonals}%
      \\
      (A)&(B)&(C)
    \end{tabular}
      \hfill%
      \phantom{}%
      \caption{(A) The point set. (B) The resulting \RIG.  (C) The
         union of all the associated rectangles, forming the box hull
         of the point set.  This is an example of the \RIG with
         maximum number of edges, and it is formed by two diagonals
         with the same number of points, where points in the top
         diagonal dominates all the points in the lower diagonal. }
      \figlab{two:diags}
\end{figure}

Alon \etal \cite{afk-sppsb-85} proved upper bounds on pairs of points
separated by standard $d$-dimensional boxes, or in the $2$-dimensional
case, pairs of points whose rectangle of influence is empty, and gave
an exact bound of $n^2/4+n-2$ on the maximum number of edges of
$\RIGX{\P}$ in the two dimensional case, see \figref{two:diags}.  Chen
\etal \cite{cpst-dgpsp-09} studied the independence number of
$\RIGX{\P}$, and showed that if $\P$ is a set of uniformly distributed
random points the independence number is sub-linear with high
probability for large enough values of $n$. They also showed that the
expected number of edges in these settings is $\Theta(n\log n)$. As
for the graph's chromatic number, $O(\sqrt{n})$ was shown by Har-Peled
and Smorodinsky \cite{hs-cfcps-05}. This was improved to
$O(n^{0.382})$ by Ajwani \etal \cite{aegr-cfcrr-12}, and later to
$O(n^{0.368})$ by Chan \cite{c-cfcpr-12}. Getting a better bound on
the chromatic number of this graph is still open.

Aronov \etal \cite{adh-wrg-14} studied \emph{witness rectangle
   graphs}, a closely related notion in which an edge $\p\q$ exists if
and only if the rectangle of influence $\rectY{\p}{\q}$ contains a
member of a given witness set. This is the positive variant of witness
rectangle graphs, where the negative variant is similar to the
generalization of \RIG{}s studied in \cite{dbco-gadp-92}.

\myparagraph{Our take}
We take a somewhat different approach than the above works. We are
interested in the (implicitly defined) set of rectangles $\Rects$. We
are interested in how to compute this set of rectangles efficiently,
how to represent it, and how to do various algorithmic tasks on this
set of rectangles. In particular, the union of the rectangles of an
\RIG defines a structure similar (but different) to rectilinear
convex-hull, which seem to have not been studied before.

\paragraph*{Our results}
\begin{compactenumI}

    \smallskip%
    \item \textsc{Box hull}. We fully characterize the union of
    rectangles associated with $\RIG$, and give simple algorithms to
    compute it, find a set of $O(n)$ interior disjoint rectangles that
    partition it, and, given a point in the union, find a rectangle
    that contains it.  We refer to this region induced by the set of
    points $\P$ as the \emphi{box hull} of $\P$, see
    \figref{two:diags} and \figref{b:h:proof}.  See
    \secref{box:hull} for details.

    The box hull is a superset of the rectilinear convex-hull of a
    point set, and has the advantage of being connected and $x$ and
    $y$ monotone. To the best of our knowledge this concept was not
    studied before, and it seems to be natural and more convenient
    than previous similar concepts.

    \smallskip%
    \item \textsc{Implicit computation of the \RIG}. We show that the
    graph $\RIGraph = \RIGX{\P}$ has a representation as the (edge)
    disjoint union of $O( n )$ bicliques, where the total weight of
    the bicliques is $O(n \log^2 n)$ (and it can be computed at this
    time). As a reminder, the biclique $\L \otimes \R$ defined by
    disjoint sets $\L, \R$, is the complete bipartite graph having
    $\L \cup \R$ as its set of vertices, and edges are only between
    the two parts. The weight of such a biclique is
    $\cardin{\L} + \cardin{\R}$. This implies that while $\RIGraph$
    might have quadratic number of edges, it still has a compact
    representation of near linear size.

    To put this result in prospective, it is relatively not hard to
    observe that such a decomposition can exist by using orthogonal
    range-searching data-structures. This however seems to lead to
    bounds of overall size $O(n \log^{3} n)$. Getting rid of the extra
    logarithmic factor in total weight requires taking a more
    elaborate approach which is significantly more
    involved. Similarly, reducing the number of bicliques to linear
    requires some additional ideas. We consider this to be the main
    contribution of our work, and it is described in
    \secref{biclique:cover}.

    \smallskip%
    \item \textsc{Lower bound on the weight of the implicit
       representation.} We prove a lower bound of $\Omega(n\log n)$ on
    the weight of the biclique cover of an \RIG{}. Thus, our
    construction above seems to be optimal (up to potentially a single
    logarithmic factor). See \secref{lower:bound} for details.

    \smallskip%
    \item \textsc{Depth approximation.} Using the small biclique cover
    we show, for the set of rectangles associated with an \RIG{}, a
    construction of a data structure for $(1+\eps)$-approximate-depth
    queries, whose preprocessing phase can be used as a deepest-point
    $(1+\eps)$-approximation algorithm. While the underlying set of
    rectangles might have quadratic size, the preprocessing and space
    used by our data-structure are near linear. Furthermore, queries
    can be answered in logarithmic time. Interestingly, to this end
    we replace every biclique by a set of weighted rectangles whose weight approximates the weight induced by the rectangles of
    the biclique. By approximating ``levels'' in the bicliques, and
    using exponential scales and the structure of the bicliques, we
    are able to do this in near linear time and space, critically
    using the implicit low-weight biclique representation of the \RIG{}.

\end{compactenumI}

\bigskip%
\noindent%
A few additional minor new results:

\begin{compactenumI}[resume]
    \smallskip%
    \item In \apndref{random:points} we give a tight bound of
    $\Theta( n \log n)$ on the number of edges in the \RIG, with high
    probability, for a set of $n$ points picked uniformly at random. A
    concentration result was already shown by Chen \etal
    \cite{cpst-dgpsp-09}, but our bound holds with high probability.

    \smallskip%
    \item In \apndref{structure} we prove that the complexity of
    \RIG{}s is solely due to bicliques. We do so by proving various
    structural results on \RIG. For example, a presence of a biclique with both
    sides having size at least $k$ implies the existence of three
    ``parallel'' long chains that together form this
    biclique. Similarly, the \RIG can not have large (regular)
    cliques as subgraphs, and any three-sided clique must have a side that
    contains only a single vertex.

\end{compactenumI}

\myparagraph{Sketch of computation of the biclique cover of the \RIG}
Using multi-level orthogonal range searching one can compute for each
point $\p \in \P$ the set of points that it dominates, where the set
is represented implicitly as nodes of the data-structure, and every
node is associated with all the points stored in its subtree.  For our
purposes, we need the maxima of each such set, and we need to stitch
these together to form a single maxima chain (which is the set of all
points that support an empty rectangle with $\p$ appearing as the top
right corner). Note that a similar procedure is required for the
anti-domination relation. To this end, one can compute for each
canonical set in the data-structure its maxima, and perform the
stitching quickly, see \secref{stitching}. This yields a
representation of all such points that are neighbors of $\p$ in the
\RIG{} over $\P$ using $O( \log^3 n)$ sets and readily leads to a
biclique representation of the \RIG of weight $O( n \log^3 n)$.

To improve this we need to ``open'' the orthogonal range-searching
data-structure. We thus consider the variant where the point $\p$ has
a higher $y$-coordinate than all the points of $\P$ (we refer to $\P$
as being a $y$-set in such a case). This turns out to be a more
``one-dimensional'' problem. We present a stack data-structure that
enables us to store and preprocess such queries, where instead of
$O( \log^2 n)$ canonical sets in the output we are able to reduce
this number to $O( \log n)$, thus getting the desired improvement. The
one dimensional data-structure requires a non-trivial modification of
the Bently-Saxe technique coupled with persistence, see \secref{stack}.

The number of bicliques this generates is still $O(n \log^2 n)$. To
reduce the number of bicliques further, being somewhat imprecise, we
modify the above stack data-structure so that it only generates either
``singleton'' bicliques or bicliques that have $\Omega(\log n)$
elements on each side (i.e., ``heavy bicliques''). The singleton
bicliques get merged into star bicliques which results in only $O(n)$
bicliques.

\section{Preliminaries}

\begin{defn}
    For two points $\p, \q \in \Re^2$, let $\rA = \rectY{\p}{\q}$
    denote the minimal axis-parallel closed rectangle that contains
    $\p$ and $\q$. Note that $\p$ and $\q$ form antipodal corners of
    $\rA$. The points $\p$ and $\q$ \emphi{support} $\rectY{\p}{\q}$,
    and $\rectY{\p}{\q}$ is the \emphi{rectangle of influence} of $\p$
    and $\q$.
\end{defn}

\begin{defn}
    For a set $\P$ of $n$ points in $\Re^2$. The \emphi{rectangle
       influence graph} (\emphcolor{\RIG}) of $\P$ is the graph
    $\RIGX{\P} = (\P,\EE)$ where $\p \q \in \EE$ $\iff \rectY{\p}{\q}$
    does not contain any point of $\P$ in its interior. $\RIGX{\P}$ is
    also known as the Delaunay graph of $\P$ with respect to axis parallel
    rectangles \cite{cpst-dgpsp-09}, or negative witness rectangle
    graph \cite{adh-wrg-14}. Similarly, the $\RIG_{\!\leq k}$ graph of
    $\P$ is $\RIGKY{k}{\P}$ = $(\P, \EE_{\leq k})$, where
    $\p \q \in \EE_{\leq k}$ $\iff \rectY{\p}{\q}$ contains at most
    $k$ points of $\P$ in its interior.
\end{defn}

\begin{defn}
    A point $\p$ \emphi{dominates} (resp. \emphw{anti-dominates}) a
    point $\q$, denoted by $\q \domby \p$ (resp. $\p \adom \q$), if
    $x(\q) < x(\p)$ and $y(\q) <y(\p)$ (resp. $x(\p) < x(\q)$ and
    $y(\p) > y(\q)$).
\end{defn}

A $\domby$-\emphi{chain} is a sequence of points
$\p_1, \ldots, \p_k \in \P$, such that
$\p_1 \domby \p_2 \domby \cdots \domby \p_k$. Here, an
$\domby$-\emphi{antichain} corresponds to an $\adom$-chain.

\begin{defn}
    The $\radom$-chain made of $\domby$-maximal elements in $\P$ is
    the \emphi{$\domby$-maxima} of $\P$, denoted by $\maximaX{\P}$,
    which is a sequence of points ordered from left to right. The
    concept $\amaximaX{\P}$ is defined in a similar fashion. See
    \figref{maxima}.
\end{defn}

\NotSoCGVer{%
\begin{figure}[h]
    \begin{tabular}{*{3}{c}}
      \includegraphics{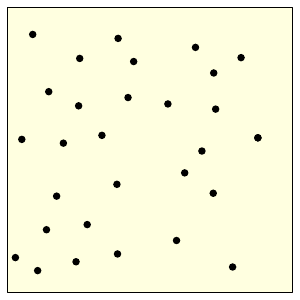}
      &
        \includegraphics[page=2]{figs/maxima}
      &
        \includegraphics[page=3]{figs/maxima}
      \\
      $\P$
      &
        $\maximaX{\P}$
      &$\amaximaX{\P}$
    \end{tabular}
    \caption{The $\domby$-maxima and $\adom$-maxima of a point set.}
    \figlab{maxima}
\end{figure}
}%
\SoCGVer{%
\begin{figure}[h]
    \begin{tabular}{*{3}{c}}
      \includegraphics[width=0.3\linewidth]{figs/maxima}
      &
        \includegraphics[page=2,width=0.3\linewidth]{figs/maxima}
      &
        \includegraphics[page=3,width=0.3\linewidth]{figs/maxima}
      \\
      $\P$
      &
        $\maximaX{\P}$
      &$\amaximaX{\P}$
    \end{tabular}
    \caption{The $\domby$-maxima and $\adom$-maxima of a point set.}
    \figlab{maxima}
\end{figure}
}%

The graph whose edge set is formed only by rectangles supported by a
pair of points where one dominates (resp. anti-dominates) the other
are denoted by $\RIGDX{\P}$ (resp. $\RIGAX{\P}$). These two graphs are
edge disjoint and $\RIGX{\P} = \RIGDX{\P} \cup \RIGAX{\P}$.

\section{Biclique cover of \TPDF{$\RIGX{\P}$}{RIG(P)}}
\seclab{biclique:cover}

\subsection{Definitions}

A \emphi{biclique} over (disjoint) sets $X$ and $Y$ is the complete
bipartite graph
\begin{math}
    X \otimes Y =%
    \bigl(X \cup Y, \Set{ x y }{x \in X, y \in Y}\bigr).
\end{math}

\begin{defn}
    For a graph $\G = (\VV, \EE)$, a sequence $\BC$ of pairs of sets
    $\{\R_1,\B_1\}, \ldots, \{\R_t, \B_t\}$ is a \emphi{biclique
       cover} of $\G$ if, for all $i < j$, we have
    \begin{compactenumi}
        \item $\R_i, \B_i \subseteq \VV$,
        \item $\R_i \cap \B_i = \emptyset$,
        \item
        $\EGX{\R_i \otimes \B_i} \cap \EGX{\R_j \otimes \B_j} =
        \emptyset$, and
        \item $\cup_k (\R_k \otimes \B_k) = \G$.
    \end{compactenumi}
    \smallskip%
    Thus, for every edge $rb \in \EGX{\G}$ there exists a unique
    index $i \in \IRX{t}$ such that $r\in \R_i, b\in \B_i$.

    The \emphi{weight} of $\BC$ is
    $\WeightX{\BC} = \sum_{i=1}^t \pth{ \cardin{\R_i } +
       \cardin{\B_i}}$.
\end{defn}
The weight of a biclique cover is the space required to store the
graph (implicitly) as a list of bicliques, where each biclique is
specified by listing its vertices.

\newcommand{\NbrX}[1]{\Gamma\pth{#1}}%

\begin{example}
    Note, that any graph $\G=(\VV,\EE)$ with $n$ vertices
    $\VV =\{ v_1, \ldots, v_n\}$, has a biclique cover with $n-1$
    cliques, with the $i$\th biclique having $\{v_i\}$ as one side,
    and all its neighbors in $V^{i+1} =\{v_{i+1},\ldots, v_n\}$ on the
    other side.  Formally,
    $\G = \cup_i \bigl( \{v_i\} \otimes (V^{i+1}\cap
    \NbrX{v_i})\bigr)$, where $\NbrX{v}$ denotes the set of neighbors
    of $v$ in $\G$.  Of course, the weight of this cover is
    (asymptotically) the number of edges in the graph, thus the
    challenge is to compute a cover of a graph with a near-linear
    number of bicliques and small total weight. A random graph where
    each edge is chosen with probability half readily shows that there
    are graphs where the weight is at least quadratic in any biclique
    cover. Namely, graphs in general do not have near-linear weight
    biclique cover.
\end{example}

\subsection{Stitching maximas quickly}
\seclab{stitching}

Consider a (balanced) binary search tree $\Tree$ that stores a
sequence $S =\pth{s_1\prec s_2 \prec \cdots \prec s_n}$ in its leaves
according to some linear order $\prec$. For a node $v \in \Tree$, let
$L(v)$ denote the sorted list of elements stored in its subtree.

\begin{observation}
    \obslab{interval}%
    For any two elements $s_i, s_j \in S$, with $i <j$, there are
    $t \leq 2\cdot\mathrm{height}(\Tree) -1$ nodes $v_1, \ldots v_t$ of
    $\Tree$, such that
    $s_i, \ldots s_j = L(v_1) \cup \cdots \cup L(v_t)$.  The sequence
    $v_1, \ldots, v_t$ is the implicit \emphi{representation} of the
    (explicit) sequence $s_i, \ldots, s_j$.  The nodes
    $v_1, \ldots, v_t$ appear on or adjacent to the two paths in
    the tree from the root to $s_i$ and $s_j$, so
    computing this representation takes $O( \heightX{\Tree} )$ time.
\end{observation}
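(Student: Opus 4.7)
The plan is to describe the canonical decomposition explicitly and then bound its size. I would first locate the lowest common ancestor $w$ of the two leaves holding $s_i$ and $s_j$ in $\Tree$ by descending from the root along the two search paths until they diverge; the ancestor $w$ is uniquely determined and is found in $O(\heightX{\Tree})$ time, and the entire range $s_i, \ldots, s_j$ lies inside $L(w)$. The degenerate case $s_i = s_j$ is handled by simply returning the single leaf, so I may assume $s_i \ne s_j$, which forces $s_i$ into the left subtree of $w$ and $s_j$ into its right subtree.

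Next I would descend from the left child of $w$ toward the leaf of $s_i$; at each internal node $u$ on this walk whose successor on the path is its \emph{left} child, I add the right child of $u$ to the output list. Every element $e$ stored in such a right subtree satisfies $s_i \prec e \preceq \max L(w.\text{left}) \preceq s_j$, so it lies in the query range. Symmetrically, while walking from the right child of $w$ toward the leaf of $s_j$, I add the left child of $u$ whenever the descent goes right. Finally I include the two leaves of $s_i$ and $s_j$ themselves. A short induction on depth then shows that the chosen canonical sets are pairwise disjoint and that their union is exactly $\{s_i, \ldots, s_j\}$; by construction each canonical node is either on a root-to-leaf path of $\Tree$ or is the sibling of such a node, which matches what the statement claims.

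For the count and the running time, each of the two descents traverses a suffix of a root-to-leaf path of $\Tree$, contributing at most one canonical node per internal visit, and together with the two leaves this yields $t \le 2\,\heightX{\Tree} - 1$ canonical nodes; locating $w$ together with performing both descents takes $O(\heightX{\Tree})$ time overall. There is no substantive obstacle here, since this is the textbook canonical-subset decomposition of a one-dimensional range query in a balanced search tree; the only real care needed is in a few boundary cases, such as when one of the descents is empty because $s_i$ is already the leftmost leaf of the left subtree of $w$, or when $w$ is the direct parent of one of the two leaves.
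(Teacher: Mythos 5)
Your proposal is the standard canonical-subset decomposition (find the LCA, walk the two search paths, collect the off-path siblings), which is exactly the argument the paper itself relies on — the observation's justification in the paper is just the one-line remark that the canonical nodes lie on or adjacent to the two root-to-leaf paths. Your write-up is correct and fills in the same construction in more detail; the only quibble is the exact constant in $t \le 2\,\heightX{\Tree}-1$, which depends on how one counts the leaves and is immaterial here.
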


\begin{lemma}
    \lemlab{chain-union}%
    Let $\rect_1$ and $\rect_2$ be two disjoint rectangles with their
    respective maximas $\L_1 =\maximaX{\rect_1 \cap \P}$ and
    $\L_2 = \maximaX{\rect_2 \cap \P}$ stored in balanced binary
    search trees $T_1$ and $T_2$ respectively (say, the points are
    stored sorted in increasing $x$-axis order).  Then, one can
    compute, in $O( \log n)$ time, $m = O( \log n)$ nodes
    $v_1 ,\ldots, v_m$, such that
    $L(v_1) \concat \cdots \concat L(v_m)$ is
    $\L = \maximaX{(\rect_1\cup\rect_2) \cap \P}$, where $\concat$
    denotes concatenation and $n = \cardin{\P}$.  Specifically, if
    $\L_1$ and $\L_2$ are represented by lists of nodes of size
    $\leq t$, then $\L$ can be computed in $O( t + \log n)$
    time\footnote{Assuming that the trees used in the representations
       have depth $O(\log n)$.}.
\end{lemma}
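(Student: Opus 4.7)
The plan is to reduce the problem to a case analysis based on how the two disjoint axis-parallel rectangles are separated. Since any two disjoint axis-parallel rectangles admit an axis-parallel separating line, there are four symmetric cases (for example, $\rect_1$ lies entirely to the left of $\rect_2$, or entirely below it, etc.), each reducing to the others by a reflection. I therefore carry out only the ``$\rect_1$ strictly to the left of $\rect_2$'' case in detail and note that the remaining three cases are analogous.

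The driving structural observation is that the points of a $\domby$-maxima, sorted by increasing $x$, have strictly decreasing $y$. Let $q^\star$ denote the leftmost point of $\L_2$; it has the largest $y$-coordinate in $\L_2$. In the ``left of'' case I observe: \textbf{(i)} no point of $\L_1$ can dominate any point of $\L_2$, because every point of $\L_2$ has strictly larger $x$; and \textbf{(ii)} a point $p \in \L_1$ is dominated by some point of $\L_2$ if and only if $y(p) < y(q^\star)$, because the $x$-inequality is automatic and it suffices to test the $y$-largest element of $\L_2$. By the monotonicity of $y$ along $\L_1$, the surviving points of $\L_1$ form a prefix $\L_1' \subseteq \L_1$ in the left-to-right order, and I conclude that $\L = \L_1' \concat \L_2$.

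The algorithm then proceeds as follows. I locate $q^\star$ by descending the leftmost spine of $T_2$ in $O(\log n)$ time, and I find the prefix $\L_1'$ by a binary search in $T_1$ for the boundary at which $y$ first drops below $y(q^\star)$, also in $O(\log n)$ time. By \obsref{interval}, the prefix $\L_1'$ decomposes into $O(\log n)$ canonical nodes of $T_1$, while $\L_2$ is represented by the single root of $T_2$, giving the claimed $O(\log n)$ output nodes overall. For the stronger ``list of $\leq t$ canonical nodes'' version, I locate $q^\star$ by descending into the leftmost canonical node of $\L_2$ in $O(\log n)$ time, and I walk the list for $\L_1$ from left to right, classifying each canonical node as fully included or fully excluded by comparing its range of $y$-values to $y(q^\star)$; there is exactly one boundary canonical node into which I descend and perform a single $O(\log n)$ binary search, yielding the stated $O(t + \log n)$ bound.

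The main obstacle is not the timing but the bookkeeping of the case analysis: horizontal separation yields ``prefix of one list concatenated with the entire other list'', while vertical separation yields ``entire list concatenated with a suffix of the other'', so the algorithm must dispatch on the correct case and emit the output in the correct left-to-right order. Once the monotonicity observation and the extremal point $q^\star$ are identified, the remaining arguments in each case are short symmetry-preserving variations of the one above.
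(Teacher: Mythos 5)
Your proof is correct and follows essentially the same approach as the paper: case analysis on the direction of the axis-parallel separation, observing that the merged maxima is one list in full concatenated with a prefix/suffix of the other (determined by a single extremal coordinate of the surviving list), and extracting that prefix/suffix as $O(\log n)$ canonical nodes via \obsref{interval}. You work out the vertically-separated case where the paper works out the horizontally-separated one, and you spell out the domination argument a bit more explicitly, but the structure and the $O(t+\log n)$ refinement are the same.
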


\begin{proof}
    The algorithm checks if $\rect_1$ and $\rect_2$ are separated by a
    horizontal line, a vertical line, or both. The various cases are
    handled in similar fashion, so we describe only the case that
    $\rect_1$ and $\rect_2$ are horizontally separated, and $\rect_1$
    is vertically above $\rect_2$, see \figref{merge:maxima}.

    \begin{figure}[ht]
        \phantom{}%
        \hfill%
        \includegraphics[page=1]{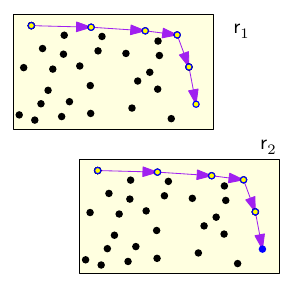} \hfill%
        \includegraphics[page=2]{figs/maxima_merge} \hfill\phantom{}%
        \caption{}
        \figlab{merge:maxima}
    \end{figure}

    Let $\L_i = \maximaX{\rect_i \cap \P}$ for $i \in \{ 1,2\}$. Let
    $x$ be the $x$-coordinate of the last point of $\L_1$, and
    observe that the merged maxima is the result of throwing away all
    the points of $\L_2$ that have smaller $x$-coordinate than $x$
    from $\L_2$ and concatenating the two sequences. Namely, the
    merged maxima is $\L_1 \cup (\L_2 \cap (x,\infty))$, where
    $\L_2 \cap (x,\infty)$ denotes the desired suffix of $\L_2$.  By
    \obsref{interval} this suffix has the desired representation as a
    sequence of $O( \log n)$ nodes. Prefixing this sequence by the
    root of the tree representing $L_1$ yields the desired result.

    The second claim follows by observing that computing a subsequence
    stored in a node $v$, between two given elements, can be done in
    $O( \log n)$ time, if implicit representation is sufficient. It is
    easy to verify that the split and concatenation operations
    required by the above, can be implemented by computing a
    subsequence, see \obsref{interval}, which might yield an implicit
    representation of size $O( t + \log n)$, and then concatenating
    this nodes list with the list of nodes representing the other
    set. In either case, this can be done in $O( t + \log n)$ time.
\end{proof}

\subsection{A helper data-structure: Stack with range queries}
\seclab{stack}

In the following, we describe a stack data-structure that stores a
sequence of elements (e.g. points) with an associated real value
(e.g. the points' $x$ coordinates).  Note that the elements must be
inserted in increasing order.  The following operations are supported:
\begin{compactenumi}
    \item $\pushOp(x)$: Push $x$ to the stack.

    \smallskip%
    \item $\popOp(k$): Pops the top $k$ elements from the stack.

    \smallskip%
    \item $\reportOp(x,y)$: Output a representation of the subsequence
    of the elements stored in the stack between $x$ and $y$.
\end{compactenumi}
\smallskip%
In addition, the data-structure can declare that a specific list/set
of elements is \emphi{canonical}. For a list of size $t$, making it
canonical costs $O(t)$ time. Importantly, the above \reportOp output
is a list of canonical sets.

\myparagraph{Implementation}
We follow the Bently and Saxe \cite{bs-dspsdt-80} technique.  The
basic building block would be a perfect binary tree. Such a tree $T$
storing $2^t$ elements has \emphi{rank} $t$ (i.e., height), where all
the values are stored in the leafs of $T$. Let $\listX{T}$ denote the
list/sequence of elements stored in $T$.

At any given point in time the stack contains a list of $O(\log n)$
such trees $T_1,...,T_m$, referred to as the current \emphi{forest},
with $\listX{T_1} \concat \ldots \concat \listX{T_m}$ being the
content of the stack (i.e., the top of the stack is the last element).
The following invariants are maintained:
\begin{compactenumI}
    \smallskip%
    \item $|T_i| \geq |T_{i+1}|$, for all $i$.

    \smallskip%
    \item There are at most two trees with the same rank in this
    list\footnote{Unlike the typical implementation, where there is at
       most one tree of each rank in this list. This ``lazyness'' is
       critical for our purposes.}.
\end{compactenumI}
\smallskip%
The specific operations are implemented as follows:
\smallskip%
\begin{compactenumI}[leftmargin=0.25cm]
    \item $\pushOp(x)$: A tree of rank zero containing this element
    and appending it to the end of the list of trees in the
    stack. Now, if the last three trees in the stack have the same
    rank, say $t$, the first two trees are merged into a new tree of
    rank $t+1$, and the new tree replaces the old two trees. This is
    repeated till no three trees in the list of the stack are
 	of the same rank (i.e., the above invariants are
    maintained). Whenever a new tree is being created it is being
    canonized (i.e. registered as a canonical set). This is
    implemented in a mundane way during merge -- the new tree is
    formed by (deep) copying its two subtrees, and by creating a new
    root node.

    \smallskip%
    \item $\popOp(k)$: The data-structure locates the tree $T_i$ in
    the forest containing the $k$\th value from the end of the
    stack. It then removes the following trees $T_j$, with $j > i$
    from the forest. Let $r = \rankX{T}$.  The data-structure breaks
    $T_i$ into $u<r$ trees $T_1',\ldots, T_u'$, such that
    $\listX{T_1'} \concat \cdots \concat \listX{T_1u'}$ are all the
    elements appearing before $x$ in $T_i$. The data-structure
    replaces $T_i$ in the forest by $T_1',\ldots, T_u'$. Note, that as
    $\rankX{T_1'} > \cdots > \rankX{T_u'}$, this preserves the
    invariants specified above. Furthermore, $T_1', \ldots, T_u'$ are
    trees that were created earlier by the data-structure, so
    $\popOp(k)$ is implemented in $O( \log n)$ time.

    \smallskip%
    \item $\reportOp(x,y)$: Scanning the forest, the algorithm
    computes the range of trees $T_\alpha, \ldots, T_\beta$ that their
    lists contain the desired elements. Using binary search on
    $T_\alpha$ and $T_\beta$, the algorithm computes $O( \log n)$
    nodes (an internal node of tree can be treated as its own tree)
    that lists the elements that are in the desired range. Let
    $\Forest_\alpha$ and $\Forest_\beta$ be these forests computed for
    $T_\alpha$ and $T_\beta$ respectively. Clearly, the desired output
    is
    $\Forest_\alpha \concat T_{\alpha+1} \concat \cdots \concat
    T_{\beta-1} \concat \Forest_\beta$, a list of
    $O( \log n)$ trees, and it can be computed in $O( \log n)$ time.

\end{compactenumI}

\subsubsection{Performance}
\begin{lemma}
    \lemlab{stack}%
    The above data structure performs a sequence of $n$ \pushOp/\popOp
    operations in $O(n\log n)$ time, and this also bounds the total
    size of the canonical sets.  The total number of canonical sets is
    $O( n )$.  A reporting query takes $O(\log n)$ time, and returns
    $O( \log n)$ canonical sets per query.
\end{lemma}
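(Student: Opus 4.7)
My plan is to prove all four claims via a Bentley--Saxe style amortized analysis tailored to the ``lazy'' invariant that every rank hosts at most two trees in the forest.

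First I would check that \popOp preserves the invariants: the split pieces $T_1',\ldots,T_u'$ have strictly decreasing ranks all less than $r=\mathrm{rank}(T_i)$, while the surviving prefix $T_1,\ldots,T_{i-1}$ has rank $\geq r$ by the non-increasing-rank ordering, so no rank ever accumulates a third tree. Since ranks range over $\{0,\ldots,\lfloor\log_2 n\rfloor\}$, the forest contains at most $2\log_2 n+O(1)=O(\log n)$ trees at any time. This immediately yields the $O(\log n)$ pop time (locate $T_i$ by scanning the forest, restructure using the pre-existing subtrees $T_1',\ldots,T_u'$) and the $O(\log n)$ query time returning $O(\log n)$ canonical sets: a binary search in each boundary tree $T_\alpha,T_\beta$ extracts $O(\log n)$ internal-subtree canonical sets, all of which were already registered at the time their enclosing tree was first built by a merge.

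For the amortized cost I would use the potential $\Phi$ equal to the number of trees currently in the forest. A push inserts a rank-$0$ tree and then performs $m_p$ cascade merges, each replacing two trees by one, so $\Delta\Phi=1-m_p$ and the amortized push cost is $m_p+(1-m_p)=1$; a pop has actual cost $O(\log n)$ and $\Delta\Phi\leq u-1=O(\log n)$, giving amortized pop cost $O(\log n)$. Summing over the sequence, the total wall-clock time is $O(n\log n)$. Each cascade merge at rank $t$ deep-copies two rank-$t$ trees into a fresh rank-$(t+1)$ canonical set at cost $\Theta(2^{t+1})$, which equals the size of the new canonical set; therefore the total canonization time coincides with the total size of the canonical sets produced.

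For the size and count bounds, the plan is to show the per-rank estimate $N_t=O(n/2^t)$, where $N_t$ is the number of rank-$t$ canonical sets ever created. Given this, the geometric series yields $\sum_t N_t=O(n)$ and $\sum_t N_t\cdot 2^t=O(n\log n)$, which are exactly the advertised bounds on the number and on the total size of the canonical sets. The main obstacle is that a pop can revive an internal rank-$t$ subtree of the split tree $T_i$ as an independent forest tree, allowing it to re-enter a future merge that produces a brand-new rank-$(t+1)$ canonical set; this defeats any naive ``two rank-$t$ trees per rank-$(t+1)$ merge'' double counting. My plan is to charge each revival against the merge that originally built $T_i$: a single merge at rank $r$ produces exactly one descendant subtree of each rank $r-1,r-2,\ldots,0$, so the total ``revival budget'' across the history is itself $O(\text{total merges})$, and feeding this bound back into the potential argument closes the loop and yields $N_t=O(n/2^t)$.
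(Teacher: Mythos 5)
The routine parts of your plan are fine: the invariant preservation under \popOp, the $O(\log n)$ bound on the forest size, the $O(\log n)$ query time with $O(\log n)$ canonical sets returned, and the reduction of the total running time to the total size of the canonical sets all check out and match the paper. The gap is in the one claim that actually carries the lemma, namely $N_t = O(n/2^t)$. Your charging scheme rests on the assertion that a single merge at rank $r$ spawns at most one revived descendant subtree of each rank $r-1,\ldots,0$ over the whole history. That is false: when a rank-$r$ tree is shattered into pieces of distinct ranks, those pieces are themselves forest trees that can later be shattered again, and a re-shattering of the rank-$(r-1)$ piece can produce a second rank-$(r-2)$ piece (covering a different block of leaves than the rank-$(r-2)$ piece from the first shattering). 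So a single original merge can be responsible for many revivals of the same rank. Moreover, even if you repair this to a correct bound of the form $R_t \le \sum_{s>t} N_s + (\text{popped elements})/2^t$, the resulting recursion $2N_{t+1} \le N_t + R_t$ does not close to $N_t = O(n/2^t)$: the revival term references the counts at \emph{higher} ranks, which are exactly the quantities you have not yet bounded, so the induction is circular and the best you extract is $N_t = O(n)$ per rank, giving a total canonical-set size of $\Theta(n^2)$ in the worst case of the estimate.

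The paper closes this with a different, direct argument that you should adopt: immediately after a rank-$k$ tree $T$ is created by a cascade of merges, the forest contains \emph{exactly one} tree of each rank $0,\ldots,k-1$. To create the next rank-$k$ tree one must again assemble three rank-$(k-1)$ trees, i.e.\ $3\cdot 2^{k-1}$ elements sitting in trees of rank $\le k-1$. If the two rank-$k$ trees share elements, then $T$ must first be shattered by a pop, and the surviving pieces have pairwise distinct ranks $<k$, hence total size at most $2^k-1$; consequently at least $3\cdot 2^{k-1}-(2^k-1)=2^{k-1}+1$ \emph{fresh pushes} must occur before the next rank-$k$ creation (and at least $2^{k-1}$ in the disjoint case). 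Counting pushes between consecutive rank-$k$ creations gives $N_k = O(n/2^k)$ with no recursion on revivals at all, and the stated bounds $\sum_k N_k = O(n)$ and $\sum_k N_k 2^k = O(n\log n)$ follow. The key observation you are missing is precisely the ``exactly one tree of each rank $u<k$ right after a rank-$k$ merge'' snapshot, which is what lets one bound the residue of a shattered tree by $2^k-1$ and avoid tracking revivals rank by rank.
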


\begin{proof}
    In order to bound the total size of the trees created by $n$
    operations, consider the set of trees of rank $k$ created during
    the lifetime of the data structure. Let $o_1,...,o_n$ be the
    sequence of operations performed.  The operation $o_t$ is the
    \emph{step} at time $t$. A tree $T$ of rank $k$ was created at
    time $t$ if $o_t$ is the push operation that resulted in the
    creation of $T$ by merging two trees of rank $k-1$.

    We claim that the time difference between the creation of two
    trees $T,T'$, both of rank $k$, at times, say, $t$ and $t'$
    respectively where $t < t'$. is $\geq 2^{k-1}-1$. At time $t-1$,
    i.e. just before $o_t$ was executed, the current forest had two
    trees of rank $k-1$, after the push operation $o_t$, a cascade of
    merges of trees resulted in the creation of $T$. Note that a merge
    of two trees of rank $u$ is triggered if the forest has three
    trees of rank $u$. After the merge, there is exactly one tree in
    the forest of rank $u$. Namely, immediately after $o_t$ was
    executed, the forest has exactly one tree of rank $u$ for
    $u =0,\ldots, k-1$.

    If the sets of values stored in $T$ and $T'$ are disjoint, it is
    clear that at least $2^k/2$ more insertions are required in order
    for another tree of size $2^k$ to be created even if no pop
    operations were executed between times $t$ and $t'$. If the sets
    of values stored in $T$ and $T'$ are not disjoint, then consider
    the first time $t''$, $t < t'' < t'$, in which a value stored in
    $T$ is popped alongside all elements inserted after it.  This
    ``shatters'' the tree $T$ in the forest, replacing it by a
    sequence of subtrees. Importantly, all these subtrees have each
    unique rank, and there are no other trees in the forest with this
    rank (at this time). Specifically, after $o_{t''}$ the forest has
    at most one tree of rank $k-1$, and the same holds for smaller
    ranks. In particular, the total size of the elements stored in
    these ``tail'' trees is at most $\gamma = 2^{k}-1$. Thus, at least
    $3\cdot 2^{k-1}-\gamma = 2^{k-1}-1$ more elements have to be
    pushed into the stack before the forest would contain three blocks
    of rank $k-1$ and trigger the merge creating $T$, as claimed.

    Thus, the number of trees of rank $k$ created during the course
    of $n$ operations, is $O(n/2^k)$, and the overall size of all the
    canonical sets is at most
    $\sum_{k=1}^{\log n} 2^k O( n/2^k) = O(n\log n)$.
\end{proof}

\begin{remark}[Persistence]
    A minor technicality is that we need to perform the
    $\reportOp(x,y)$ query on the stack at the point in time just
    before a value larger than $y$ was inserted. To this end, after
    each operation the data-structure creates a copy of the forest --
    as the forest is simply a list of $O( \log n)$ trees, this can be
    done in $O( \log n)$ time. Now, given a query $\reportOp(x,y)$,
    the algorithm first does a binary search to find the copy of the
    forest at the ``right'' time, and then performs the query on this
    version. A more efficient persistence scheme might be possible
    here, but it is  irrelevant for our purposes.
\end{remark}

\subsection{Extracting the maxima}

\begin{lemma}
    \lemlab{smaller:bi:c:cover}%
    Let $\P$ be a set of $n$ points in the plane in general position.
    The graph $\RIGX{\P}$ has a biclique cover with $O( n\log n )$
    bicliques, and total weight $O(n\log^2 n)$. The biclique cover can
    be computed in $O(n\log^2n)$ time.
\end{lemma}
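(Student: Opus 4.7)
The plan is to build a balanced $y$-ordered BST on $\P$, attach to each node a persistent instance of the stack DS of Lemma~\lemref{stack}, and stitch the resulting partial $\domby$-maxima using Lemma~\lemref{chain-union}. By symmetry only $\RIGDX{\P}$ need be treated ($\RIGAX{\P}$ is obtained by $x$-reflection), and it suffices to identify, for each $\p \in \P$, the set
\[
    \maximaX{\P \cap (-\infty,x(\p)) \times (-\infty,y(\p))},
\]
its lower-left $\domby$-neighbors.

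First build a balanced BST $\Tree_y$ on $\P$ ordered by $y$; for every node $v \in \Tree_y$ with subtree set $C_v$, construct a persistent stack DS by feeding $C_v$ in increasing $x$-order, popping every top entry whose $y$-coordinate is less than $y(q)$ before pushing $q$ with associated value $x(q)$. The stack then always equals $\maximaX{C_v \cap (-\infty,x) \times \Re}$ for the appropriate $x$, and by persistence any earlier state can be retrieved in $O(\log n)$ time.

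To process $\p$, locate the $O(\log n)$ canonical nodes $v_1,\ldots,v_k$ of $\Tree_y$ whose $C_{v_i}$ partition $\Set{q \in \P}{y(q) < y(\p)}$; at each $v_i$ issue $\reportOp(-\infty,\infty)$ at the snapshot immediately before the first push of $x$-value $\geq x(\p)$, obtaining $O(\log n)$ canonical sets whose union is $\maximaX{C_{v_i} \cap (-\infty,x(\p)) \times \Re}$. Iteratively stitching the $k$ partial maxima via Lemma~\lemref{chain-union} yields a representation of the true $\domby$-maxima of the full lower-left box, i.e., of $\p$'s lower-left $\RIGDX{\P}$-neighbors.

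For the cover, we emit one biclique $S_C \otimes C$ per canonical set $C$ (a stack-DS node, or one of its $O(\log n)$ BST sub-nodes produced during stitching), where $S_C$ is the set of $\p$'s whose representation contains $C$. Counting yields $\sum_C |S_C| = \sum_\p O(\log^2 n) = O(n\log^2 n)$ and $\sum_C |C| \leq \sum_v O(|C_v|\log|C_v|) = O(n\log^2 n)$ by Lemma~\lemref{stack}, while the biclique count is at most the total number of canonical sets, $\sum_v O(|C_v|) = O(n\log n)$. Construction time is dominated by $\sum_v O(|C_v|\log|C_v|)$ for the stack DSs and $n \cdot O(\log^2 n)$ for the queries, totaling $O(n \log^2 n)$.

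The main obstacle is arguing that every canonical set used on the right of a biclique consists only of genuine $\RIGDX{\P}$-neighbors of each $\p$ on the left: a point of the local $\domby$-maxima of some $C_{v_i}$ can be blocked by a point in a different $C_{v_j}$, so the raw union across the $v_i$'s would introduce phantom edges. Lemma~\lemref{chain-union} is designed precisely to discard such blocked points by restricting each participating maxima to a suffix/prefix represented as $O(\log n)$ BST sub-nodes, and the correctness of the biclique cover rests on verifying that these surviving sub-nodes are exactly the canonical sets needed to realize $\RIGDX{\P}$ as an edge-disjoint union.
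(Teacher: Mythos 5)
Your proposal is correct and follows essentially the same construction as the paper's proof: a balanced $y$-ordered tree whose nodes carry persistent instances of the stack data-structure of \lemref{stack}, queried at the snapshot just before the first element with $x$-coordinate exceeding $x(\p)$, with the $O(\log n)$ partial maximas stitched via \lemref{chain-union} and one biclique emitted per canonical set, yielding the same $O(n\log n)$ / $O(n\log^2 n)$ counting. The only (immaterial) difference is that you report each local maxima in full and then trim it during stitching, whereas the paper performs the stitching directly by issuing $\reportOp(r_i, x(\p))$ to extract only the surviving suffix of each local maxima; the paper mentions your variant as the naive alternative, and both give the same asymptotic bounds.
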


\begin{proof}
    We describe the construction for $\RIGDX{\P}$. A similar
    construction applies to $\RIGAX{\P}$, and together they form the
    desired cover. Let $\Y = \Y(\P)$ be the set of $y$-coordinates of
    the points of $\P$. Let $\Tree_\Y$ be a balanced binary search
    tree on the points of $\Y$. A \emphi{$y$-set} is the set of
    (original) points of $\P$ stored in a subtree of $\Tree_\Y$.
    Given an unbounded downward ray $\rho$ on the $y$-axis, the set
    $\rho \cap \Y$ can be represented as the disjoint union of
    $\ceil{ \log_2 n}$ $y$-sets.

    Consider such a $y$-set, and its corresponding subset
    $Z \subseteq \P$. Assume the points of $Z$ are (pre)sorted by the
    $x$-axis, and $Z = \{ \p_1, \ldots, \p_u\}$. The algorithm inserts
    the points of $Z$ in this order into the stack data structure
    described in \lemref{stack} while maintaining
    $\maximaX{\p_1,\ldots, \p_i}$. Namely, before $\p_i$ is
    pushed, the algorithm pops from the top of the stack all the
    points that $\p_i$ dominates. The algorithm performs this
    construction on all the $y$-sets of $\Tree_\Y$. The total size of
    the $y$-sets is $O( n \log n)$. Presorting the points of $\P$ by
    the $x$ and $y$-axis, and computing these data-structures in a
    top-down recursive fashion on $\Tree_\Y$ can be done in
    $O(n \log^2 n)$ time overall. The total number of the canonical
    sets created in all the stacks is $O( n \log n)$.

    Now, for each point $\p \in \P$ the algorithm computes the
    maxima of the points it dominates. Indeed, for a point $\p$, the
    algorithm computes the $O(\log n)$ disjoint $y$-sets whose
    union is all the points of $\P$ with $y$ coordinate smaller than
    $\p$. For each such $y$-set the algorithm precomputed the
    data-structure of \lemref{stack}, and it can extract the maxima in
    the $y$-set up to the point $\p$.  Using the stitching algorithm
    of \lemref{chain-union} one can find the maxima of the points that
    $\p$ dominates. This maxima would be represented by the disjoint
    union of $O( \log^2 n)$ canonical sets.

    In detail, given a point $\p = (x,y) \in \P$ we compute a disjoint
    union of $m=O( \log n)$ $y$-sets $Y_1, \ldots, Y_m$ such that
    $\cup_i Y_i$ is the set of all points with $y$-coordinate smaller
    than $\p$. Assume that all the points of $Y_i$ have bigger $y$
    coordinates than the points of $Y_{i+1}$, for all $i$.  Let
    $\Quad = (-\infty, x) \times (-\infty, y)$. Clearly, all the
    Delaunay rectangles having $\p$ on the right-top corner have a
    point of $\maximaX{\Quad \cap \P}$ on the other corner, and also
    $\Quad \cap \P = \cup_i (\Quad \cap Y_i)$. Let
    $\listSX{1}' = \maximaX{ \Quad \cap Y_1}$ -- this can be computed
    in $O( \log n)$ time using the stack data-structure computed for
    $Y_1$, and let $r_1$ be the $x$-coordinate of the last point of
    $\listSX{1}$.  Naively, one can repeat this process for all $i$,
    computing $\listSX{i}'= \maximaX{ \Quad \cap Y_i}$. The desired
    maxima is the ``stitched'' maxima of these maximas. It is more
    efficient to do this stitching process directly. Indeed, let
    $\listSX{i} = \maximaX{\cup_{k=1}^i \listSX{k}'}$, with
    $\listSX{1} = \listSX{1}'$. Let $r_i$ be the $x$-coordinate of the
    rightmost point in $\listSX{i}$. The algorithm now adds the next
    portion of the maxima that uses points of $Y_{i+1}$. To this end,
    using persistence, the data-structure recovers the last time, say
    $t_{i+1}$, when a point of $Y_{i+1} \cap \Quad$ was inserted into
    its stack. Let $S_{i+1}$ denote this snapshot of this stack at
    time $t_{i+1}$. The algorithm then extracts the maxima of
    $Y_{i+1}$ appearing in $\listSX{i+1}$ by performing the query
    $\reportOp(r_i, x)$. The algorithm repeats this process for the
    next value of $i$ till the whole maxima is extracted. This takes
    $O( \log^2 n)$ time and results in a representation of
    $\maximaX{\Quad \cap \P}$ using $O( \log^2 n )$ canonical sets.
    We register $\p$ with all of these canonical sets.

    At the end of this process, every canonical set $X$ has a set of
    points $Y$ registered with it. The algorithm outputs $X \otimes Y$
    as one of the computed bicliques. The bound on the size of the
    bicliques is immediate -- the total size of the canonical sets is
    $O(n \log^2n)$, and each point of $\P$ is registered with
    $O( \log^2 n)$ canonical sets. As for their total number,
    \lemref{stack} implies that the number of canonical sets every
    $y$-set induces is proportional to its cardinality. Thus, the
    overall number of canonical sets is $O(n \log n)$. Every canonical
    set $X$ has a set of $Y$ of (query) points registered with it, and
    give rise to the biclique $X \otimes Y$. Thus, the total number of
    bicliques is $O(n \log n)$.
\end{proof}

Being a bit more careful, one can reduce the number of bicliques to
linear while keeping the total weight (asymptotically) the same.
\begin{theorem}
    \thmlab{smaller-biclique:2}%
    Let $\P$ be a set of $n$ points in the plane in general position.
    The graph $\RIGX{\P}$ has a biclique cover with $O( n )$
    bicliques, and total weight $O(n\log^2 n)$. The biclique cover can
    be computed in $O(n\log^2n)$ time.
\end{theorem}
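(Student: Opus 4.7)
The approach is to refine the construction of \lemref{smaller:bi:c:cover} so that only ``heavy'' canonical sets of size at least $\log n$ are ever registered, while the small ``boundary'' pieces of each query's maxima representation get absorbed into a single per-vertex star biclique. Since each query point contributes at most one such star, the star portion of the cover has size $n$, and the remaining task is to show that the heavy portion contributes only $O(n)$ bicliques while keeping the total weight at $O(n\log^2 n)$.

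Concretely, I would modify the stack data-structure of \secref{stack} so that only trees of rank at least $\tau = \log\log n$ (equivalently, size at least $\log n$) are canonized; smaller trees are still built and manipulated internally, so that the invariants and \pushOp/\popOp/\reportOp operations of \lemref{stack} continue to function. The counting argument in the proof of \lemref{stack} then gives $\sum_{k \geq \tau} O(|Y|/2^k) = O(|Y|/\log n)$ heavy canonical sets per stack on $|Y|$ operations; summing over all $y$-sets of $\Tree_\Y$ (whose total size is $O(n\log n)$) bounds the number of heavy canonical sets globally by $O(n)$. These give rise to at most $O(n)$ heavy bicliques in the style of \lemref{smaller:bi:c:cover}, and the same per-query analysis shows their total weight is $O(n\log^2 n)$. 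Each stitched representation from \lemref{chain-union} still produces $O(\log^2 n)$ canonical nodes per query point $\p$, but only the heavy ones register $\p$; each ``light'' node (rank below $\tau$) is expanded into its at most $2^\tau-1<\log n$ explicit points, and all such points for a given $\p$ are collected into the single star $\{\p\}\otimes L_\p$. Altogether the cover has $O(n) + n = O(n)$ bicliques.

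The main obstacle is bounding the total weight of the stars by $O(n\log^2 n)$. A single \reportOp inside a $y$-set sees light subtrees only at ranks $0, \ldots, \tau-1$ along its two boundary search paths, so there are at most $O(\tau)$ light subtrees of total size $O(\tau\cdot 2^\tau)$ per report; aggregating naively over the $O(\log n)$ $y$-sets per query and over $n$ queries yields a star weight of $O(n\log^2 n\cdot \log\log n)$, which is too large by a $\log\log n$ factor. Removing this factor requires exploiting the fact that successive stitches in \lemref{chain-union} reuse boundary paths, so the extra light contributions from consecutive $y$-sets overlap and can be charged only once, together with the persistence scheme of \secref{stack} that isolates the relevant ``live'' portion of each boundary path. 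Once this refined amortization is done, the heavy canonical sets and the $n$ per-query stars form the desired biclique cover with $O(n)$ bicliques and total weight $O(n\log^2 n)$, and the overall running time remains $O(n\log^2 n)$, inherited from \lemref{smaller:bi:c:cover}.
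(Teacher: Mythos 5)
Your approach is essentially the paper's: threshold the canonical sets at size $\Theta(\log n)$ so that only $O(n)$ ``heavy'' bicliques arise, and absorb the small leftover pieces of each query's maxima representation into a single star $\{\p\}\otimes L_\p$ per query point. The paper implements the thresholding slightly differently (a FIFO buffer of $3\lceil\log n\rceil$ singletons that is batched into canonical sets of size $\lceil\log n\rceil$, plus a coarsening of the $y$-tree into strips of size $\Theta(\log n)$ whose topmost partial strip is scanned directly -- the analogue of your stars), but the counting is the same: $y$-sets of total size $O(n\log n)$, each of size $t$ contributing $O(t/\log n)$ heavy canonical sets, for $O(n)$ bicliques overall.

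The one problem is your ``main obstacle,'' which is a phantom. Along a single boundary search path of a \reportOp, the light subtrees picked up have \emph{distinct} ranks $0,1,\ldots,\tau-1$ (a suffix or prefix query in a perfect binary tree takes at most one full subtree per level), so their total size is the geometric sum $\sum_{k<\tau}2^k = 2^\tau-1 = O(\log n)$, not $O(\tau\cdot 2^\tau)$; the same holds for the at most two uncanonized whole trees per rank sitting between $T_\alpha$ and $T_\beta$ in the forest. Hence each report contributes $O(\log n)$ explicit light points, each query point's star has weight $O(\log^2 n)$ over its $O(\log n)$ $y$-sets, and the total star weight is $O(n\log^2 n)$ with no $\log\log n$ loss. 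The ``refined amortization'' via reused boundary paths that you invoke is therefore unnecessary -- which is fortunate, since as stated it is not proved and is the only unsupported step in your argument. With the sum computed correctly, your write-up is complete and, if anything, slightly cleaner than the paper's in that it avoids the strip coarsening.
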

\begin{proof}
    The idea is two folds.  The basic idea is that bicliques involving
    a singleton on one side are fine -- there are going to be only $n$
    such bicliques globally.  Similarly, one can argue that the number
    of large bicliques involving a large canonical set (roughly of
    size $\Omega(\log^2n)$) on one side is globally bounded by $O(n)$.

    The problem is thus with middling size canonical sets of
    size roughly $O( \log n)$. We first describe how to reduce the number of
    canonical sets used by the stack data-structure. The idea is to
    add a buffer of size (say) $\tau = 3 \ceil{\log n}$ of
    singletons that will be stored as they are. Whenever the buffer
    is filled the algorithm extracts the last $\ceil{\log n}$
    elements, and create a canonical set for these elements, deleting
    these sets from the buffer (which is a FIFO queue). It is easy to
    verify that the previous stack construction can be easily modified
    to work with this buffer thus creating canonical sets only of size
    $\geq \tau$. The number of non-singleton canonical sets
    created by the stack is therefore
    \begin{equation*}
        \sum_{k= \floor{\log \tau}} n/2^k = O(n /\tau) = O(n /
        \log n).
    \end{equation*}

    We apply the same idea to the $y$-strips. We only build the stack
    data-structure for strips that contain more than $\tau$
    points. Formally, we partition the point sets into strips of size
    $\tau$ in the $y$-order, and build balanced binary tree on these
    strips. Now, a query would involve $O( \log n)$ $y$-sets (each of
    size at least $\tau$) where potentially the smallest strip might
    contain $\tau$ points, and the query quadrant intersects it only
    partially. The query in this top strip is answered directly by
    scanning (i.e., we compute the maxima of the points in this strip
    inside the query quadrant and register the query point with each
    point on the maxima), while for any other of the $y$-strips we use
    the stitching algorithm described in
    \lemref{smaller:bi:c:cover}. Clearly, as before, there are
    $O( \log n )$ $y$-strips involved with the query and each one
    returns $O( \log n)$ canonical sets, thus reproducing the old
    bound on the total weight of the bicliques.

    As for the overall number of bicliques, observe that every
    biclique rises out of a canonical set in one of the (modified)
    stack data-structures. If a $y$-set has $t$ points then the total
    number of canonical sets it defines is $O( t/ \log n)$ (ignoring
    singletons, since these ones are already counted directly). Thus, since
    the total number of $y$-strips of size $2^i \tau$ is
    $n/(2^i \tau)$ and each one of them contributes
    $O(2^i \tau/ \log n) = O(2^i)$ (non-singleton) canonical sets, we
    conclude that the total number of (non-singleton) canonical sets
    over all $y$ strips is
     \begin{equation*}
         O \Bigl( \sum\nolimits_{i=0}^{O( \log n)} \frac{n}{2^i \tau} \cdot  2^i
         \Bigr)
         = O(n).
         \SoCGVer{\qedhere}
     \end{equation*}
\end{proof}

\begin{observation}
    \obslab{rects:implicit}%
    Let $\P$ be a set of points in the plane, and let
    $\Rects = \RectsInfX{\P}$, and let $\BC$ be the biclique cover
    computed by \thmref{smaller-biclique:2}. Each biclique corresponds to
    two disjoint point sets $\P_i, \PSA_i \subseteq \P$, such that
    $\Rects = \cup_{i} \rectY{\P_i}{\PSA_i}$, where
    $\rectY{\P_i}{\PSA_i} = \cup_{\p \in \P_i, \q \in \PSA_i} \{
    \rectY{\p}{\q} \}$. Furthermore $\P_i$ and $\PSA_i$ are separable
    by both a horizontal and vertical lines. Finally, we can assume
    that each of the sets $\P_i, \PSA_i$ are available as a sequence sorted by
    both $x$-axis and $y$-axis order. This provides us with an
    implicit efficient representation of $\Rects$ that can be computed
    in $O( n \log^2n )$ time, made out of $O(n)$ pairs, and of total
    weight $O(n \log^2 n)$.
\end{observation}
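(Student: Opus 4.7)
The plan is to take the bicliques $\{X_i\otimes Y_i\}$ produced by \thmref{smaller-biclique:2} directly as the pairs $\{(\P_i,\PSA_i)\}$, thereby inheriting the linear count, the $O(n\log^2 n)$ total weight, and the $O(n\log^2 n)$ construction time for free. The identity $\Rects = \cup_i \rectY{\P_i}{\PSA_i}$ then follows immediately: every $\rectY{\p}{\q}\in\Rects$ is the empty rectangle of a unique edge $\p\q$ of $\RIGX{\P}$, and that edge appears in exactly one biclique of the cover.

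The substantive step is the separability claim, which I would argue for $\RIGDX{\P}$; the $\RIGAX{\P}$ case is symmetric. A non-singleton canonical set $\P_i$ is born inside the stack data-structure of a single $y$-strip of $\Tree_\Y$, so its points occupy some $y$-interval $[y_1, y_2]$; a query point $\p$ registers with $\P_i$ only when that strip lies entirely below $\p$, forcing $y(\p) > y_2$ for every $\p\in\PSA_i$, which gives the horizontal separator. For the vertical separator I would use that $\P_i$ is the leaf-set of a subtree of one of the perfect binary trees maintained by the stack, whose leaves are kept in $x$-order; so the $x$-coordinates of $\P_i$ lie in some interval $[x_1,x_2]$, and any $\reportOp(r,x(\p))$ that returns $\P_i$ must satisfy $[x_1,x_2]\subseteq(r,x(\p))$, whence $x(\p)>x_2$ for every $\p\in\PSA_i$. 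Singleton bicliques, coming from the size-$\tau$ buffer or from the direct scan of the topmost partial $y$-strip, are separable trivially.

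Finally, the $x$- and $y$-sorted access is satisfied by bookkeeping. Each $\P_i$ is already available in $x$-sorted order as the ordered leaves of its defining subtree; attaching a parallel $y$-sorted linked list to every subtree and splicing these lists during each merge adds $O(1)$ amortized work per merge, preserving the $O(n\log^2 n)$ budget. The query sets $\PSA_i$ are assembled by two sweeps over $\P$, one in the global $x$-order and one in the global $y$-order, appending each point to the $\PSA_i$'s with which the construction registered it; the total work is $O(n\log^2 n)$ since every point belongs to $O(\log^2 n)$ bicliques. The main obstacle I anticipate is ensuring that maintaining the auxiliary $y$-ordering through the cascading merges inside the stack does not break the amortization of \lemref{stack}, but since each canonical set is formed by exactly one merge, a single splice of two already $y$-sorted lists per canonical set suffices and fits within the existing bounds.
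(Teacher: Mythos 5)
Your proposal is correct and matches the intended justification: the paper states this as an observation to be read off directly from the construction of \thmref{smaller-biclique:2}, and your arguments for the cover identity, the horizontal separation (each $y$-set lies entirely below every query point registered with it) and the vertical separation (each reported canonical set lies entirely inside the query's $x$-range) are exactly the right ones. The only simplification you miss is that both $\P_i$ and $\PSA_i$ are necessarily $\adom$-antichains (a dominated pair on either side would violate emptiness of some rectangle), so the $y$-sorted order is just the reverse of the $x$-sorted order and the auxiliary spliced linked lists are unnecessary -- though your bookkeeping also works since the splice degenerates to a concatenation for this same reason.
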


\begin{corollary}\RefProofInAppendix{k:depth}%
	\corlab{k:depth}%
	The graph $\RIGKY{k}{\P}$ has a biclique cover of size $O(k
        n)$. Furthermore, the total weight of the cover is
        $O(k n\log^2 n)$, and it can be computed in $O(k n\log^2n)$
        time.
\end{corollary}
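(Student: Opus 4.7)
The plan is to generalize the construction of \thmref{smaller-biclique:2} to the $\leq k$-level of $\RIGX{\P}$, by maintaining $k+1$ layered staircases in the stack data-structure of \secref{stack} instead of the single maxima chain. The key structural fact that makes this tractable is that the points at depth exactly $j$ in any set $S$ (under dominance) form a monotone $x$-$y$ staircase: if two such points $\p_a, \p_b$ had $x(\p_a) < x(\p_b)$ and $y(\p_a) < y(\p_b)$, then $\p_b$ would dominate $\p_a$ and every dominator of $\p_b$ would also dominate $\p_a$, forcing the depth of $\p_a$ to strictly exceed $j$. Hence the depth-$\leq k$ subset of $S$ decomposes into at most $k+1$ disjoint staircases, generalizing the single staircase (the maxima) used for the $k=0$ case.

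Following the construction of \thmref{smaller-biclique:2}, I would process each $y$-set in $x$-order using $k+1$ parallel copies of the stack data-structure of \secref{stack}, one per depth layer. When a new point $\p$ is pushed, the suffix of the depth-$0$ stack consisting of points dominated by $\p$ (which, by the staircase property, really is a suffix in $x$-order) is promoted into the depth-$1$ stack; this cascades a suffix of the depth-$1$ stack into the depth-$2$ stack, and so on, with points promoted out of layer $k$ being discarded. Each promotion is a bulk transfer of an $O(\log n)$-size canonical-set representation between adjacent layer-stacks, and since each point participates in at most $k$ promotions before being expelled from the $\leq k$-level, the amortized work per $y$-set of size $n_y$ is $O(k n_y \log n_y)$.

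A query from a point $\p$ would invoke \reportOp on each of the $k+1$ layer-stacks in each of the $O(\log n)$ relevant $y$-sets, and stitch the per-layer results across $y$-sets using $k+1$ applications of \lemref{chain-union}, producing $O(k \log^2 n)$ canonical sets per query. Summed over all $\p \in \P$, the resulting biclique cover has total weight $O(k n \log^2 n)$; the singleton-merging and $y$-set size-thresholding optimizations from the proof of \thmref{smaller-biclique:2} then reduce the number of bicliques from $O(kn \log n)$ to $O(kn)$ without affecting the weight. The symmetric construction for the anti-dominance part contributes an equivalent amount and, together, they yield the full biclique cover of $\RIGKY{k}{\P}$.

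The main obstacle will be implementing the cascading bulk transfers within the persistent Bently-Saxe framework of \lemref{stack}: each promotion must be reflected across all $k+1$ layer-stacks while preserving the historical snapshots used by \reportOp, which requires careful bookkeeping of canonical-set pointers and forest copies. Since the promoted suffix is already representable as $O(\log n)$ canonical sets, however, I expect the amortized analysis of \lemref{stack} to extend with only a factor-$k$ overhead, giving the claimed $O(k n \log^2 n)$ construction time and a biclique cover of size $O(kn)$ with total weight $O(kn \log^2 n)$.
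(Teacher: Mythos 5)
Your proposal matches the paper's proof: it maintains $k$ parallel copies of the stack data-structure of \secref{stack}, one per level of the layered maxima, and cascades the points dominated by each newly inserted point down to the next level, using exactly the observation that the demoted points lie to the right of everything remaining at the next level, so the stack's insertion-order invariant is preserved. The only difference is that the paper performs each promotion as individual \pushOp operations rather than a bulk canonical-set transfer --- which sidesteps the persistence bookkeeping you flag as the main obstacle, since each point is inserted and deleted exactly once per level and the $O(kn\log^2 n)$ bound follows by the same amortization.
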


\begin{proof:e}{{\Xcorref{k:depth}}}{k:depth}
    By creating $k$ stack data structures for every canonical slab we
    can easily maintain the first $k$ levels of, say $\domby$-maxima,
    each contained in a separate stack. An introduction of a new point
    $\p$ becomes a $k$-step process starting with the insertion of the
    point to the upper-most stack maintaining the maxima chain, but
    now instead or simply deleting the chain prefix that is now in the
    second level of maxima, that sequence of points is now inserted
    simultaneously to the next stack. This is possible since $\p$,
    being inserted as the rightmost point seen so far, dominates a
    continuous prefix of every level of the $k$ $\domby$-maximas, and
    thus we can, for every level, report and remove the points
    dominated by $\p$ using the operations as they are described in
    \secref{stack}, and insert the points from the previous level by
    using an individual insert operation for every point. Notice that
    every point $\q_{i+1}$ that remains in the $(i+1)$\th level after
    the deletion step is to the left of any point $q_i$ inserted
    (i.e. deleted from the $i$\th level) since otherwise we get that
    $q_i\domby \q_{i+1}$ which is a contradiction. Since every point
    is inserted and deleted exactly once at any level the size and
    time complexity are $O(k n\log^2 n)$ as stated.
\end{proof:e}

\subsection{A lower bound}
\seclab{lower:bound}

\begin{theorem}{\cite{bs-ss-07}}
    \thmlab{bi:c:k:n}%
    The weight of any biclique cover of $K_n$ is
    $\Omega(n \log n)$\NotSoCGVer{%
       \footnote{Proof sketch: Delete the vertices on one side of each
          biclique, randomly choosing the side. This leaves at most
          one vertex that is not deleted. On the other hand, if a
          vertex participates in $k$ bicliques, its probability of
          survival is $1/2^k$. If follows that the average vertex
          participates in $\Omega(\log n)$ bicliques.}}, where $K_n$
    denotes the complete graph over $n$ vertics.
\end{theorem}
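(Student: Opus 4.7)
The plan is to follow the random-deletion argument hinted at in the footnote, and then finish with a Jensen-type convexity step to convert the per-vertex probability bound into a lower bound on the average biclique-degree. Let $\BC = \{R_1 \otimes B_1, \ldots, R_t \otimes B_t\}$ be a biclique cover of $K_n$ with vertex set $V$, $|V|=n$. For each vertex $v \in V$, let $d(v) = |\{i : v \in R_i \cup B_i\}|$ be the number of bicliques in which $v$ participates. Then the total weight of the cover is $\WeightX{\BC} = \sum_v d(v)$, so it suffices to prove $\sum_v d(v) = \Omega(n \log n)$.

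First I would set up the random experiment: independently for each $i \in \IRX{t}$, flip a fair coin and delete (from $V$) either all vertices of $R_i$ or all vertices of $B_i$. Let $S \subseteq V$ be the set of surviving vertices. The key structural step is to argue that $|S| \leq 1$ deterministically, regardless of the outcome of the coin flips. Indeed, suppose $u,v \in S$ with $u \neq v$; since the cover is a cover of $K_n$, some biclique $R_i \otimes B_i$ contains the edge $uv$, so (WLOG) $u \in R_i$ and $v \in B_i$. But the $i$-th coin flip deletes one of $R_i$ or $B_i$ in its entirety, killing either $u$ or $v$, contradicting their survival.

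Next I would compute survival probabilities. A vertex $v$ survives if and only if, for each of the $d(v)$ bicliques that contain it, the coin chose to delete the opposite side; hence $\Prob{v \in S} = 2^{-d(v)}$. Taking expectations in $|S| \leq 1$ gives
\begin{equation*}
    \sum_{v \in V} 2^{-d(v)} \;=\; \Ex{|S|} \;\leq\; 1.
\end{equation*}
Now by convexity of $x \mapsto 2^{-x}$ and Jensen's inequality,
\begin{equation*}
    1 \;\geq\; \sum_{v \in V} 2^{-d(v)} \;\geq\; n \cdot 2^{-\bar d}, \qquad \text{where } \bar d = \frac{1}{n} \sum_{v \in V} d(v).
\end{equation*}
Rearranging yields $\bar d \geq \log_2 n$, and therefore $\WeightX{\BC} = \sum_v d(v) = n \bar d \geq n \log_2 n$, as required.

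No step is really an obstacle — the only point that needs care is the deterministic bound $|S| \leq 1$, which rests on the fact that every edge of $K_n$ is covered (so any two survivors would witness an uncovered edge, a contradiction). The rest is a straightforward probabilistic/convexity computation, and the argument is insensitive to the number $t$ of bicliques; it depends only on the participation counts $d(v)$.
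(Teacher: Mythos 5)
Your proof is correct and follows exactly the argument the paper sketches in its footnote: randomly delete one side of each biclique, observe that at most one vertex survives, compute the survival probability $2^{-d(v)}$, and conclude via convexity that the average participation count is $\Omega(\log n)$. The only addition is that you make the Jensen step explicit, which the footnote leaves implicit; nothing further is needed.
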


\begin{figure}
    \NotSoCGVer{%
       \includegraphics[page=1]{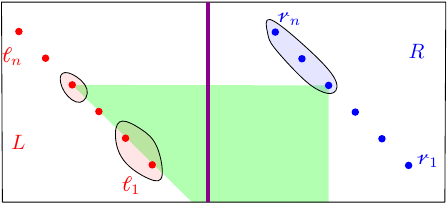}%
       \hfill%
       \includegraphics[page=2]{figs/bcc_lower_bound} }%
    \SoCGVer{%
       \includegraphics[page=1,width=0.46\linewidth]{figs/bcc_lower_bound}%
       \hfill%
       \includegraphics[page=2,width=0.46\linewidth]{figs/bcc_lower_bound}
    }

    \caption{Illustration of the construction in
       \lemref{bcc-lower-bound}. Left: the set of $2n$ points $\R$ and
       $\B$ and a biclique $\{A_i, B_i\}$. Right: An illustration of
       the construction of $\{A'_i,B'_i\}$ as a projection to a
       1-dimensional space}
    \figlab{lower:bound}%
\end{figure}

\begin{lemma}
    \lemlab{bcc-lower-bound}%
    There exists a set $\P$ of $n$ points, such that any biclique
    cover of $\RIGX{\P}$ has weight $\Omega(n\log n)$.
\end{lemma}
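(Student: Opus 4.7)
The plan is to construct a point set $\P$ whose rectangle influence graph contains a bipartite half-graph as a subgraph, and then reduce biclique covers of this half-graph to biclique covers of $K_m$, so that \thmref{bi:c:k:n} supplies the $\Omega(n\log n)$ lower bound.

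For the construction, I would take $n = 2m$ and place $\R = \{r_i = (2i-1,\,m-i) : i \in \IRX{m}\}$ together with $\B = \{b_j = (2j,\,-j) : j \in \IRX{m}\}$, setting $\P = \R \cup \B$. A direct case analysis on which $r_k$ or $b_l$ can lie in the interior of $\rectY{r_i}{b_j}$ (checking the coordinate ranges of the rectangle against the two arithmetic progressions defining $\R$ and $\B$) shows that the rectangle is empty precisely when $i \geq j$. Hence the bipartite sub-RIG between $\R$ and $\B$ is exactly the half-graph $H_m$ on $\IRX{m} \cup \IRX{m}$ with edge set $\{(r_i, b_j) : i \geq j\}$.

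Now, given any biclique cover $\BC = \{A_k \otimes B_k\}$ of $\RIGX{\P}$, splitting each $A_k \otimes B_k$ into the two sub-bicliques $(A_k \cap \R) \otimes (B_k \cap \B)$ and $(A_k \cap \B) \otimes (B_k \cap \R)$ yields a biclique cover of $H_m$ of weight at most $\WeightX{\BC}$, so it suffices to prove that any biclique cover of $H_m$ has weight $\Omega(m\log m)$. For this I would identify $r_i \sim b_i \sim v_i$ and convert to $K_m$: for each $H_m$-biclique $(A,C)$, its index sets $X = \{i : r_i \in A\}$ and $Y = \{j : b_j \in C\}$ satisfy $\min X \geq \max Y$, which forces $|X \cap Y| \leq 1$. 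When $X \cap Y = \emptyset$, take $X \otimes Y$ as a single $K_m$-biclique; otherwise, with $\{i_0\} = X \cap Y$, replace it by the pair $(X \setminus \{i_0\}) \otimes Y$ and $\{i_0\} \otimes (Y \setminus \{i_0\})$, of combined weight at most $2(|X|+|Y|)$. A direct check confirms that every edge $v_iv_j$ of $K_m$ (with $i > j$) is covered exactly once, through the unique $H_m$-biclique containing $(r_i, b_j)$, so the resulting $K_m$-cover has weight at most $2\WeightX{\BC}$. Applying \thmref{bi:c:k:n} then gives $\WeightX{\BC} = \Omega(m\log m) = \Omega(n\log n)$.

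The main obstacle is keeping the $H_m \to K_m$ reduction weight-efficient: a naive identification could create self-loops forcing a splitting whose combined cost scales multiplicatively with $|X|$ and $|Y|$, but the structural constraint $|X \cap Y| \leq 1$ (coming from $\min X \geq \max Y$) limits the blow-up to a constant factor, which is exactly what is needed to transfer the $K_m$ lower bound back to $\RIGX{\P}$.
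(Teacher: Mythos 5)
Your proposal is correct and follows essentially the same route as the paper: two descending anti-chains arranged so that the cross edges form exactly the half-graph $\{(i,j): i\ge j\}$, restriction of an arbitrary biclique cover to the cross edges, identification of the two index sets (where the constraint $\min X \ge \max Y$ forces $|X\cap Y|\le 1$ and a constant-factor split repairs the overlap), and an appeal to the $\Omega(n\log n)$ weight lower bound for biclique covers of $K_n$. The only differences are cosmetic choices of coordinates and of how the overlapping index is split off.
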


\begin{proof}%
    Let $\L$ and $\R$ be two sets of points, defined by
    \begin{equation*}
        \L= \Set{\pl_i = (-i, i)}{i\in \IRX{n}}
        \qquad\text{and}\qquad%
        \R = \Set{\pr_i = (2n - i, i)}{i\in \IRX{n}}.
    \end{equation*}
    See \figref{lower:bound}.  Let $\P = \L \cup \R$,
    $\RIGraph=\RIGX{\P}$, and $\BC$ be any biclique cover of
    $\RIGraph$. Next, consider the restriction of the biclique cover
    to the two sides (some edges would no longer be covered):
    \begin{equation*}
        \BCA
        =
        \Set{
           (X \cap \L) \otimes (Y \cap \R),
           (X \cap \R) \otimes (Y \cap \L)
        }{X \otimes Y \in \BC}.
    \end{equation*}
    Observe that $\WeightX{\BC} = \Theta( \WeightX{\BCA})$.  We
    rewrite $\BCA = \{ \L_1 \otimes \R_1,\ldots, \L_s\otimes\R_s \}$,
    where $\L_i \subset \L$, and $\R_i \subset \R$, for all $i$.  For
    all pairs $\L_i \otimes \R_i \in \BCA$ and all points
    $\pl_\lambda = (-\lambda, \lambda) \in \L_i$ and
    $\pr_\rho = (2n-\rho, \rho) \in \R_i$, we have that
    $\lambda \leq \rho$, as otherwise $\lambda > \rho$, and the
    associated rectangle $\rect =\rectY{\pl_\lambda}{\pr_{\rho}}$
    contains the point
    $\pl_{\lambda-1} = (-\lambda+1,\lambda-1) \in \L$, which implies
    that $\rect \notin \RectsInfX{\P}$. Namely,
    $\pl_\lambda \pr_\rho \notin \EGX{\RIGraph}$ for $\lambda > \rho$.

    For a set of points $Z$, let $Y(Z) = \Set{y }{ (x,y) \in Z}$ be
    the set of $y$-coordinates appearing in points of $Z$. Let
    $B_i' = Y(\L_i)$, $T_i' =Y(\R_i)$, and $M_i' = B_i' \cap T_i'$,
    for all $i$. Observe that since
    $\pl_\lambda \pr_\rho \notin \EGX{\RIGraph}$ for $\lambda > \rho$,
    $\cardin{M_i'} \leq 1$.  Let $B_i = B_i' \setminus M_i'$ and
    $T_i = T_i' \setminus M_i'$, for all $i$.  Consider the biclique
    cover \SoCGVer{%
       \begin{math}
           \BCB
           =
           \Set{ B_i  \otimes T_i,
              (B_i \cup T_i) \otimes M_i' }{ i \in \IRX{s} \bigr.}.
       \end{math}%
    }%
    \NotSoCGVer{%
       \begin{equation*}
           \BCB
           =
           \Set{ B_i  \otimes T_i,
              (B_i \cup T_i) \otimes M_i' }{ i \in \IRX{s} \bigr.}.
       \end{equation*}%
    }
    Here, if either side of a biclique is empty (probably $M_i'$)
    then it is not included in this cover. Clearly
    $\WeightX{\BCB} = \Theta(\WeightX{\BCA} =
    \Theta(\WeightX{\BC}$. Observe that for any
    $1 \leq i < j \leq n$ there exits a pair $X \otimes Y \in \BCB$
    such that $ij \in \EGX{\BCB}$.  Namely, $\BCB$ is a (disjoint)
    biclique cover of the clique $\binom{\IRX{n}}{2}$, and by
    \thmref{bi:c:k:n}  we have
    $\WeightX{\BC} = \Theta(\WeightX{\BCB}) = \Omega(n\log n)$.
\end{proof}

\section{Approximating the depth}

\newcommand{\RDepthX}[1]{\mathrm{d}^{\rectangleC}\pth{#1}}
\newcommand{\depthY}[2]{\mathrm{d}_{#1}\pth{#2}}
\newcommand{\maxDepthX}[1]{\mathrm{d}_{\max}\pth{#1}}
\newcommand{\maxRDepthX}[1]{\mathrm{d}^{\rectangleC}_{\max}\pth{#1\bigr.}}

For a set of regions $\Objs$ in the plane, the \emphi{depth} of a
point $\p\in\Re^2$, is the number of regions in $\Objs$ that contains
$\p$.  That is
$\depthY{\Objs}{\p} = \cardin{\Set{\region \in \Objs}{\p \in
      \region}}$.  The \emphi{maximum depth} of $\Objs$ is
$\maxDepthX{\Objs} = \max_{\p \in \Re^2} \depthY{\Objs}{\p}$.  For a
point set $\P$, consider the set of rectangles they induce
$\Rects = \RectsInfX{\P}$, see \Eqref{rect:set}.  For a point $\p$,
let $\RDepthX{\p}= \depthY{\Rects}{\p}$ denote the \emphi{rectangular
   depth} of $\p$.  Similarly, let
$\maxRDepthX{\P} = \maxDepthX{\Rects}$ denote the \emphi{maximum
   rectangular depth} of $\P$.

\begin{lemma}\RefProofInAppendix{depth:a}
    \lemlab{depth-approx}%
    Let $\P$ be a set of $n$ points in the plane, one can compute, in
    $O(n\log^4 n)$ time, a point $\p$ such that
    $\RDepthX{\p} = \Omega(\maxRDepthX{\P}/\log n)$.
\end{lemma}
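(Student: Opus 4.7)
The plan is to exploit the biclique cover from \thmref{smaller-biclique:2} to reduce approximating the deepest point in the implicit arrangement $\Rects$ to finding the maximum weighted depth in an explicit near-linear arrangement of weighted axis-parallel rectangles, which is then solved by a standard sweep.

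First, I would invoke \thmref{smaller-biclique:2} to compute, in $O(n\log^2 n)$ time, a biclique cover $\{\PA_i\otimes\PB_i\}_i$ of $\RIGX{\P}$ with $O(n)$ bicliques and total weight $O(n\log^2 n)$. By \obsref{rects:implicit}, for each biclique the two sides are axis-separable (say $\PA_i$ is lower-left of $\PB_i$), and from the construction each $\PA_i$ is a staircase (a $\domby$-antichain of maxima). The biclique's contribution to $\RDepthX{\q}$ therefore equals the product $n_1^i(\q)\cdot n_2^i(\q)$, where $n_1^i(\q)$ counts $\pa\in\PA_i$ lower-left of $\q$ and $n_2^i(\q)$ counts $\pb\in\PB_i$ upper-right of $\q$, and so $\RDepthX{\q}=\sum_i n_1^i(\q)\cdot n_2^i(\q)$.

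Next, for each biclique $i$ I would produce weighted axis-parallel rectangles whose weighted depth at every point $\q$ matches $n_1^i(\q)\cdot n_2^i(\q)$ up to a constant factor, using exponential level sets. For each integer pair $(j,k)$ with $0\le j\le\log|\PA_i|$ and $0\le k\le\log|\PB_i|$, the region $L_{i,j,k}=\{\q:n_1^i(\q)\ge 2^j\text{ and }n_2^i(\q)\ge 2^k\}$ is staircase-bounded: its boundary is traced out by sliding a window of $2^j$ (resp.~$2^k$) consecutive vertices along the staircase $\PA_i$ (resp.~$\PB_i$), so $L_{i,j,k}$ is a union of shifted quadrants admitting a disjoint axis-parallel rectangular decomposition. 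Covering $L_{i,j,k}$ by such rectangles, each of weight $2^{j+k}$, the telescoping identity $\sum_{j\le\log n_1,\,k\le\log n_2} 2^{j+k}=\Theta(n_1\cdot n_2)$ guarantees that the resulting weighted depth at any $\q$ approximates $\RDepthX{\q}$ up to a constant factor (absorbing an additional $O(\log n)$ slack introduced by the inexact single-rectangle approximation of a level set). Summed over all bicliques and $(j,k)$, the total number of weighted rectangles comes out to $O(n\log^3 n)$ using the $O(n\log^2 n)$ bound on the cover's total weight.

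Finally, I would find the point $\p$ of maximum weighted depth by a standard sweepline with a segment tree over these $O(n\log^3 n)$ weighted rectangles in $O(n\log^4 n)$ time. Since $\p$ being contained in a level rectangle $R\subseteq L_{i,j,k}$ certifies that biclique $i$ contributes at least $2^{j+k}$ to $\RDepthX{\p}$, we have $\RDepthX{\p}\ge W$, where $W$ is the weighted depth at $\p$. Applying the coverage guarantee at the true maximum-depth point yields $W=\Omega(\maxRDepthX{\P}/\log n)$, matching the claimed bound.

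The main obstacle is controlling the rectangle count of the level-set decomposition: a naive bound on the staircase complexity of $L_{i,j,k}$ gives roughly $|\PA_i|+|\PB_i|$ rectangles per $(j,k)$, which summed over bicliques and levels would exceed the $O(n\log^4 n)$ time budget. Exploiting the shift-invariance of the sliding window defining the staircase (so that consecutive quadrants overlap heavily, reducing the disjoint-rectangle count by a factor of roughly $2^{\min(j,k)}$ per level), combined with the $O(n\log^2 n)$ weight of the biclique cover, is essential to bring the total rectangle count down and keep the sweep within the claimed time.
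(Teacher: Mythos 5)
Your route is quite different from the paper's proof of this particular lemma, and it has a genuine gap in the step you yourself flag as the ``main obstacle.''

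The paper's proof of \lemref{depth-approx} is much more elementary and does not require any level-set or staircase machinery. It uses the implicit representation from \obsref{rects:implicit} only to observe that the depth function of $\Rects$ \emph{restricted to a single vertical line} is a $1$-dimensional weighted-interval depth problem of total size $O(n\log^2 n)$, which can be solved exactly in $O(n\log^3 n)$ time. Then it runs a standard divide-and-conquer on vertical median lines: compute the exact max depth on the median line, recurse on the left and right halves of $\P$, and return the best of the three. The recursion has depth $h=O(\log n)$, and each rectangle of $\Rects$ is ``charged'' to the first splitting line that crosses it; by pigeonhole, at the true deepest point one of its $O(\log n)$ ancestor lines collects a $1/h$ fraction of the depth, and that line's exact $1$-D answer is at least that large. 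That is the entire argument; no approximate surrogate arrangement is ever built.

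Your proposal instead tries to build an explicit weighted-rectangle arrangement approximating the depth function globally, via exponential level sets $L_{i,j,k}$. That is the strategy of the paper's \lemref{eps:max:depth} (which in fact gives the stronger $(1-\eps)$-approximation), but the step you identify as the obstacle is indeed where the argument as written breaks. Your ``shift-invariance'' claim -- that the disjoint-rectangle decomposition of $L_{i,j,k}$ has only $O\bigl((|\PA_i|+|\PB_i|)/2^{\min(j,k)}\bigr)$ pieces because consecutive sliding-window quadrants overlap heavily -- is not correct. Overlap does not reduce the boundary complexity: if $\PA_i$ is a $\domby$-antichain of $s$ points, the $2^j$-level bottom staircase has $s-2^j+1$ distinct corners, so $L_{i,j,k}$ has boundary complexity $\Theta\bigl((s-2^j)+(t-2^k)\bigr)$ where $t=|\PB_i|$, and any interior-disjoint rectangular decomposition needs that many pieces. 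Summing over $(j,k)$ and over bicliques then yields $O(n\log^4 n)$ rectangles, not the $O(n\log^3 n)$ you claim, and the sweep exceeds the $O(n\log^4 n)$ budget. The fix the paper uses (in \lemref{eps:max:depth}, not in the lemma you are proving) is not overlap-based: it replaces the exact $\alpha$-level and $\beta$-level staircases by a single simplified curve between them whose complexity is $O\bigl(|\PB_i|/(\beta-\alpha)\bigr)$ -- a genuine geometric simplification, not a consequence of the sliding window. If you want to go your route, you would need to prove a bound of that form; as written, the decomposition size is uncontrolled.

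There is also a secondary issue worth noting: your telescoping sum $\sum_{j,k}2^{j+k}=\Theta(n_1 n_2)$ would, if implemented as stated (each decomposition rectangle of $L_{i,j,k}$ carrying weight $2^{j+k}$), \emph{over}-count the true depth by a constant factor rather than under-count it, so the certificate you give (``$\p$ in a level rectangle of $L_{i,j,k}$ certifies $\ge 2^{j+k}$'') applies only to a single $(j,k)$ at a time, not to the summed weighted depth. This is repairable, but it reinforces that the level-set route, while natural, needs more care than the paper's divide-and-conquer, which sidesteps all of these approximation issues by computing depth exactly on a line.
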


\begin{proof:e}{\Xlemref{depth-approx}}{depth:a}
    We compute the implicit representation
    $\rectY{\P_1}{\PSA_1}, \ldots, \rectY{\P_s}{ \PSA_s}$ of
    $\Rects= \RectsInfX{\P}$, as described in \obsref{rects:implicit}.
    It is now straightforward to verify that for the set of rectangles
    $\rectY{\P_i}{\PSA_i}$, the depth along a horizontal line $\Line$
    can be described by a weighted set of
    $O( \cardin{\P_i} + \cardin{\PSA_i})$ interior disjoint intervals
    (that can also be computed in this time).

    Thus, computing the maximum depth on $\Rects$ on a vertical line
    $\Line$ can be reduced to computing the maximum depth of a point
    in a set of weighted intervals, where the total number of
    intervals is $\WeightX{\BC} = O( n\log^2 n)$. This in turn can be
    done in $O(n \log^3 n)$ time.

    We now apply the standard divide-and-conquer approach. Take the
    vertical line that is the median on the $x$-axis of the points of
    $\P$. Compute the maximum depth point on this line. Now, compute
    the maximum depth recursively on the point set on the right, and
    the point set on the left. Return the maximum depth point out of
    the three candidates computed. Observe that the recursion depth is
    $h = O( \log n)$, and this decomposes $\Rects$ into $O( \log n )$
    disjoint sets. One of these sets the maximum rectangular depth is
    $\maxRDepthX{\P}/h$, which is a lower bound on the depth of the
    point computed by the algorithm.
\end{proof:e}

\begin{lemma}
    \lemlab{eps:max:depth}%
    For any $\eps \in (0,1)$ and a point set $\P$ of $n$ points in
    the plane, one can construct, in $O( \eps^{-2} n \log^3 n )$
    time/space, a data structure that $(1-\eps)$-approximates
    rectangular depth queries in $\RectsInfX{\P}$. Specifically, given
    a query point $\q$, the data-structure returns, in $O(\log n)$
    time, a value $\alpha$ such that
    $(1-\eps) \RDepthX{\q} \leq \alpha \leq \RDepthX{\q} $.
\end{lemma}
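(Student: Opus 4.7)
The plan is to reduce a $(1-\eps)$-approximate rectangular depth query to an exact weighted depth query, using the implicit biclique representation of \obsref{rects:implicit}. First, I compute the $O(n)$ bicliques $\rectY{\P_1}{\PSA_1}, \ldots, \rectY{\P_s}{\PSA_s}$ given by \thmref{smaller-biclique:2}, with total weight $O(n\log^2 n)$. Orienting each biclique so that $\P_i$ is lower-left of $\PSA_i$, the rectangle $\rectY{\p}{\q'}$ with $\p\in\P_i,\ \q'\in\PSA_i$ contains the query point $\q$ iff $\q$ dominates $\p$ and $\q'$ dominates $\q$. The biclique's contribution to $\RDepthX{\q}$ thus factors as $f_i(\q)\cdot g_i(\q)$, where $f_i(\q)=\cardin{\{\p\in\P_i:\q \text{ dominates } \p\}}$ and $g_i(\q)=\cardin{\{\q'\in\PSA_i:\q' \text{ dominates } \q\}}$.

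For each biclique I approximate this product by a small set of weighted axis-parallel rectangles. Setting $k_j=\lceil(1+\eps)^j\rceil$ and $A_j=\{f_i\geq k_j\}$ (nested upper-staircase level sets), and analogously $B_\ell=\{g_i\geq k_\ell\}$, with weights $w_j=k_j-k_{j-1}$ and $v_\ell=k_\ell-k_{\ell-1}$, the truncations $\hat f_i=\sum_j w_j\mathbf{1}[A_j]$ and $\hat g_i=\sum_\ell v_\ell\mathbf{1}[B_\ell]$ satisfy $f_i/(1+\eps)<\hat f_i\leq f_i$ and likewise for $g_i$. Hence $\hat f_i\hat g_i=\sum_{j,\ell}w_jv_\ell\mathbf{1}[A_j\cap B_\ell]$ is a $(1-O(\eps))$-approximation of $f_i g_i$, expressed as $O(\eps^{-1}\log n)\cdot O(\eps^{-1}\log n)=O(\eps^{-2}\log^2 n)$ weighted level-set intersections per biclique.

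The main technical step is converting the staircase intersections $A_j\cap B_\ell$ into weighted axis-parallel rectangles while keeping the per-biclique count at $O(\eps^{-2}\log^2 n)$. Here I exploit the construction of \lemref{smaller:bi:c:cover}: $\P_i$ arises as a canonical set of the stack data structure, hence is a monotone chain, so $A_j$ admits the canonical step decomposition $A_j=\bigsqcup_m [x_{m+k_j-1},\infty)\times[y_m,y_{m-1})$, and the geometrically-spaced thresholds together with the nesting $A_1\supset A_2\supset\cdots$ permit a telescoping that charges each chain corner to only $O(1)$ levels on average; a symmetric treatment handles the $B_\ell$'s. Summing over all $s=O(n)$ bicliques yields $N=O(\eps^{-2}n\log^2 n)$ weighted rectangles whose cumulative depth $(1-O(\eps))$-approximates $\RDepthX{\cdot}$ pointwise. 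A standard persistent-segment-tree sweep in $x$ over these $N$ weighted rectangles then gives $O(N\log N)=O(\eps^{-2}n\log^3 n)$ preprocessing time and space with $O(\log N)=O(\log n)$ query time; rescaling $\eps$ by a constant produces the stated guarantee. The principal obstacle is the telescoping argument: decomposing each $A_j$ independently would give $\Omega(\eps^{-1}\cardin{\P_i}\log n)$ rectangles per biclique, so the plan hinges on the nested level sets and the exponential spacing of the thresholds $k_j$ to amortize chain-corner contributions across levels.
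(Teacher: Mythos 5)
Your high-level plan matches the paper's: use the biclique cover of \thmref{smaller-biclique:2}, factor the per-biclique depth as a product of a ``bottom staircase level'' $f_i$ and a ``top staircase level'' $g_i$, truncate both to geometrically-spaced thresholds, and sweep the resulting weighted rectangles. The approximation-quality analysis of $\hat f_i$ and $\hat g_i$ is fine, and the final sweep/persistence step is the same as in the paper.

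The gap is in the step you yourself flag as the ``principal obstacle.'' You assert that the nesting $A_1\supset A_2\supset\cdots$ and the geometric thresholds ``permit a telescoping that charges each chain corner to only $O(1)$ levels on average,'' but this is not true for the \emph{actual} level sets $A_j$. The boundary of $A_j$ has $\Theta(\cardin{\P_i}-k_j)$ staircase corners, so a chain point $p_m$ appears as a step in $A_j$ for every $j$ with $k_j\leq \cardin{\P_i}-m+1$; averaging over $m$, that is $\Theta(\eps^{-1}\log(\eps\cardin{\P_i}))$ levels per corner, not $O(1)$. Summing the boundary complexities $\sum_j\cardin{\partial A_j}$ over a biclique with $\cardin{\P_i}=m$ gives $\Theta(\eps^{-1}m\log(\eps m))$, and over all bicliques this costs an extra $\log n$ factor, i.e.\ $O(\eps^{-1}n\log^{3}n)$ rectangles from the staircase portion alone, which after the final sweep yields $O(\eps^{-1}n\log^{4}n+\eps^{-2}n\log^{3}n)$ --- exceeding the stated $O(\eps^{-2}n\log^{3}n)$ whenever $\eps\gtrsim 1/\log n$. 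There is no purely combinatorial telescoping that salvages this, because the $m$-dependence of $\cardin{\partial A_1}$ alone already forces $\Omega(m)$ corners per level for the early levels.

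What the paper does instead, and what is missing from your proposal, is a \emph{geometric simplification} of the staircases: between the $\alpha$-th and $\beta$-th bottom staircases one can thread a monotone axis-parallel curve $\gamma$ of complexity only $O(\cardin{\PB}/(\beta-\alpha))$ (alternating vertical drops to the inner staircase and horizontal runs to the outer one; each segment of $\gamma$ crosses $\beta-\alpha-1$ of the $2\cardin{\PB}$ grid lines through the chain, and each line is crossed once). Using this simplified curve $\BSC_i'$ between consecutive geometric levels $\alpha_i,\alpha_{i+1}$, with $\alpha_{i+1}-\alpha_i=\Theta(\eps\alpha_i)$, the geometric sum $\sum_i \cardin{\PB}/(\eps\alpha_i)$ collapses to $O(\cardin{\PB}/\eps)$ --- i.e.\ the extra $\log$ disappears. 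This simplification is the load-bearing idea; without it the complexity bound does not close, and the amortization you invoke does not materialize.
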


\begin{proof}
    First compute the implicit decomposition
    $\rectY{\P_1}{\PSA_1}, \ldots, \rectY{\P_s}{ \PSA_s}$ of
    $\Rects= \RectsInfX{\P}$, using \thmref{smaller-biclique:2}.  The
    idea is to replace each set of rectangles $\rectY{\P_i}{\PSA_i}$,
    which is of size $\cardin{\P_i} \cdot \cardin{\PSA_i}$ by a set of
    interior disjoint weighted rectangles that approximates the depth
    they provide, and this new set has nearly linear size in
    $\cardin{\P_i} + \cardin{\PSA_i}$.

    So, consider such a pair $\rectY{\PA}{ \PB}$. By
    \obsref{rects:implicit} the point sets $\PA$ and $\PB$ are
    quadrant separated, and form each (say) a $\adom$-chain, so assume
    (without loss of generality) that $\PB \domby \PA$.  Let
    $\PA = \{ \pa_1, \ldots, \pa_s \}$ and
    $\PB = \{ \pb_1, \ldots, \pb_t\}$ be the two point sets sorted
    from left to right, and let
    $N = \cardin{\PA} + \cardin{\PB} = s + t$. Consider the
    arrangement $\Arr = \ArrX{\rectY{\PA}{\PB}\bigr.}$.

    Let $\BSC_k$ be the boundary of the (closure of the) set of all
    the points that dominate $k$ or more
    points of $\PB$. The set $\BSC_k$ is the \emphw{$k$\th bottom
       staircase} of $\Arr$. The \emphw{$k$\th top staircase} $\TSC_k$
    is defined analogously for $\PA$. If a point $\p \in \Re^2$ is
    between $\BSC_{k}$ and $\BSC_{k+1}$, and between $\TSC_{\ell}$ and
    $\TSC_{\ell+1}$, then $\depthY{\rectY{\PA}{\PB}}{\p} = k \ell$.
    See \figref{k:th:staircase:2}.

    \begin{figure}[h]
        \phantom{}%
        \hfill%
        \includegraphics[page=1]{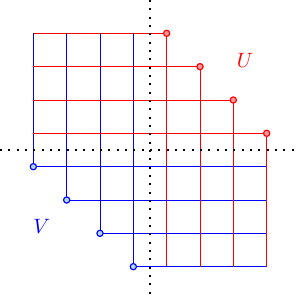}%
        \hfill%
        \includegraphics[page=2]{figs/stair_case_2}%
        \hfill%
        \phantom{}
        \caption{The structure of the staircases induced by the quadrant separated $\domby$-chains $\PA$ and
           $\PB$.}
        \figlab{k:th:staircase:2}
    \end{figure}

    Observe that the vertices of a top chain are quadrant-separated
    from the vertices of a bottom chain. Furthermore, given two bottom
    chains $\BSC_\alpha$ and $\BSC_\beta$, with $\alpha < \beta$, one
    can compute an axis aligned polygonal line $\gamma$ in between
    them of complexity $O\bigl( \cardin{\PB}/ ( \beta-\alpha)
    \bigr)$. Specifically, $\gamma$ starts between the top endpoints
    of $\BSC_\alpha$ and $\BSC_\beta$, and ends similarly between the
    right endpoints of the two staircases. Indeed, start with top
    vertex of $\BSC_\beta$ and move vertically down till hitting
    $\BSC_\alpha$. When this happens $\gamma$ changes direction and
    moves horizontally from left to right till hitting $\BSC_\beta$,
    then $\gamma$ changes back to moving vertically down. Keeping this
    alternation till exiting on the right.  To see the bound on the
    complexity of the curve consider the grid formed by horizontal and
    vertical lines passing through the points of $\PB$, and observe
    that the interior of every straight segment of $\gamma$ intersects
    $\beta-\alpha-1$ of these lines. Since $\gamma$ can intersect such
    a line only once, and there are $2 \cardin{\PB}$ such lines the
    bound follows.

    For $i=1, \ldots, \mu = \ceil{6/\eps}$, let $\alpha_i = i$.  For
    $i > \mu$, let
    $\alpha_i = \min( \cardin{\PB}, \ceil{ (1+\eps/3) \alpha_{i-1}}
    )$, and let $\tau$ be the first index such that
    $\alpha_i = \cardin{\PB}$.  It is easy to verify that
    $\tau = O( (1 + \log \cardin{\PB})/\eps )$.

    For $i=1,\ldots, \mu$ let $\BSC_i' = \BSC_i$, and for $i> \mu$,
    let $\BSC_i'$ be the above simplified curve lying between the
    $\alpha_i$\th and $\alpha_{i+1}$\th bottom staircase.  We repeat
    the same process for the top staircases. This results in a set of
    simplified staircases of total complexity $O(n /\eps)$, as the
    complexity of the simplified curves is dominated by the complexity
    of the first top/bottom $\mu$ curves. Importantly, every top curve
    intersects only $L = O( \eps^{-1} \log N)$ bottom curves and
    vice-versa. Thus, the total complexity of the arrangement of all
    these simplified top/bottom staircases is
    $N' = O( N/\eps + L^2 ) = O( N/\eps + (\log N)^2/\eps^2
    )$. Clearly, by sweeping, we can compute the approximate depth of
    each face of this arrangement, and this approximate depth is
    $(1+\eps)$-approximation for the depth of all the points inside
    this face. Every face can now be broken into rectangles (by this
    sweeping process). Thus yielding a set of interior disjoint
    (weighted) rectangles of size $N'$, such that the weight of the
    rectangle containing a point is the desired approximate depth in
    this biclique.  However, this arrangement is simply a grid in two
    of the quadrants where the top and bottom staircases intersect,
    and this portion can be computed directly without sweeping.  The
    other two quadrants are made out of only top or bottom staircases
    and both parts can be computed in linear time in their
    complexity. Namely, this rectangle decomposition can be computed
    in time proportional to its complexity.

    Repeating this process for all the bicliques, results in a set of
    rectangles of size
    \begin{equation*}
        O( \eps^{-1} n \log^2 n + \eps^{-2} n \log^2 n ),
    \end{equation*}
     since there are $O(n)$ bicliques and their total weight is
    $O( n \log^2 n)$.  This also bounds the total time to compute
    these rectangles. The final step is to overlay all these
    rectangles from all these bicliques and, using sweeping,
    preprocess them for point location and total depth query. Using
    persistence and segment trees for the $y$-structures increases the
    running time/space by a logarithmic factor, but now depth queries
    on this set of rectangles can be answered in $O( \log n)$ time,
    which implies the result.
\end{proof}

\begin{corollary}\RefProofInAppendix{max:depth}
    \corlab{max:depth}%
    Given $\eps \in (0,1)$, and a set $\P$ of $n$ points in the plane,
    one can approximate the maximum depth point in $\RectsInfX{\P}$ in
    $O( \eps^{-2} n \log^3 n)$ time.
\end{corollary}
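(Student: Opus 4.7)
The plan is to reuse the weighted rectangle collection built inside the proof of \lemref{eps:max:depth}, and replace its final point-location preprocessing by a sweepline that tracks the maximum total weight directly. That proof shows that, in $O(\eps^{-2} n \log^2 n)$ time, one can construct a weighted family of axis-parallel rectangles $\Rects'$ of total size $m = O(\eps^{-2} n \log^2 n)$ such that, for every point $\p \in \Re^2$, the sum of the weights of the rectangles of $\Rects'$ containing $\p$ lies in $\bigl[(1-\eps)\RDepthX{\p},\, \RDepthX{\p}\bigr]$. Thus a point that maximizes this weighted coverage is a $(1-\eps)$-approximate deepest point of $\RectsInfX{\P}$.

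Concretely, I would perform a left-to-right vertical sweep over the $2m$ vertical edges of $\Rects'$ sorted by $x$-coordinate. The sweepline maintains a segment tree over the (coordinate-compressed) $y$-values of the horizontal edges of $\Rects'$, augmented with lazy propagation so that it supports range-add and global-max queries in $O(\log m)$ time. Entering a rectangle of weight $w$ triggers a range-add of $w$ on its $y$-span, and exiting triggers a range-subtract of $w$; after each event the algorithm queries the tree's global maximum and records the best value seen, together with a witness $y$-interval and the current $x$-coordinate.

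For correctness, let $\p^*$ achieve the true maximum rectangular depth $D^* = \maxRDepthX{\P}$. By the guarantee from \lemref{eps:max:depth}, at the moment the sweep crosses the column of $\p^*$ the segment tree contains a value at least $(1-\eps)D^*$ on the row of $\p^*$, so the recorded maximum $\alpha^*$ satisfies $\alpha^* \geq (1-\eps)D^*$. The witness point $\q^*$ reported by the sweep has weighted coverage exactly $\alpha^*$, which in turn is at most $\RDepthX{\q^*}$, so $\RDepthX{\q^*} \geq (1-\eps)\maxRDepthX{\P}$, as required. The running time is $O(m \log m) = O(\eps^{-2} n \log^3 n)$, dominating the $O(\eps^{-2} n \log^2 n)$ time spent producing $\Rects'$.

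The main obstacle is not conceptual but quantitative: avoiding an extra logarithmic factor that would arise from first building the persistent point-location structure of \lemref{eps:max:depth} and then searching it. Replacing that structure by a single sweep with a lazy-propagation segment tree reporting global maxima (rather than per-point queries) sidesteps this blow-up and yields the claimed bound.
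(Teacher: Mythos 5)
Your proof is correct and is essentially the approach the paper intends. The paper's proof of this corollary is a single sentence pointing back to \lemref{eps:max:depth}: the preprocessing sweep there already examines every face of the arrangement of the weighted rectangles $\Rects'$, so one just records the maximum weighted coverage encountered along the way, and the $(1-\eps)$-approximation for the reported witness follows exactly as you argue. Your write-up spells out the bookkeeping (lazy-propagation segment tree supporting range-add and global-max, event processing, witness extraction) that the paper leaves implicit, and you correctly separate the $O(\eps^{-2} n \log^2 n)$ cost of producing $\Rects'$ from the $O(m\log m) = O(\eps^{-2} n \log^3 n)$ cost of the sweep. The only small over-caution is the framing of ``avoiding an extra log factor'': \lemref{eps:max:depth} already builds the persistent structure via a sweep within the stated $O(\eps^{-2} n \log^3 n)$ bound, so tracking the global maximum during that same sweep adds nothing; there is no blow-up to sidestep.
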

\begin{proof:e}{\Xcorref{max:depth}}{max:depth}
    The above algorithm constructing the data-structure can also be
    used to compute the deepest point in the approximate set of
    rectangles, which readily yields the result.
\end{proof:e}

\section{On the structure of the box hull}
\seclab{box:hull}

For a point set $\P \subseteq \Re^2$, the \emphi{box hull} of $\P$ is
the region covered by the union of the rectangles in
$\Rects = \RectsInfX{\P}$, see \Eqref{rect:set}.  Formally, the box
hull of $\P$ is the set
\begin{math}
    \BHX{\P}%
    =%
    \cup \Rects
    =%
    \cup_{\rect \in \Rects} \rect.
\end{math}
\NotSoCGVer{%
   \begin{equation*}
    \BHX{\P}%
    =%
    \cup \Rects
    =%
    \cup_{\rect \in \Rects} \rect.
\end{equation*}
}%

The box hull seems initially deceptively simple, but unlike
convex-hull (and orthogonal convex-hull) it is not monotone, see
\figref{not:monotone}. It is not immediately obvious, for example,
that it does not have holes.

\begin{figure}[h]
    \phantom{}\hfill%
    \includegraphics[page=1]{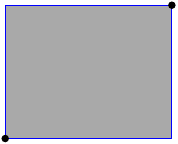}%
    \hfill%
    \includegraphics[page=2]{figs/not_monotone}%
    \hfill%
    \phantom{}%
    \caption{Adding points, might result in a smaller box hull.}
    \figlab{not:monotone}
\end{figure}

Let $\QuadX{\p}$ be the set of four closed axis parallel quadrants
centered at $\p$.

\begin{lemma}\RefProofInAppendix{c:rect}
    \lemlab{c:rect}%
    Consider a point $\p \in \Re^2$. If for all $T \in \QuadX{\p}$ we
    have that $T \cap \P \neq \emptyset$, then $\p \in \BHX{\P}$.
\end{lemma}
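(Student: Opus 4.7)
Assume by translation that $\p=(0,0)$ and that $\P$ is in general position (no two points sharing an $x$- or $y$-coordinate). Let $P_i = \P \cap Q_i$, so each $P_i$ is nonempty by hypothesis. The plan is to exhibit an ``antipodal'' empty rectangle in $\Rects$ containing $\p$: a closed axis-aligned rectangle whose two antipodal corners are points of $\P$ in opposite quadrants of $\p$, and which contains no other point of $\P$. Any pair $(\q,\q') \in (P_1\times P_3) \cup (P_2\times P_4)$ gives a rectangle $\rectY{\q}{\q'}$ automatically containing $\p$, so we restrict attention to the family $\mathcal{A}$ of such pairs.

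The core idea is a minimization argument. Among all pairs in $\mathcal{A}$ pick $(\q_a,\q_b)$ minimizing the count $|\rectY{\q_a}{\q_b} \cap \P|$, and write $R^* = \rectY{\q_a}{\q_b}$ with minimum count $m \geq 2$. If $m=2$ then $R^* \in \Rects$ and we are done, so suppose for contradiction that $m \geq 3$; let $\q \in R^* \cap \P \setminus\{\q_a,\q_b\}$ be a third point, and WLOG take $(\q_a,\q_b) \in P_1 \times P_3$.

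I would then carry out a case analysis on which quadrant contains $\q$. If $\q \in P_1$, then general position gives $0 < \q_x < \q_{a,x}$ and $0 < \q_y < \q_{a,y}$, so $\rectY{\q}{\q_b} \subsetneq R^*$ is a strictly smaller $P_1\times P_3$ rectangle from which $\q_a$ has been excluded, violating minimality. The case $\q \in P_3$ is symmetric. The remaining possibility is $\q \in P_2 \cup P_4$, and we may assume $\q \in P_2$. In that case, provided we can produce some $\q_d \in P_4 \cap R^*$, the rectangle $\rectY{\q}{\q_d}$ is an NW-SE antipodal rectangle contained in $R^*$, and since $\q_{a,y} > \q_y$ strictly, the corner $\q_a$ lies outside $\rectY{\q}{\q_d}$; hence its $\P$-count is at most $m-1$, contradicting minimality.

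The main obstacle is precisely establishing the existence of $\q_d \in P_4 \cap R^*$. If no such $\q_d$ exists, every $P_4$ point is outside $R^*$, i.e.\ has either $\q_{d,x} > \q_{a,x}$ or $\q_{d,y} < \q_{b,y}$, and the shrink-by-inclusion move fails. To handle this, the plan is to apply the same minimization in parallel to the NW-SE family: let $R^{**}$ be a minimizer in $P_2\times P_4$. Its excess point must lie in $P_1 \cup P_3$ by the same case analysis, and pairing the structural information forced by $P_4 \cap R^* = \emptyset$ with the analogous information forced by an empty $P_1 \cap R^{**}$ (or $P_3 \cap R^{**}$) yields, via the general-position strict inequalities, two extremal $\P$-points whose combined inequalities contradict the assumption that all four quadrants are nonempty. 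This back-and-forth between the two diagonal families is the technical crux; once it is completed, the descent closes and $m = 2$ is forced, giving the desired rectangle in $\Rects$ containing $\p$.
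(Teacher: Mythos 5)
Your proof takes a genuinely different approach from the paper's, and unfortunately the part you flag as "the technical crux" is a real gap, not just a detail to fill in.

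The paper argues constructively with specific extremal witnesses: it takes the lowest point in each of the two upper quadrants and the highest point in each of the two lower quadrants, observes that if the lower top witness and the higher bottom witness are antipodal then their rectangle is already empty (no point of $\P$ can have a $y$-coordinate strictly between them), and otherwise considers the horizontal slab between the higher top witness and the lower bottom witness and picks the extreme (leftmost or rightmost) point of $\P$ inside that slab, which supplies the missing corner of an empty rectangle containing $\p$. There is no minimization and no descent.

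Your minimization-and-descent steps that you actually carry out are correct: if the excess point of $R^* = \rectY{\q_a}{\q_b}$ with $(\q_a,\q_b) \in P_1\times P_3$ lies in $P_1$ or $P_3$, you can shrink within the same diagonal family, and if $R^*$ contains both a $P_2$ point $\q$ and a $P_4$ point $\q'$ then $\rectY{\q}{\q'} \subseteq R^*$ avoids $\q_a,\q_b$ and strictly drops the count. The unresolved case is $\q \in P_2$ with $P_4\cap R^* = \emptyset$ (equivalently, every $P_4$ point has $x$-coordinate larger than $\q_{a,x}$ or $y$-coordinate smaller than $\q_{b,y}$). Once you must reach for a $\q_d \in P_4$ that lies \emph{outside} $R^*$, the candidate $\rectY{\q}{\q_d}$ is no longer contained in $R^*$: it can stretch far in $x$ and pick up points of $P_1$ and $P_3$ that $R^*$ never saw, so there is no inclusion forcing the count to drop, even though $\q_a$ and $\q_b$ are always excluded. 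Your proposed "back-and-forth" between the NE--SW and NW--SE minimizers does not resolve this: the emptiness constraint $P_4\cap R^* = \emptyset$ and the analogous constraint for $R^{**}$ refer to different rectangles with different corners, and combining them does not visibly yield two incompatible inequalities; the argument as written stops at the statement that it "yields … a contradiction" without producing one. Closing this gap would at minimum require careful auxiliary choices (for instance taking $\q$ of maximal $x$-coordinate in $P_2\cap R^*$ and $\q_d$ on the dominance-minimal staircase of $P_4$ so that $\rectY{\q}{\q_d}$ contains no other $P_2$ or $P_4$ points) followed by a further case analysis on what of $P_1\cup P_3$ can enter $\rectY{\q}{\q_d}$ — none of which appears in the proposal. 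As it stands, the proof is incomplete at precisely the point you identify as its crux.
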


\begin{proof:e}{{\Xlemref{c:rect}}}{c:rect}
    Assume without loss of generality that $\p=(0,0)$ is the origin. Let
    $\p_1$ be the lowest point such that $\p_1 \domby \p$, i.e. the
    lowest point in the positive quadrant, let $\p_2$ be the lowest
    point such that $\p_2 \adom \p$, i.e. the lowest point in second
    quadrant (starting from the positive quadrant and enumerating
    counter-clockwise), and, similarly, let
    $\p_3$ and $\p_4$ be the highest points in the third and fourth
    quadrants. See \figref{axis:convexity} for an illustration.

    If the lower point of the pair $\p_1,\p_2$ and the higher point of
    the pair $\p_3,\p_4$ are in antipodal quadrants then we are done,
    as the rectangle they define contains the origin and is in
    $\Rects$. Otherwise, consider the horizontal slab between the higher
    point of $\p_1,\p_2$ and the lower point of $\p_3,\p_4$, without
    loss of generality these are $\p_2$ and $\p_3$, and denote the
    leftmost point in the first and fourth quadrant that is inside
    this slab by $\p'$, without loss of generality it is in the first
    quadrant. See \figref{axis:convexity}. We now get that
    $\rectY{\p'}{\p_3}$ is in $\Rects$, as the aforementioned slab
    does not contain any points of $\P$ in the left half of the plane
    due to the choice of $\p_2,\p_3$, and does not contain points of
    $\P$ with positive $x$-value that is lower than that of
    $\p'$. Since this rectangle contains the origin we are done.

    \begin{figure}
    	\centering \includegraphics{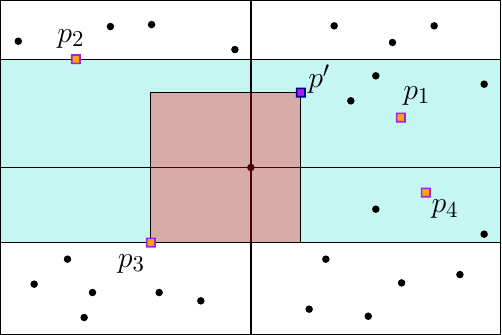}
    	\caption{An illustration of \lemref{c:rect}. The lowest
    		and highest points in the quadrants above and below the
    		$x$-axis respectively are marked by $p_i$. In this case the
    		pair of closest points of each side ($p_1,p_4$) is not
    		antipodal, and so the slab and the rightmost point in the slab
    		are also shown, as well as the resulted rectangle.}
    	\figlab{axis:convexity}
    \end{figure}

\end{proof:e}

\begin{lemma}\RefProofInAppendix{axis:convex}
    \lemlab{axis:convex}%
    The set $\BH = \BHX{\P}$ is connected. Furthermore, $\BH$ is
    horizontally and vertically convex -- formally, for any horizontal
    or vertical line $\Line$, either $\BHX{\P} \cap \Line= \emptyset$,
    or it is a segment.
\end{lemma}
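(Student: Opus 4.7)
I would prove the two claims separately. Axis-convexity follows directly from \lemref{c:rect}, while connectedness is obtained by first showing that the graph $\RIGX{\P}$ is itself connected and then lifting this fact to $\BHX{\P}$ via the observation that every point of the box hull lies in a rectangle whose two $\P$-corners belong to $\P$.

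\textbf{Axis-convexity.} By the $x/y$ symmetry of the construction it suffices to prove horizontal convexity. Fix a horizontal line $\Line$, two points $\p,\q \in \BHX{\P} \cap \Line$ with $\p$ to the left of $\q$, and an arbitrary point $\p'$ on $\Line$ strictly between them. I would show that each of the four closed quadrants of $\p'$ contains a point of $\P$; \lemref{c:rect} then yields $\p' \in \BHX{\P}$. Choose rectangles $\rect_1,\rect_2 \in \Rects$ with $\p \in \rect_1$ and $\q \in \rect_2$. If $\p' \in \rect_1 \cup \rect_2 \subseteq \BHX{\P}$ we are immediately done; otherwise the right edge of $\rect_1$ lies strictly to the left of $\p'$, so both $\P$-corners of $\rect_1$ have $x$-coordinate strictly less than that of $\p'$. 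Because $\rect_1$ crosses $\Line$ (it contains $\p$), in general position its two $\P$-corners lie strictly on opposite sides of $\Line$, so one occupies the NW quadrant of $\p'$ and the other the SW quadrant. Applying the symmetric argument to $\rect_2$ yields points of $\P$ in the NE and SE quadrants of $\p'$, completing this part.

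\textbf{Connectedness.} First I would show that $\RIGX{\P}$ is connected by induction on $k(\p,\q) = \cardin{\rectY{\p}{\q} \cap \P}$ for an arbitrary pair $\p,\q \in \P$. The base case $k = 2$ means $\rectY{\p}{\q} \in \Rects$, so $\p\q$ is an edge of $\RIGX{\P}$. For $k \ge 3$, pick any $\p' \in \rectY{\p}{\q} \cap \P \setminus \{\p,\q\}$; in general position $\p'$ is in strict dominance or anti-dominance relation with each of $\p$ and $\q$, in the same orientation as the pair $(\p,\q)$, so $\rectY{\p}{\p'}$ and $\rectY{\p'}{\q}$ are strict sub-rectangles of $\rectY{\p}{\q}$, each dropping at least one of $\p,\q$ from its interior. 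Thus $k(\p,\p'),\ k(\p',\q) < k(\p,\q)$, and the inductive hypothesis supplies a path $\p \leadsto \p' \leadsto \q$ in $\RIGX{\P}$.

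To lift this to $\BHX{\P}$, observe that every $\p'' \in \BHX{\P}$ lies in some $\rectY{\p}{\q} \in \Rects$, and through this (convex) rectangle $\p''$ is in the same connected component of $\BHX{\P}$ as the corners $\p,\q \in \P$. Any two points of $\P$ are connected by a path in $\RIGX{\P}$ by the previous step, and consecutive edges along this path share a vertex of $\P$; the associated rectangles of $\Rects$ therefore share that vertex and lie in the same component of $\BHX{\P}$. Consequently $\BHX{\P}$ has a single component. The step I anticipate as the main obstacle is the inductive argument on $\RIGX{\P}$: one must verify carefully that, in general position, any interior point of $\rectY{\p}{\q}$ genuinely splits it into two strict sub-rectangles of smaller cardinality, which rests on the fact that $\p'$ inherits the correct dominance/anti-dominance orientation with both $\p$ and $\q$.
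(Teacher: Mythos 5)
Your axis-convexity argument takes essentially the same route as the paper's: both reduce to \lemref{c:rect} by showing that any putative gap point $\p'$ on $\Line$ between two points of $\BHX{\P}$ has a point of $\P$ in each of its four closed quadrants. The paper phrases this by contradiction (pick $\pc$ on the segment outside both witnessing rectangles, locate the four defining corners in four quadrants), whereas you argue directly, but the underlying idea is identical. One small imprecision in your write-up: the claim that ``in general position [the] two $\P$-corners lie strictly on opposite sides of $\Line$'' is not actually guaranteed, because $\Line$ is an arbitrary horizontal line and general position of $\P$ does not prevent $\Line$ from passing through a corner of $\rect_1$. The conclusion still holds, since $\QuadX{\p'}$ consists of \emph{closed} quadrants, so a corner on $\Line$ to the left of $\p'$ lies in both the NW and SW quadrants; it would be cleaner to say this explicitly rather than appeal to general position.

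Where you genuinely diverge from (and improve on) the paper is the connectedness claim. The paper's appendix proof of \lemref{axis:convex} addresses only horizontal/vertical convexity and never argues connectedness; in the paper this is effectively deferred to \lemref{b:h:chains}, which characterizes $\BHX{\P}$ as the region bounded by the four maxima/minima staircases and hence is manifestly connected. Your argument is self-contained and more elementary: you prove that $\RIGX{\P}$ is connected by induction on $\cardin{\rectY{\p}{\q}\cap\P}$ -- a standard ``Delaunay-refinement'' argument that any interior witness $\p'$ inherits the dominance orientation of $(\p,\q)$ and splits $\rectY{\p}{\q}$ into two strictly smaller empty-candidate rectangles -- and then lift this to $\BHX{\P}$ via the observation that consecutive rectangles along a path in $\RIGX{\P}$ share a common corner point of $\P$. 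Both the induction and the lifting step are correct. This buys you a direct proof of a claim the paper leaves implicit, at the cost of a short graph-theoretic lemma; the paper's eventual route via \lemref{b:h:chains} is more structural and also yields the explicit boundary description.
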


\begin{proof:e:e}{{\Xlemref{axis:convex}}}{axis:convex}{\clearpage}
    We prove here that $\BHX{\P}$ is vertically convex, as the
    horizontal case is similar.  Let $\p,\q \in \BHX{\P}$ be two
    points on some vertical line. If $\p,\q$ are both contained in
    some rectangle of $\RIGX{\P}$ then we are done. Otherwise, we have
    two rectangles $\rectA, \rectB \in \RIGX{\P}$, such that
    $\p\in \rectA$ and $\q\in \rectB$, see \figref{up:down}.

    \begin{figure}[h!]
        \centering%
        \includegraphics{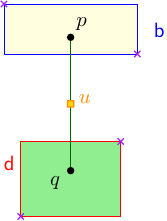}%
        \caption{}
        \figlab{up:down}
    \end{figure}

    Let $\pc$ be a point on the interior of the segment $\p\q$ that is
    outside $\rectA$ and $\rectB$. Without loss of generality, assume
    $\pc$ is the origin $(0,0)$. Since $\pc$ has $\rectA$ and $\rectB$
    above and below it, it must be that the four points defining them
    in $\P$ are in the four different quadrants of $\pc$.

    If one of the points defining $\rectA$ and $\rectB$ has an
    $x$-value of $0$ we can assume that this point is in the quadrant
    not containing its counterpart for creating the appropriate
    rectangle.  \lemref{c:rect} implies that $\pc$ is contained in a
    rectangle of $\RIGX{\P}$, and thus $\pc \in \BHX{\P}$. A
    contradiction.
\end{proof:e:e}

\begin{defn}
    The \emphi{$\domby$-maxima shadow}, denoted by
    $\shadowY{\domby}{\P}$, is the (closed) region in the plane of all
    points that dominates points in the $\domby$-maxima of $\P$. The
    \emphw{$\domby$-minima shadow} \emphw{$\adom$-maxima shadow}, and
    \emphw{$\domby$-minima shadow}, denoted respectively by
    $\shadowY{\Ndomby}{\P}$, $\shadowY{\adom}{\P}$, and
    ($\shadowY{\Nadom}{\P}$, are defined similarly.  See
    \figref{shadow}.
\end{defn}

Observe that the shadow regions can be computed readily from the
respective maxima/minima in linear time.

\begin{figure}[h]
    \centering%
    \includegraphics{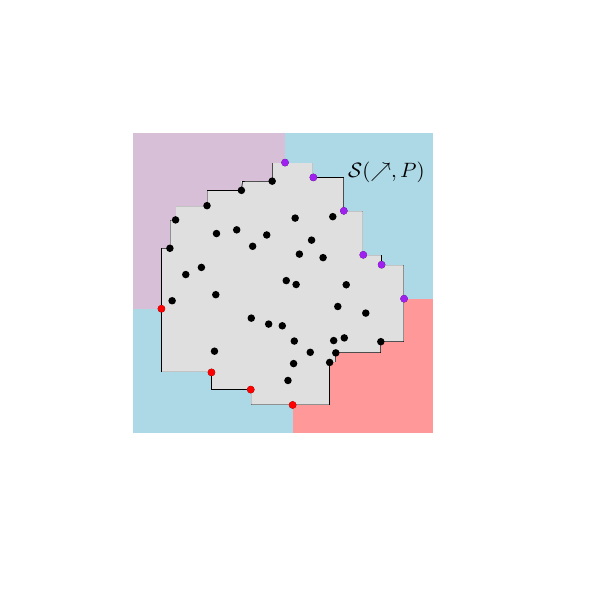}
    \caption{}
    \figlab{shadow}
\end{figure}

\begin{lemma}\RefProofInAppendix{b:h:chains}
    \lemlab{b:h:chains}%
    Let
    $\ShadowC= \shadowY{\domby}{\P} \cup \shadowY{\Ndomby}{\P} \cup
    \shadowY{\adom}{\P} \cup \shadowY{\Nadom}{\P}$. We have that
    $\BHX{\P} = \mathrm{closure}( \Re^2 \setminus \ShadowC)$.
\end{lemma}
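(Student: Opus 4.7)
The plan is to prove the two containments separately. First I would establish the pointwise characterization that $\p \in \interX{\shadowY{\domby}{\P}}$ iff some $\domby$-maxima is strictly SW of $\p$ (the interior of a union of closed NE quadrants is the union of their open NE quadrants), and the three analogous characterizations for $\shadowY{\Ndomby}{\P}$, $\shadowY{\adom}{\P}$, $\shadowY{\Nadom}{\P}$ (witness in strict NE, strict SE, strict NW respectively). Consequently $\p \notin \interX{\ShadowC}$ iff none of the four staircases meets the corresponding forbidden strict quadrant of $\p$. Since $\BHX{\P}$ is closed and the identity $\mathrm{closure}(\Re^2\setminus\ShadowC) = \Re^2\setminus\interX{\ShadowC}$ holds, the goal reduces to $\BHX{\P} = \Re^2\setminus\interX{\ShadowC}$.

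For the inclusion $\BHX{\P} \subseteq \Re^2\setminus\interX{\ShadowC}$ (equivalently, $\BHX{\P} \cap \interX{\ShadowC}=\emptyset$), I would use a corner argument. Suppose for contradiction $\p \in \rect \in \Rects$ and $\p \in \interX{\shadowY{\domby}{\P}}$, so some $\m \in \maximaX{\P}$ has $\m.x<\p.x$ and $\m.y<\p.y$. The NE corner $u$ of $\rect$ lies in $\P$ and satisfies $u.x\geq\p.x>\m.x$ and $u.y\geq\p.y>\m.y$, so $u$ dominates $\m$, contradicting $\m \in \maximaX{\P}$. The SW, SE, NW corners handle $\shadowY{\Ndomby}{\P}$, $\shadowY{\adom}{\P}$, $\shadowY{\Nadom}{\P}$ symmetrically.

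For the reverse inclusion I would argue the contrapositive: if $\p\notin\BHX{\P}$ then $\p\in\interX{\ShadowC}$. The case $\p\in\P$ is immediate, so assume $\p\notin\P$. By \lemref{c:rect} (taking the contrapositive) at least one closed quadrant of $\p$ contains no point of $\P$, and I would split into the following cases. If two \emph{adjacent} closed quadrants are empty, WLOG both upper ones (NE and NW), then $\p.y$ strictly exceeds every $y$-coordinate of $\P$; the topmost point $T$ of $\P$ lies on both $\maximaX{\P}$ (as the leftmost maximum) and $\amaximaX{\P}$ (as the topmost $\adom$-maximum), and depending on whether $T.x<\p.x$ or $T.x>\p.x$ it lies strictly SW or strictly SE of $\p$, placing $\p$ in $\interX{\shadowY{\domby}{\P}}$ or $\interX{\shadowY{\adom}{\P}}$ respectively; the three other adjacent-pair cases are symmetric. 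If only one quadrant or two opposite quadrants are empty, at least one of the diagonal pairs (NE-SW or NW-SE) has both closed quadrants nonempty, so I would run a shrinking argument in that orientation: starting from any two witness points, whenever an interior $\P$-point lies in a ``matching'' quadrant one replaces the corresponding corner; any unshrinkable blocker $q$ (forced to lie in one of the two ``wrong'' quadrants of $\p$) will, via a chain of dominations or anti-dominations within $\P$ whose intermediate points are geometrically trapped by the unshrinkability constraints, lead to a $\domby$- or $\adom$- extremum of $\P$ sitting in a forbidden quadrant of $\p$, again yielding $\p\in\interX{\ShadowC}$.

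The hard part will be this last chain argument for unshrinkable blockers: one has to track that an unshrinkable blocker $q$ forces every point of $\P$ in certain quadrants of $\p$ to satisfy coordinate bounds relative to $q$, and then verify that the appropriate staircase point reached by the chain (following dominators of $q$ for a $\domby$-witness, or anti-dominators for an $\adom$-witness) cannot escape the forbidden quadrant of $\p$ without producing a different witness earlier. I expect the bookkeeping to be manageable but tedious, and general position (so each $\P$-point lies in a unique open quadrant of $\p$) will be needed to keep the strict/non-strict inequalities consistent with the closedness of $\ShadowC$.
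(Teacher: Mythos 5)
Your forward inclusion ($\BHX{\P}\subseteq \Re^2\setminus\interX{\ShadowC}$) is correct and is essentially the paper's argument: the extreme corner of any rectangle of $\Rects$ containing the query point would dominate the witnessing maxima point, contradicting its maximality. The reduction to $\BHX{\P}=\Re^2\setminus\interX{\ShadowC}$ via closedness of $\BHX{\P}$ is also fine, and your ``two adjacent quadrants empty'' case is handled correctly.

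The gap is in the remaining case (``one quadrant or two opposite quadrants empty''). The shrinking argument you propose is never actually carried out: you describe an ``unshrinkable blocker'' and a ``chain of dominations or anti-dominations whose intermediate points are geometrically trapped,'' and then explicitly concede that you have not verified the bookkeeping. As written, this is a plan, not a proof, and it is by no means obvious that it closes --- for instance, after one corner-swap the new corner need not itself be on any extremal staircase, and a blocker in the remaining nonempty ``wrong'' quadrant can itself be shrunk past, so the termination and the eventual placement of a staircase extremum in a forbidden quadrant both need a real argument. Moreover, the shrinking is aimed at the \emph{nonempty} diagonal, which is the wrong handle: the empty quadrant is the thing to exploit. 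The paper's route is short and direct: proving $\Re^2\setminus\interX{\ShadowC}\subseteq\BHX{\P}$, take $\q$ not in any shadow interior with (WLOG) empty NE quadrant. Then $\q$ is not dominated by any point of $\P$, so $\q$ lies on the $\domby$-maxima of $\P\cup\{\q\}$; and since $\q$ does not strictly dominate any point of $\maximaX{\P}$ (it is not in $\interX{\shadowY{\domby}{\P}}$) and is not beyond either end of the staircase (that would empty a second, adjacent quadrant), $\q$ sits strictly between two consecutive points $\p_1,\p_2\in\maximaX{\P}$, whence $\q\in\rectY{\p_1}{\p_2}\in\Rects$ and $\q\in\BHX{\P}$. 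This replaces your entire shrinking-and-chain machinery with the single observation that consecutive maxima span an empty rectangle. I'd recommend dropping the shrinking argument and using the empty quadrant directly.
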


\begin{proof:e}{\Xlemref{b:h:chains}}{b:h:chains}
    For simplicity of exposition we consider only points that are in
    the interior of the two sets. It is easy to verify that the
    argument can be modified to handle the boundary points.

    Consider a point $\q$ in the interior of
    $\shadowY{\domby}{\P} \subseteq \Re^2 \setminus C$, where
    $C = \mathrm{closure}( \Re^2 \setminus \ShadowC)$. The point $\q$
    dominates a point of $\p$ that belongs to the maxima of $\P$. It
    is straightforward to verify that any rectangle of
    $\Rects= \RectsInfX{\P}$ that contains $\q$, must also contain
    $\p$, which contradicts $\p$ being in the maxima. We conclude that
    $\q \in \Re^2 \setminus \BHX{\P}$.  Applying the same argument to
    the other three shadows, readily implies that
    $\Re^2 \setminus C \subseteq \Re^2 \setminus \BHX{\P}$, Namely,
    $\BHX{\P} \subseteq C$.

    Consider a point $\q \in \interX{C}$.  If the four quadrants of
    $\q$ are all non-empty of points of $\P$ then, by \lemref{c:rect},
    $\q \in \BHX{\P}$.  So assume that the top-right quadrant of $\q$
    is empty of points of $\P$.  This implies that $\q$ appears in the
    interior of the $\domby$-maxima of $\P \cup \{ \q \}$. In
    particular, there are two consecutive points
    $\p_1, \p_2 \in \maximaX{\P}$, such that
    $\rect = \rectY{\p_1}{\p_2} \in \Rects$, and $\q \in \rect$. This
    readily implies that $\q \in \BHX{\P}$.
\end{proof:e}

\begin{figure}
    \phantom{}\hfill%
    \includegraphics[page=6, width=0.40\linewidth]{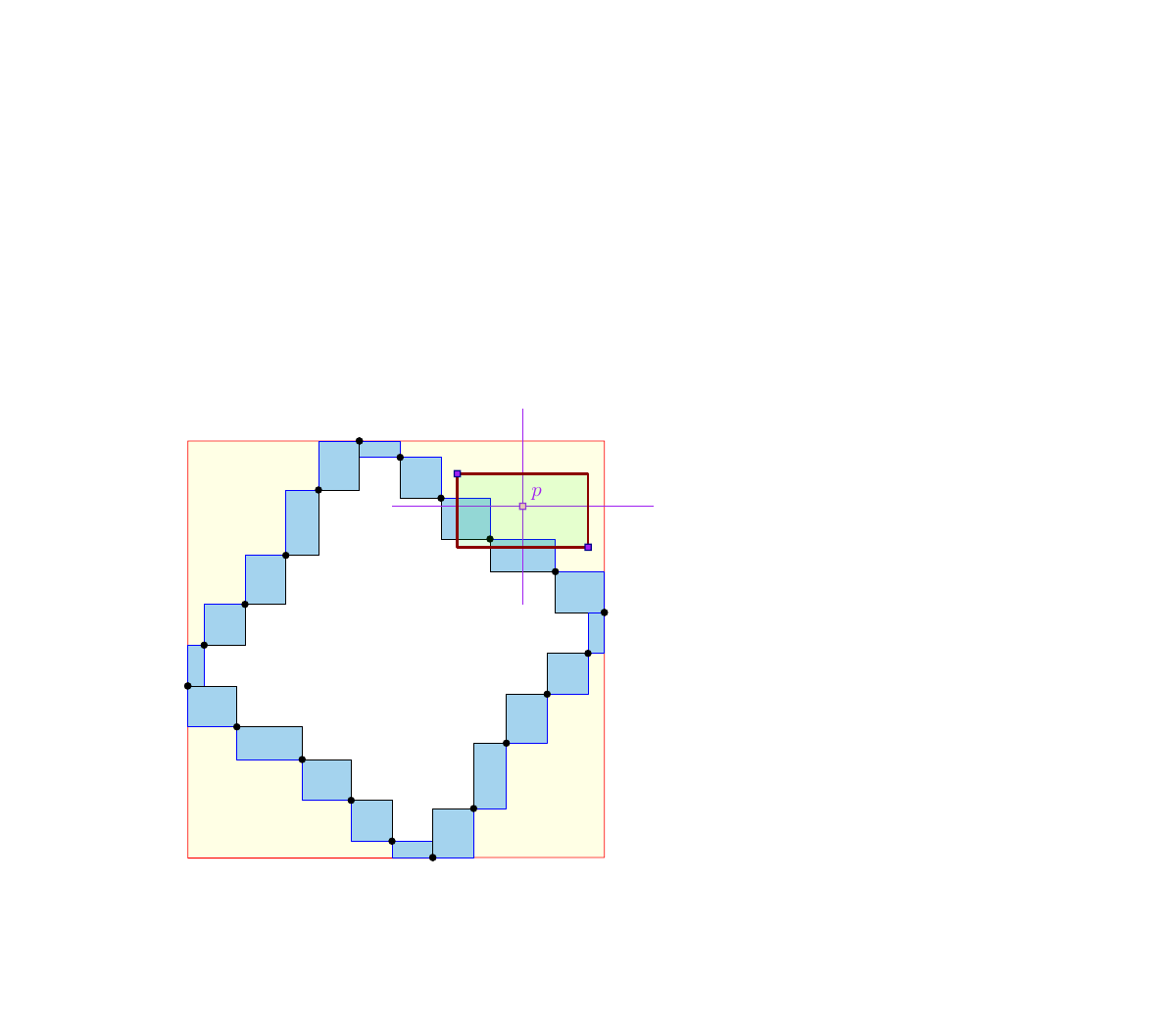}%
    \hfill%
    \includegraphics[page=7, width=0.40\linewidth]{figs/stair_case}%
    \hfill%
    \phantom{}%
    \caption{The box hull of $\P$ Is defined by the four
       chains/anti-chains that constitute
       $\maximaX{\P}, \domMN{\P},\amaximaX{\P},$ and
       $\adomMN{\P}$. Left: If a point $\p$ of the union is outside of
       this perimeter a point of $\P$ must exist outside it as well,
       violating one of the chains. Right: A point set, its
       rectilinear convex hull (dotted line) and its box hull (full
       line).}
    \figlab{b:h:proof}
\end{figure}

\NotSoCGVer{%
   \subsection{Interior-disjoint cover of the box hull}%
}
\begin{lemma}\RefProofInAppendix{disjoint}
    \lemlab{interior:disjoint}%
    Let $\P$ be a set of $n$ points in the plane. One can compute, in
    $O(n \log n)$ time, a set $\Rects$ of $O(n)$ interior-disjoint
    rectangles of $\RIGX{\P}$, such that $\cup \Rects = \BHX{\P}$.
\end{lemma}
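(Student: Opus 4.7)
The plan rests on the characterization of the boundary of $\BHX{\P}$ established in \lemref{b:h:chains} and \lemref{axis:convex}: the boundary is the union of the four extremal staircases $\maximaX{\P}$, $\amaximaX{\P}$, $\domMN{\P}$, $\adomMN{\P}$, and the enclosed region is a simply connected orthogonal polygon that is both horizontally and vertically convex. Thus $\BHX{\P}$ is an orthogonally convex polygon with $O(n)$ axis-parallel boundary segments, and any standard staircase decomposition of such a polygon will yield $O(n)$ interior-disjoint axis-parallel rectangles. The task reduces to producing such a decomposition in the stated time.

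First, I would compute the four extremal chains in $O(n\log n)$ time by standard maxima-style scans after sorting $\P$ by one axis. Together the four chains contain $O(n)$ points, and joined by the right-angle concave corners between consecutive points on the same staircase they explicitly realize the boundary of $\BHX{\P}$ as an orthogonal closed curve with $O(n)$ vertices (the four chains meeting pairwise at the topmost, leftmost, bottommost, and rightmost points of $\P$).

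Second, I would perform a horizontal sweep over the boundary. Sort the $O(n)$ boundary vertices by $y$-coordinate, an additional $O(n\log n)$ step. Consider any two consecutive $y$-events $y_{\text{up}} > y_{\text{dn}}$. By horizontal convexity of $\BHX{\P}$ (\lemref{axis:convex}), the slice of $\BHX{\P}$ at every intermediate height is a single horizontal segment; combined with the fact that the boundary is axis-parallel, this forces the left boundary in the strip $[y_{\text{dn}}, y_{\text{up}}]$ to be a single vertical segment at some $x_{\L}$ and the right boundary to be a single vertical segment at some $x_{\R}$. Hence the restriction of $\BHX{\P}$ to this strip is exactly the axis-aligned rectangle $[x_{\L}, x_{\R}] \times [y_{\text{dn}}, y_{\text{up}}]$. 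The collection of these $O(n)$ strip rectangles is by construction pairwise interior-disjoint, and its union is $\BHX{\P}$.

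The sweep runs in $O(n)$ time after the initial sorting, so the total cost is $O(n\log n)$. The only mildly delicate point is verifying that the $y$-event list truly captures every change in the left/right $x$-extent, so that the ``fixed $x_{\L}$ and $x_{\R}$'' claim inside each strip is correct. This follows because each horizontal segment of the boundary has both endpoints at right-angle corners whose $y$-coordinates coincide with adjacent chain-point $y$-coordinates; thus every height at which the left or right boundary can shift is already an event, and no rectangle in the sweep is split or misaligned.
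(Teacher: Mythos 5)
Your construction does produce, in $O(n\log n)$ time, $O(n)$ interior-disjoint axis-parallel rectangles whose union is $\BHX{\P}$ — the strip argument is sound given \lemref{axis:convex} and \lemref{b:h:chains}, since within a strip between consecutive boundary-vertex $y$-coordinates the left and right extents are constant. But it proves a weaker statement than the lemma asks for. The lemma requires rectangles \emph{of} $\RIGX{\P}$: each output rectangle must be (a piece of) an actual empty rectangle $\rectY{\p}{\q}$ with $\p\q$ an edge of the graph. Your horizontal strips are not. Take the two-diagonals example of \figref{two:diags}: the box hull is the full bounding box, so a mid-height strip in your sweep spans the entire width, yet no single empty rectangle $\rectY{\p}{\q}$ spans that width (the only candidate is the full bounding box, which contains all the other points). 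So the strip is covered only by the union of several RIG rectangles, and cannot be labeled with a single witness. This also breaks the downstream application announced in the introduction — given a point of the union, return a rectangle of $\RIGX{\P}$ containing it — which is precisely what the decomposition is for.

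The paper's proof instead builds the pieces incrementally: sweep the points left to right while maintaining the $\domby$-maxima and $\adom$-minima of the points seen so far; when a new point $\p$ arrives, find the staircase points $\p_1,\ldots,\p_m$ forming RIG edges with $\p$, and emit $\rectY{\p}{\p_1}$ together with the differences $\rectY{\p}{\p_i}\setminus\rectY{\p}{\p_{i-1}}$ (each of which is a rectangle contained in the RIG rectangle $\rectY{\p}{\p_i}$, since the two rectangles share the corner $\p$). Each $\p_i$ is charged $O(1)$ times, giving $O(n)$ pieces and $O(n\log n)$ time, and every piece carries a witnessing edge of $\RIGX{\P}$. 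If you want to salvage your approach, you would have to further subdivide each strip into maximal sub-intervals each covered by a single RIG rectangle and argue the total count stays $O(n)$ — which is not obvious and is essentially where the real work lies.
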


\begin{proof:e}{\Xlemref{interior:disjoint}}{disjoint}
    Consider the following algorithm for constructing a set $\Rects'$
    of disjoint rectangles given a set $\P$ of points. First, sort
    $\P$ by increasing $x$-values. Given the sorted set, consider the
    points according to their ordering while maintaining two sorted
    sets, $C$ and $C'$, that represent the $\domby$-maxima
    and $\adom$-minima of the points seen so far, i.e. the top right
    and bottom right extremal staircases.  Whenever a new point $\p$ is
    inserted we can check in $O(\log n)$ time the set of points
    $\P'=\{\p_1,...,\p_m\}$, sorted left to right, in $C$ or $C'$
    that, together with $\p$, define edges in $\RIGX{\P}$. Notice that
    $\p$ must be either below $C$ or above $C'$ depending on whether
    it is lower or higher than the last considered point, and thus one
    of $\P'\subseteq C$ or $\P'\subseteq C'$ must be true. We now add
    $\rectY{\p}{\p_1}$ to $\Rects$, and for every $i \in \{2,...,m\}$
    we add the rectangle
    $\rectY{\p}{\p_i} \setminus \rectY{\p}{\p_{i-1}}$. Notice that due
    to the choice of points, this set of rectangles is pairwise
    disjoint, and is also disjoint from every other rectangle add to
    $\Rects'$ thus far, as all previously added rectangles must reside
    below $C$, above $C'$, and between consecutive points of $C$ and
    $C'$. We now update $C$ and $C'$ using $\p$, i.e. replace
    $\p_1,...,\p_m$ with $\p$, or concatenate $\p$ to the end of the
    chain, depending on $\p$'s location with respect to the chain.

    We are now left with proving the claimed properties of $\Rects$
    other than disjointedness of interiors which has already been
    addressed. The size of $\Rects$ is straightforward to show as
    whenever a rectangle $\rectY{\p}{\p_i}$ is added we either have
    that $\p_i$ is the last point in $C$ and $C'$ (i.e. the last point
    introduced before $\p$), and after adding $\p$ we still have
    $\p_i$ on one of the chains, or $\p_i$ was only a member of one of
    $C,C'$, and was deleted when $\p$ was inserted. The first option
    can only happen once, when $\p_i$ is the last point in $C$ and
    $C'$, and the second option can also happen only once as it
    results in the removal of $\p_i$ from the data structure. This
    immediately implies an $O(n)$ size bound for $\Rects$.

    The runtime of the algorithm is also fairly simple, as an addition
    of a point to the chains requires a constant number of comparisons
    and binary searches in the sets in order to find $\p_1,...,\p_m$,
    after which the rectangles, which are computed in constant time
    each, are added. Since the number of rectangles is linear we get
    an $O(n\log n)$ runtime.

    Now, let $\p \in \bigcup\Rects$, without loss of generality
    $\p = (0,0)$, and let $\q,\pc$ be the rightmost points with
    negative $x$-value and positive and negative $y$-values respectively.
    Without loss of generality $x(\pc) < x(\q)$. Denote the horizontal
    line through $\q$ by $\ell$. If all of the points of $\P$ to the
    right of $\p$ are above $\ell$, then $\p\notin \bigcup\Rects$, so
    let $\pd$ be the leftmost such point. Now, depending on whether
    $y(\pd) > 0$ or $y(\pd) < 0$ we have that when $\pd$ was added to
    the data structure, a rectangle covering $\p$ was introduced to
    $\Rects$ due to $\pc$ or $\q$ respectively. This proves that
    $\BHX{\P} \subseteq \bigcup\Rects$, and since
    $\bigcup\Rects \subseteq \BHX{\P}$ is evident from the algorithm we get
    $\BHX{\P} = \bigcup\Rects$ as required.
\end{proof:e}

\begin{remark}
    Consider a point set $\P$, informally, the convex-hull of $\P$ can
    be defined by the repeated ``closure'' of $\P$ under the process
    of adding segments to it, that connect two points of $\P$ that do
    not have any point of $\P$ in its interior. Similarly, the box
    hull can be defined analogously -- the closure of the process of
    adding corners-induced ``empty'' rectangles to $\P$.
\end{remark}

\BibTexMode{%
   \bibliographystyle{plainurl}%
   \bibliography{rect_delaunay}%
}%
\BibLatexMode{\printbibliography}

\appendix%
\section{Chernoff's inequality}

\begin{theorem}[\cite{mr-ra-95}]
    \thmlab{chernoff}%
    Let $X_1,...,X_n$ be $n$ independent random variables such that
    $X_i\in[0,1]$. Denote $Y=\sum_{i=1}^n X_i$, and $\mu = \Ex{Y}$.
    Then, we have
    \begin{compactenumA}
        \smallskip%
        \item $\Prob{Y \geq (1+\delta)\mu} \leq \exp(-\delta^2\mu/4)$
        for $\delta \in (0,4)$.
        \smallskip%
        \item $\Prob{Y \leq (1-\delta)\mu} \leq \exp(-\delta^2\mu/2)$,
        for $\delta \in (0,1)$.
    \end{compactenumA}
\end{theorem}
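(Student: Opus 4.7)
The plan is to invoke the standard exponential moment method (Chernoff's trick). For the upper tail, fix a parameter $t > 0$ and apply Markov's inequality to $e^{tY}$:
\begin{equation*}
    \Prob{Y \geq (1+\delta)\mu}
    = \Prob{e^{tY} \geq e^{t(1+\delta)\mu}}
    \leq \frac{\Ex{e^{tY}}}{e^{t(1+\delta)\mu}}.
\end{equation*}
Since each $X_i$ takes values in $[0,1]$, convexity of $x \mapsto e^{tx}$ on $[0,1]$ yields $e^{tX_i} \leq 1 + (e^t-1)X_i$. Taking expectations and using $1+x \leq e^x$ gives $\Ex{e^{tX_i}} \leq \exp\pth{(e^t-1)\Ex{X_i}}$, and independence of the $X_i$'s then yields the moment generating function bound $\Ex{e^{tY}} \leq \exp\pth{(e^t-1)\mu}$.

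Next I would optimize over $t$. Setting $t = \ln(1+\delta)$ (valid since $\delta > 0$), the exponent becomes $\delta\mu - (1+\delta)\ln(1+\delta)\mu$, giving the classical intermediate bound
\begin{equation*}
    \Prob{Y \geq (1+\delta)\mu}
    \leq \pth{\frac{e^\delta}{(1+\delta)^{1+\delta}}}^{\!\mu}
    = \exp\pth{-\mu \bigl((1+\delta)\ln(1+\delta) - \delta\bigr)}.
\end{equation*}
To reach the stated form $\exp(-\delta^2\mu/4)$, the remaining purely analytic task is to verify the elementary inequality $(1+\delta)\ln(1+\delta) - \delta \geq \delta^2/4$ on $(0,4)$. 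This follows from a short calculus argument: defining $g(\delta) = (1+\delta)\ln(1+\delta) - \delta - \delta^2/4$, one computes $g(0) = 0$ and checks that $g'(\delta) = \ln(1+\delta) - \delta/2 \geq 0$ on $(0,4)$ (the derivative vanishes at $\delta = 0$, and $g''(\delta) = 1/(1+\delta) - 1/2$ is positive on $(0,1)$ and negative afterwards, so one verifies non-negativity at $\delta = 4$ directly, where $\ln 5 \geq 2$).

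For the lower tail, I would repeat the same argument with a parameter $t < 0$, using Markov on $e^{tY}$ against $e^{t(1-\delta)\mu}$. The same moment generating function bound applies, and optimizing at $t = \ln(1-\delta)$ gives the symmetric bound $\exp\pth{-\mu\bigl((1-\delta)\ln(1-\delta) + \delta\bigr)}$. The required inequality $(1-\delta)\ln(1-\delta) + \delta \geq \delta^2/2$ on $(0,1)$ is cleaner than its upper-tail counterpart: expanding as a power series,
\begin{equation*}
    (1-\delta)\ln(1-\delta) + \delta
    = \sum_{k \geq 2} \frac{\delta^k}{k(k-1)},
\end{equation*}
and since every term is non-negative for $\delta \in (0,1)$, the $k=2$ term alone yields the desired $\delta^2/2$. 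The main obstacle, if any, is purely mechanical: verifying the upper-tail numerical inequality across the full range $\delta \in (0,4)$ where the simple Taylor comparison of $\ln(1+\delta)$ to a single polynomial no longer suffices; otherwise the argument is the textbook Chernoff derivation.
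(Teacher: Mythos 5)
The paper offers no proof of this theorem at all --- it is quoted from Motwani and Raghavan \cite{mr-ra-95} --- so your exponential-moment derivation is exactly the standard argument the citation points to, and the overall structure (Markov on $e^{tY}$, the convexity bound $e^{tX_i}\leq 1+(e^t-1)X_i$ to handle $[0,1]$-valued rather than Bernoulli variables, optimization at $t=\ln(1\pm\delta)$, then the two elementary inequalities) is sound. The lower-tail half is completely correct: the power-series identity $(1-\delta)\ln(1-\delta)+\delta=\sum_{k\geq 2}\delta^k/(k(k-1))$ checks out and the $k=2$ term gives $\delta^2/2$.

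There is, however, a genuine error in your justification of the upper-tail inequality $(1+\delta)\ln(1+\delta)-\delta\geq\delta^2/4$ on $(0,4)$. You claim $g'(\delta)=\ln(1+\delta)-\delta/2\geq 0$ throughout $(0,4)$ and support this with ``$\ln 5\geq 2$'', but $\ln 5\approx 1.609<2$, so $g'(4)<0$; in fact $g'$ changes sign near $\delta\approx 2.51$, so $g$ is \emph{not} monotone on $(0,4)$ and ``$g(0)=0$ plus $g'\geq 0$'' does not close the argument. The conclusion is still true, and the fix is short: since $g''(\delta)=1/(1+\delta)-1/2$ shows $g'$ is increasing then decreasing with $g'(0)=0$, the function $g$ is increasing up to the unique positive zero of $g'$ and decreasing afterwards, so its minimum over $[0,4]$ is attained at an endpoint; $g(0)=0$ and $g(4)=5\ln 5-8>0$ (this needs only $\ln 5>8/5$, which holds since $e^{1.6}<5$), and the margin is thin ($g(4)\approx 0.047$), which is precisely why the constant $1/4$ cannot be pushed much past $\delta=4$. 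With that one repair the proof is complete.
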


\section{Additional results}

Here, we present some additional minor results.

\subsection{Random points}
\apndlab{random:points}

The following claim is well known \cite{m-cgitr-94}.
\begin{lemma}
    \lemlab{m:lower:bound}%
    Let $\P$ be a set of $n$ points picked uniformly, independently
    and randomly from $[0,1]^2$. Let $Y$ be the size of the maxima
    $\maximaX{\P}$. We have that $\Prob{ Y > 5 \ln n} \leq 1/n^2$ and
    $\Prob{Y < (\ln n)/16 } \leq 1/n^{2/5}$.
\end{lemma}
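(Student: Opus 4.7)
The plan is to reduce $Y=\cardin{\maximaX{\P}}$ to a sum of independent Bernoulli indicators and then invoke the Chernoff inequalities of \thmref{chernoff}.

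First, I would sort $\P$ in increasing $x$-coordinate order as $\p_1,\ldots,\p_n$. Then $\p_i\in\maximaX{\P}$ exactly when no later point has a larger $y$-coordinate, i.e.\ when $y(\p_i)=\max_{j\geq i}y(\p_j)$. Let $X_i$ be the indicator of this event, so $Y=\sum_{i=1}^n X_i$. Because the points are drawn i.i.d.\ from a continuous distribution, almost surely the ranks of $y(\p_1),\ldots,y(\p_n)$ form a uniformly random permutation of $\IRX{n}$ that is independent of the $x$-sort. The rank of $y(\p_i)$ within the suffix $\{y(\p_i),\ldots,y(\p_n)\}$ is uniform on $\IRX{n-i+1}$, so $\Prob{X_i=1}=1/(n-i+1)$ and $\mu:=\Ex{Y}=\sum_{i=1}^n 1/(n-i+1)=H_n$; in particular $\ln n<\mu\leq 1+\ln n$.

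The crucial step is that $X_1,\ldots,X_n$ are \emph{mutually} independent. This is the classical ``records in a random permutation'' fact: a uniformly random permutation can be realized by inserting its values one at a time from position $n$ down to position $1$, and the event that the newly inserted value is the maximum of the current suffix occurs with probability exactly $1/k$ (for a suffix of length $k$), independently of all earlier insertions, since only relative ranks within each suffix matter. Thus $Y$ is a sum of $n$ independent Bernoullis and \thmref{chernoff} applies.

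For the upper tail I would set $(1+\delta)\mu=5\ln n$. Since $\mu>\ln n$ we have $\delta<4$, so part (A) of \thmref{chernoff} applies; as $n\to\infty$, $\mu/\ln n\to 1$ and $\delta\to 4$, so $\delta^2\mu/4\to 4\ln n$, which is at least $2\ln n$ for all sufficiently large $n$. Hence $\Prob{Y>5\ln n}\leq\exp\pth{-2\ln n}=1/n^2$. For the lower tail I would set $(1-\delta)\mu=(\ln n)/16$; since $\mu\geq\ln n$ this forces $\delta\geq 15/16$, and part (B) gives $\Prob{Y<(\ln n)/16}\leq\exp\pth{-(15/16)^2\ln n/2}=n^{-(15/16)^2/2}$, and since $(15/16)^2/2=225/512>2/5$ the claimed bound follows. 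The only non-routine step is the mutual independence of the $X_i$; everything else is a direct harmonic-number estimate followed by a mechanical substitution into \thmref{chernoff}.
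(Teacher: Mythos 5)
Your proof is correct and follows essentially the same route as the paper: decompose $Y$ into record indicators with $\Prob{X_i=1}=1/k$ for a suffix of length $k$, use their mutual independence (which the paper cites rather than re-derives) to get $\mu=H_n$, and plug into \thmref{chernoff} with the same choices of $\delta$. The only differences are cosmetic (direction of the $x$-sort and a slightly looser constant in the upper-tail exponent, $1/n^2$ versus the paper's $1/n^4$, both of which suffice).
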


\begin{proof}
    Sorting the points from right to left in decreasing $x$-coordinate
    values, the $i$\th point has probability $1/i$ to be on the maxima
    of the first $i$ points, and these events are (somewhat
    surprisingly) independent \cite[Lemma 3.3]{hr-crwmv-15}.  Thus,
    $\mu = \Ex{Y} = \sum_{i=1}^n 1/i$. In particular
    $\ln n \leq \mu \leq \ln n + 1$. By \thmrefY{chernoff}{Chernoff's
       inequality}, \thmref{chernoff}, we have
    \begin{equation*}
        \Prob{Y > 5 \ln n }
        =
        \Prob{Y > (1+4)\mu}
        \leq
        \exp( - \mu 4^2/4)
        \leq
        1/n^4.
    \end{equation*}
    Thus, the variable $Y$ is bounded by $5\ln n$ with high
    probability. Similarly, for $\delta =15/16$, by
    \thmrefY{chernoff}{Chernoff's inequality}, we have
    \begin{align*}
      \Prob{Y \leq (\ln n)/16}
      \leq
      \Prob{Y \leq (1-\delta)\mu}
      \leq%
      \exp( -\delta^2 \mu/2)
      =
      \exp\Bigl( -\frac{225}{512} \mu\Bigr)
      \leq%
      1/n^{2/5}.%
      \SoCGVer{\tag*\qedhere}%
    \end{align*}
\end{proof}

\begin{lemma}
    \lemlab{h:p:maxima}%
    Let $\P$ be a set of points picked uniformly and independently
    from $[0,1]^2$. The number of edges in $\RIGraph = \RIGX{\P}$ is
    $\Theta(n\log n)$ with high probability $1 - O(1/n^{3/5})$.
\end{lemma}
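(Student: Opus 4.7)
The plan is to establish $|E| = O(n\log n)$ and $|E| = \Omega(n\log n)$ separately, each with the required failure probability, where $E$ denotes the edge set of $\RIGraph$.

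Upper bound. For every $p \in \P$, the degree of $p$ in $\RIGraph$ equals the sum of the sizes of the four ``quadrant maximas'' of $\P \setminus \{p\}$ --- the maxima (in the appropriate direction) of the points lying in each open quadrant around $p$. Conditional on the location of $p$, the other $n-1$ points are i.i.d.~uniform on $[0,1]^2$, and conditional further on a quadrant containing $n'$ of them, those $n'$ points are uniform in the quadrant. Applying \lemref{m:lower:bound} to each quadrant, the quadrant maxima has size $O(\log n)$ with probability at least $1 - 1/n^{\Omega(1)}$ whenever the quadrant contains at least $\sqrt{n}$ points, and trivially has size at most $\sqrt{n}$ otherwise. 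The ``boundary'' region of positions in which some quadrant has fewer than $\sqrt{n}$ points has area $O((\log n)/\sqrt{n})$, so only $O(\sqrt{n}\log n)$ points land there in expectation, contributing at most $O(\sqrt{n}\log n) \cdot \sqrt{n} = O(n\log n)$ to the sum of degrees. For all remaining points a union bound gives $\deg(p) = O(\log n)$ uniformly, so $|E| = O(n\log n)$ with probability $1 - O(1/n)$, well within the required bound.

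Lower bound. Let $S \subseteq \P$ be the set of ``central'' points with both coordinates in $[1/3, 2/3]$; by Chernoff, $|S| \geq n/10$ with exponentially small failure probability. For each central $p$, each of its four quadrants has area at least $1/9$ and hence contains at least $n/20$ of the remaining points with exponentially small failure probability. Conditioning on this, the lower-tail bound of \lemref{m:lower:bound} gives that each quadrant maxima of $p$ has size at least $(\ln n)/32$ with probability at least $1 - c/n^{2/5}$. Call $p \in S$ \emph{bad} if some quadrant maxima of $p$ fails this bound; then $\Prob{p \text{ is bad}} = O(1/n^{2/5})$, and so the number $B$ of bad central points satisfies $\Ex{B} = O(n^{3/5})$.

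The main obstacle is to upgrade this per-point bound to a global bound with failure probability $O(1/n^{3/5})$: a direct Markov argument on $B$ yields only $O(1/n^{2/5})$. The plan is to apply Chebyshev. For two central points $p, q$ at positions sufficiently separated, their bad events depend on nearly disjoint subsets of the remaining points (those landing in $p$'s quadrants versus those in $q$'s quadrants), so the events are approximately pairwise independent, yielding $\mathrm{Var}(B) = O(\Ex{B})$. Chebyshev then gives $\Prob{B \geq 2\Ex{B}} = O(1/\Ex{B}) = O(1/n^{3/5})$. Thus with probability $1 - O(1/n^{3/5})$, at least $|S| - B \geq n/10 - O(n^{3/5}) = \Omega(n)$ central points have all four quadrant maximas of size at least $(\ln n)/32$, giving $|E| \geq \tfrac{1}{2}\sum_{p} \deg(p) = \Omega(n\log n)$ as required.
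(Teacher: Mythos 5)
Your upper bound is essentially the paper's argument (per-point application of \lemref{m:lower:bound} to each quadrant plus a union bound), with an extra patch for sparsely populated quadrants that the paper glosses over; that part is fine. The lower bound, however, has a genuine gap at exactly the step you flag as ``the main obstacle.'' You define $\p$ to be bad if \emph{some} quadrant maxima is small, get $\Prob{\p \text{ bad}} = O(n^{-2/5})$, and then need Chebyshev with $\mathrm{Var}(B) = O(\Ex{B})$ to beat Markov. The claimed approximate pairwise independence does not hold: for two central points $\p,\q$, the four quadrants of $\p$ and the four quadrants of $\q$ partition essentially the same point set, and e.g.\ the north-east quadrants of $\p$ and $\q$ overlap in a region of constant area, so the corresponding staircase sizes are computed from heavily overlapping samples and are positively correlated. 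To make Chebyshev work you would need the average covariance over the $\Theta(n^2)$ pairs to be $O(n^{-7/5})$, i.e.\ far below the product $n^{-4/5}$ of the individual probabilities --- a quantitative decorrelation statement you neither prove nor have any mechanism for. As written, the argument only delivers failure probability $O(n^{-2/5})$.

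The gap disappears once you notice you are proving too much. For $\deg(\p) = \Omega(\log n)$ it suffices that \emph{one} of the four quadrant maximas is large, since the degree is the sum of the four staircase sizes. Conditioned on each quadrant of a central point containing at least $n/18$ points (which fails with exponentially small probability), the point sets in the four quadrants are disjoint and hence independent, so the probability that \emph{all four} maximas are smaller than $(\ln n')/16$ is at most $\bigl(O(n^{-2/5})\bigr)^4 = O(n^{-8/5})$. A plain union bound over the $O(n)$ central points then gives failure probability $O(n^{-3/5})$ with no second-moment argument at all. This is precisely the route the paper takes, and it is where the exponent $3/5$ in the statement comes from ($n \cdot n^{-8/5}$).
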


\begin{proof}
    Let $\p \in \P$ be an arbitrary point of the set. we can use
    \lemref{m:lower:bound} four times, once for every quadrant
    defined by $\p$ in order to get an upper bound. Consider, say the
    points in the quadrant dominated by $\p$, i.e.
    $\P' = \P \cap \Set{\q\in [0,1]^2\cap \P}{\q \domby \p}$. Since
    $|\P'|\leq n$ we have that $|\maximaX{\P'}| > 5\ln n$ with
    probability at most $1/n^2$, and by using the claim for all four
    quadrants (with $\maximaX{\P}$, $\domMN{\P}$, $\amaximaX{\P}$ , or
    $\adomMN{\P}$ depending on the quadrant) we get that the
    probability that $\p$ is incident to more than $20 \ln n$ edges is
    at most $4/n^2$. Using the union bound immediately implies a
    probability of at least $1-4/n$ that every points is incident to
    $O(\ln n)$ edges in $\RIGX{\P}$.

    We now partition the unit square into nine squares
    $\square_{i,j}=[i/3,(i+1)/3]\times[j/3,(j+1)/3]$ for
    $i,j\in \{0,1,2\}$. The expected number of points, in each of
    these subsquares, is $\mu = n/9$, and
    \begin{equation*}
        \Prob{\cardin{\square_{i,j} \cap \P} < n/18} \leq \exp(-n/8),
    \end{equation*}
    by Chernoff's inequality.

    We therefore get that the probability that one of the nine squares
    contains less than $n/18$ points is $O(\exp(-n/8))$. Now, let
    $\p \in \P$ be a point in the center subsquare, i.e.
    $\p \in \square_{1,1} = [1/3,2/3]\times[1/3,2/3]$. Each of the
    quadrants defined by $\p$ fully contains a single subsquare
    $\square_{i,j}$, and therefore at least $n/18$ points. Denote
    $n' = n/18$. Since the four quadrants are disjoint, we have by
    \lemref{m:lower:bound} that the probability that the sizes of all
    the maxima and minima chains, of $\p$, is smaller than
    $\ln n' / 16$ is smaller than
    \begin{equation*}
        (1/n'^{2/5})^4%
        =%
        (1/n')^{8/5}%
        =%
        (18/n)^{8/5}.
    \end{equation*}
    The probability that any point, in the middle square, has less
    than $\ln n' / 16$ neighbors is therefore at most
    $O\bigl(\exp(-n/8)\bigr) +(n/18) (18/n)^{8/5} = O(1/n^{3/5})$ as
    required.
\end{proof}

\begin{remark}
    The high probability bound, for the lower bound of
    \lemref{h:p:maxima}, is somewhat weak. It can be straightened by
    breaking the square into a constant size subsquares, argue that
    the \RIG behavior inside each subsquare can be
    analyzed``independently'', and apply the bound of the lemma to
    each part. Thus showing that the number of edges of $\RIGraph$
    complexity is $\Omega(n \log n)$ with probability
    $\geq 1 - 1/n^{O(1)}$.
\end{remark}

\subsection{The structure of \TPDF{$\RIGX{\P}$}{RIG(P)}}
\apndlab{structure}

While $\RIGX{\P}$ can have a quadratic number of edges, it is fairly
straightforward to see that it does not have $K_5$ as a subgraph, as
Dilworth's Theorem guarantees the existence of a $\domby$ or
$\adom$-chain of size three for any set of five points in general
position, but if say $\p \domby \q \domby \pc$ we have
$\q \in \rectY{\p}{\pc}$ which means $\p\pc\notin \RIGX{\P}$. In this
section we show that not only does $\RIGX{\P}$ not contain large
cliques, it also does not contain large \emphi{tricliques},
i.e. disjoint triplets $R,G,B\subseteq \P$ such that every pair of
points from different sets are connected by an edge in
$\RIGX{\P}$. This means that the graph owes its size to the existence
of large bicliques. In the following we analyze the structure of
$\RIGX{\P}$ as discussed above, and give bounds on its size in the
case that $\P$ is a set of uniformly distributed random points.

\subsubsection{Bicliques and tricliques in \TPDF{$\RIGX{\P}$}{RIG(P)}}

Formally, a \emphi{triclique} over disjoint sets $X,Y,Z$ is the graph
$(X \otimes Y) \cup (X \otimes Z) \cup (Y \otimes Z)$.

\begin{lemma}
    \lemlab{biclique-structure}%
    Let $\P\subseteq \Re^2$ be a set of $n$ points. If $\RIGX{\P}$
    contains $K_{k,k}$, where $k > 4$, as a subgraph, then either
    $\RIGAX{\P}$ or $\RIGDX{\P}$ contain the biclique $K_{k,k-4}$ as a
    subgraph, formed by two or three chains, one of which is of length
    $k$.
\end{lemma}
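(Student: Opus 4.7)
The plan is to show that one of $A, B$ is a single monotone chain of length $k$ (the main obstacle), then extract the $K_{k,k-4}$ sub-biclique by removing at most four ``bad'' vertices. The cornerstone is a local structural constraint: for any pair $a_1 \domby a_2$ in $A$, no $b \in B$ may lie in the open NE-quadrant of $a_2$ (else $a_2 \in \rectY{a_1}{b}$, contradicting the edge $a_1b$) or in the open SW-quadrant of $a_1$ (by the symmetric argument). An analogous statement holds for $\adom$-pairs in $A$ and for pairs in $B$. A pigeonhole argument then shows that every $a \in A$ is extremal in at least one of the four monotone directions: if $a$ were simultaneously $\domby$-interior (with $A$-points both SW and NE of it) and $\adom$-interior, then $B$ would be forbidden from all four open quadrants of $a$, contradicting $|B| \ge 1$. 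Hence $A \subseteq \maximaX{A} \cup \domMN{A} \cup \amaximaX{A} \cup \adomMN{A}$, a union of two $\adom$-chains and two $\domby$-chains; analogously for $B$.

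Upgrading this four-staircase covering to a single monotone chain of length $k$ in $A$ or $B$ is the main obstacle. The plan is to argue that if both $A$ and $B$ had non-trivial content across multiple staircases, the biclique constraints applied simultaneously on both sides would force a point of $B$ into a region forbidden by $A$'s structure. Assuming this upgrade, WLOG $A$ is an $\adom$-chain $a_1 \adom \cdots \adom a_k$. Iterating the local constraint shows that each of the middle $k-2$ elements of $A$ is $\domby$-related to every $b \in B$, giving a $K_{k-2,k}$ sub-biclique in $\RIGDX{\P}$. The global constraints further force $B$ to lie in $R_{NE} \cup R_{SW}$, where $R_{NE} = \{(x,y) : x > x(a_{k-1}),\ y > y(a_2)\}$ and $R_{SW} = \{(x,y) : x < x(a_2),\ y < y(a_{k-1})\}$; each of $B_{NE} = B \cap R_{NE}$ and $B_{SW} = B \cap R_{SW}$ is an $\adom$-chain.

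To extract $K_{k, k-4}$ including $A$'s endpoints $a_1, a_k$, I count the ``bad'' elements of $B$: those not $\domby$-related to $a_1$ or to $a_k$. In each of $B_{NE}, B_{SW}$ at most one element can fail to be $\domby$-related to $a_1$, since two failures $b_1 \adom b_2$ would give $b_1 \in \rectY{a_1}{b_2}$, blocking the edge $a_1 b_2$; symmetrically at most one fails for $a_k$. Removing these at most four bad elements leaves $k-4$ elements of $B$ with all $\domby$-edges to $A$. The resulting $K_{k, k-4}$ sub-biclique in $\RIGDX{\P}$ has the chain $A$ of length $k$ as its $k$-side, and the $(k-4)$-side decomposes into at most two $\adom$-chains inherited from $B_{NE}, B_{SW}$, for a total of 2 or 3 $\adom$-chains with one of length $k$. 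The case where $\RIGAX{\P}$ contains the sub-biclique is entirely symmetric.
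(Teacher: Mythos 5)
Your local exclusion constraints are correct, and the back end of your argument is sound: once one side of the biclique, say $A$, is known to be a single $\adom$-chain of length $k$, your derivation that every middle $a_i$ is $\domby$-related to all of $B$, that $B$ splits into the two $\adom$-chains $B_{NE}$ and $B_{SW}$, and that at most one point of each of these can be $\adom$-related to each of $a_1$ and $a_k$ (so at most four deletions suffice) all check out, and this is a cleaner extraction than the paper's. The four-staircase covering of each side via your pigeonhole observation is also a correct and genuinely different opening move from the paper, which instead starts from the longest monotone chain among all $2k$ points and partitions the plane into nine cells by the lines through its two endpoints.

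The problem is that the step you yourself label ``the main obstacle'' --- upgrading the four-staircase covering to the statement that $A$ or $B$ is a \emph{single} monotone chain of length $k$ --- is assumed, not proved. This is not a routine gap: it is the actual content of the lemma, and it is where the paper spends essentially its entire proof (showing that the cells $M_{1,1}$ and $M_{3,3}$ are empty, that $M_{2,2}$, $M_{2,1}$, $M_{3,2}$ can each hold at most one relevant point, and so on, until all but a bounded number of points are forced onto parallel anti-chains). Your local constraints alone do not deliver it: if $A$ contains one $\domby$-pair and one $\adom$-pair, $B$ is excluded from only four specific open quadrants, which does not immediately contradict $|B|=k$; one must genuinely use $k>4$ and the simultaneous constraints from both sides, exactly as the paper's case analysis does. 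Until that reduction is carried out, the proof is incomplete.
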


\begin{proof}
    Denote the biggest biclique in $\RIGX{\P}$ by
    $C =B \otimes R$. We will refer
    to the points in $B$ and $R$ as the red and blue points. Let
    $C_b= (b_1,...,b_m)$ be the longest $\domby$-chain\slash
    anti-chain in $\G$. We assume without loss of generality that
    $C_b$ is a $\domby$-antichain of blue points, and notice that since $k>4$ we have $m>2$ (Dilworth's Theorem).
    The axis parallel lines through $b_1$ and $b_m$ partition
    $\Re^2$ into 9 parts $M_{1,1},...,M_{3,3}$, where $M_{1,1}$ is
    above and to the left of $b_1$, and $M_{3,3}$ is below and to the
    right of $b_k$. See \figref{nine:way:partition} for an
    illustration.

    $M_{1,1}$ and $M_{3,3}$ cannot contain any point of $C$ since blue
    points would break the maximality assumption on $C_b$, and red
    points would not be connected by an edge to both $b_1$ and $b_m$.

    Denote the downward generalized chain of $C_b$ by $C_b'$. Assume
    without loss of generality that at least $\frac{|R|}{2}$ of the
    red points are in the downward generalized quadrant of $C_b$,
    i.e. below $C_b'$ and in the union of $M_{2,1}, M_{3,1}$, and
    $M_{3,2}$, as $M_{3,3}$ does not contain any red or blue point,
    and let $C_r = (r_1,...,r_l)$ be the longest anti-chain of red
    points in that generalized quadrant.

    We now notice that any red point below $C_b'$ that is not a part
    of $C_r$ either prevents a point on $C_r$ from connecting to a
    point on $C_b$, or cannot connect to a point of $C_b$ because of a
    point in $C_r$. More precisely, for any two points $\p, \q$ below
    $C_b'$, if $\p \domby \q$, then, for any $1\leq i \leq m$, it is
    not possible that $\rectY{\p}{b_i}$ does not contain points of
    $\P$ in their interior. We therefore get that for every such pair
    of points either $\p \adom \q$, or $\q \adom \p$, and thus all of
    the red points bellow $C_b'$ are in $C_r$, meaning
    $|C_r|\geq \frac{k}{2}$, and since $|C_b|\geq |C_r|$ by
    assumption, we get that $|C_b|\geq \frac{k}{2}$.

    Also, no point of $R$ can anti-dominate or be
    anti-dominated by more than one point of $C_b$. Furthermore, for a
    point to be connected to all of $C_b$'s point by an edge in
    $\RIGX{\P}$, and anti-dominate or be anti-dominated by a point of
    $C_b$, that point must be in $M_{2,1}, M_{2,2},$ or $M_{3,2}$. See \figref{nine:way:partition}. If
    a red point $r$ exists in $M_{2,2}$, then it must be the case that
    $b_1 \adom r$, and $r \adom b_m$. In that case no other red point
    exists below $C_b'$, as no other point can connect to all of the
    points of $C_b$ with an edge. Since $k > 4$ and $|C_r|\geq \frac{|R|}{2}$ we have that $M_{2,2}$ does not have any points
    of $\P \setminus C_b$ below $C_b'$.

    $M_{2,1}$ and $M_{3,2}$ can only contain one red point each since
    there is no way to place two points in either cells that allows
    both points to connect to all of $C_b$ with an edge. If, say,
    $M_{2,3}$ contains a red point $r$, then it must be the case that
    $r\adom b_m$, and $b_i \domby r$ for every $1\leq i < m$. The case
    where a red point exists in $m_{3,2}$ is similar. We thus have
    that $|C_r|-2$ of the red points below $C_b'$ are in $M_{3,3}$,
    and these points create, together with $C_b$, $\adom$-chains
    forming a biclique in $\RIGDX{\P}$ of size at least as big as
    $K_{\frac{k}{2}, \frac{k}{2}-2}$.

    If there are no red points above $C_b'$, then we actually get that
    $|C_r| = k$, and thus $|C_b| \geq |C_r| = k$ as well, which
    implies a $K_{k, k-2}$ biclique in $\RIGDX{\P}$ formed by two
    $\adom$-chains of length $k$. Otherwise we notice that if a blue
    point $b$ is placed in $M_{2,1}$ or $M_{3,2}$, which can be done
    in a way that allows an edge between it and the single red point
    that might be in $M_{1,2}$ or $M_{2,3}$, $b$ can neither dominate
    or anti-dominate any point of $C_r$, as that would not allow the
    red point to connect to some points of $C_b$, and so it must be
    dominated or anti dominated by them. However, this is only
    possible if $|R|=k\leq2$, as $b$ can be connected by an edge to at
    most two points of the $\adom$-chain $C_r$ if it is ``between''
    them in $M_{3,1}$, or a single one if it is in $M_{2,1}$ or
    $M_{3,2}$. For similar reasons blue points can not be placed above
    $C_b$ unless $C_r$ is a single point in $M_{2,1}\cup
    M_{3,2}$. This means that $C_b = B$. Now we can use the same
    arguments used for $C_r$ to show that the red points above $C_b'$
    form an anti-chain, and that at most two of them can exist outside
    of $M_{1,3}$. In this case the three anti-chains (two red and one
    blue) form a $K_{k,k-4}$ biclique in $\RIGDX{\P}$, formed by 3
    $\adom$-chains, one of which is of length $k$.

	\begin{figure}
            \centering \includegraphics[page=4]{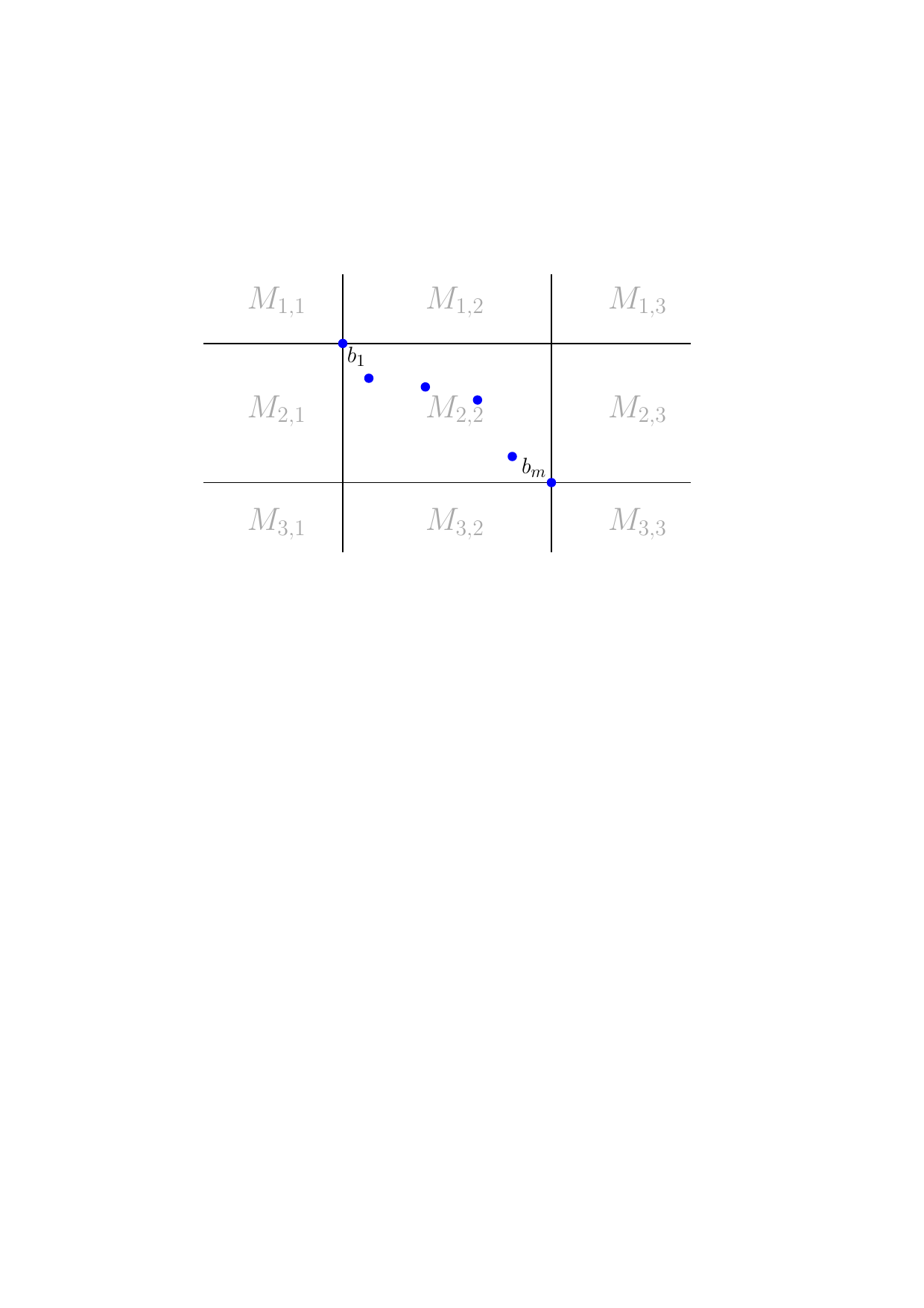}
            \caption{An illustration of the longest chain, and the
               partition it induces on $\Re^2$. The $\domby$-antichain
               $C_b$ is shown as a blue axis-parallel curve, the
               generalized $\domby$-antichain $C_b'$ is shown as the
               continuation of $C_b$ with dashed blue lines, and the
               region of $M_{2,1}\cup M_{2,2}\cup M_{3,2}$ that can
               contain red points is shown in red.}
            \figlab{nine:way:partition}
	\end{figure}
    \end{proof}

\begin{corollary}
    For a triclique $\{R,B,G\}$ in $\RIGX{\P}$ such that
    $|R|\leq|B|\leq|G|$, we must have $|G| = 1$.
\end{corollary}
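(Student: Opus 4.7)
The plan is to argue by contradiction: assume all three parts of the triclique have size at least two and derive a geometric impossibility. Because $|R|\leq|B|\leq|G|$ by convention, the conclusion $|G|=1$ collapses to the statement that all three parts are singletons, so any triclique in which some part has size at least two would contradict the claim. The substantive content is therefore to rule out $K_{2,2,2}$ as a subgraph of $\RIGX{\P}$.

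So pick two elements from each part, obtaining six points that realise $K_{2,2,2}$ inside $\RIGX{\P}$. By Erd\H{o}s--Szekeres, any six points in general position contain a monotone chain of length three in either the $\domby$ or the $\adom$ order; without loss of generality there is a $\domby$-chain $p_1 \domby p_2 \domby p_3$. Since $p_2 \in \rectY{p_1}{p_3}$, the pair $p_1 p_3$ is a non-edge of $\RIGX{\P}$, and the only non-edges of $K_{2,2,2}$ are within-part pairs; hence $p_1$ and $p_3$ share a part, which I label $R$, with $\{r_1,r_2\} = \{p_1,p_3\}$, and $p_2$ lies in another part, say $p_2 = b_1 \in B$.

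Next I analyse the three remaining vertices $b_2, g_1, g_2$. Each such point $q$ must have empty rectangles to both $r_1$ and $r_2$. Using that $b_1$ sits strictly inside $\rectY{r_1}{r_2}$, a short case analysis on where $q$ can avoid being blocked by $b_1, r_1, r_2$ shows that $q$ must lie in one of two regions defined by $b_1$: the strict upper-left $\{q : q \adom b_1\}$ or the strict lower-right $\{q : b_1 \adom q\}$. In particular every member of $\{b_2,g_1,g_2\}$ is $\adom$-comparable with $b_1$.

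The key step is applying Erd\H{o}s--Szekeres a second time to the quadruple $\{b_1, b_2, g_1, g_2\}$. Any $\adom$-comparable pair among the three remaining points extends through $b_1$ to an $\adom$-chain of length three; alternatively, if the three remaining points themselves form no $\adom$-pair then, sorted by $x$-coordinate, they form a $\domby$-chain of length three. In either case the resulting length-three chain produces a new non-edge between two points of $\{b_1, b_2, g_1, g_2\}$. Since the only within-part non-edges in this quadruple are $\{b_1,b_2\}$ and $\{g_1,g_2\}$, a careful case analysis on how $b_2, g_1, g_2$ distribute between the upper-left and lower-right regions of $b_1$, and on which one occupies the endpoint role in the new chain, shows that in some sub-case the new non-edge is cross-part, contradicting $K_{2,2,2}$.

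The main obstacle is exactly this final bookkeeping. The easy sub-cases are those where at least one remaining point lies in each of the two $b_1$-regions, since then the $\adom$-chain through $b_1$ has its first element forced to be $b_2$ (else a cross-part non-edge $g_i b_1$ appears, giving immediate contradiction). The harder sub-case is when all three remaining points lie on the same side of $b_1$; here one must combine the length-three chain among the three points with the cross-part edges $b_2 g_i$ to exclude every placement, and this is where I expect the proof to be most delicate.
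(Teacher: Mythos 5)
There are two problems here. First, your opening reduction is off: ruling out $K_{2,2,2}$ would only show that the \emph{smallest} part of every triclique is a singleton, not that all three parts are singletons, so even if it succeeded it would not prove the statement as you read it. In fact the literal statement is false outright: the four points $(0,1),(1,0),(2,3),(3,2)$ realize $K_4$ in the rectangle influence graph (no point lies in the rectangle of two others), hence the triclique with parts $\{(0,1)\}$, $\{(1,0)\}$, $\{(2,3),(3,2)\}$ has its largest part of size $2$. The reading consistent with the paper's introduction (``any three-sided clique must have a side that contains only a single vertex'') is that the smallest side is a singleton, i.e.\ that $K_{2,2,2}$ is excluded --- which is what your argument actually targets.

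Second, and more seriously, the case analysis you defer at the end cannot be completed, because $K_{2,2,2}$ \emph{is} realizable. Take $r_1=(0,0)$, $r_2=(10,10)$, $b_1=(5,5)$, $b_2=(-2,12)$, $g_1=(-3,6)$, $g_2=(-1,13)$. A direct check shows that every one of the twelve cross-part rectangles has empty interior; the only non-edges among the six points are $r_1r_2$ (blocked by $b_1$) and $g_1g_2$ (blocked by $b_2$). So $R=\{r_1,r_2\}$, $B=\{b_1,b_2\}$, $G=\{g_1,g_2\}$ is a triclique with all parts of size two. This configuration sits exactly in the sub-case you flagged as delicate: all of $b_2,g_1,g_2$ anti-dominate $b_1$, among themselves they form the $\domby$-chain $g_1\domby b_2\domby g_2$, and the resulting forced non-edge $g_1g_2$ is within-part, so no contradiction arises and none can be manufactured. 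Your first two steps (the length-three chain forcing $p_1,p_3$ into one part, and the $\adom$-comparability of the remaining three points with $p_2$) are correct, but the target is simply unreachable. For comparison, the paper's own proof is a figure-based case analysis that starts from a three-point chain inside one part (it explicitly needs $k\geq 3$), so it too does not address the all-parts-of-size-two case; the corollary as printed appears to require either a stronger hypothesis (a part of size at least $3$) or a weaker conclusion.
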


\begin{proof}
    Since the proof is similar in flavor to that of
    \lemref{biclique-structure} in that it is a case analysis over the
    partition of the plane induced by the participating points, we use
    a simple proof by drawing using \figref{tricliques}. First we
    notice that the structure described in \lemref{biclique-structure}
    does not require that $k>4$ for anything but the size of a one
    directional biclique, and all of the arguments apply when
    $k\geq 3$. We therefore get that if $|B|\geq 3$ then the biclique
    $\{R,B\}$ includes, without loss of generality, an $\adom$-chain
    of three points of $R$, and an $\adom$-chain of two points of $B$.

    In the figure we have an $\adom$-chain of size three of $R$, and
    an $\adom$-chain of size two of $B$ which must exist due to
    arguments in \lemref{biclique-structure}. Notice that regardless
    of the position of the two blue points in the partition induced by
    the red points, there are no regions in which more than one point
    of $G$ can be added in order to form a triclique. The areas are
    coded to show the reason why a point of $G$ cannot be placed in
    that region; red if it cannot connect with an edge of $\RIGX{\P}$
    to all points of $R$, blue if it cannot connect to all points of
    $B$, purple if it prevents a point of $R$ and a point of $B$ from
    connecting, and a tiling pattern if it can only be connected to
    two or less blue points. Note that blue points cannot be placed in
    red regions and vice versa.
\end{proof}

\begin{figure}
    \phantom{}\hfill%
    \includegraphics[page=1,
    width=0.30\linewidth]{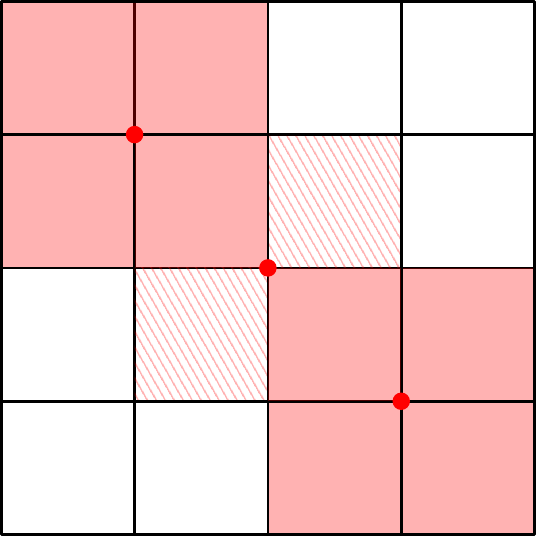}%
    \hfill%
    \includegraphics[page=2,
    width=0.30\linewidth]{figs/tricliques}%
    \hfill%
    \includegraphics[page=3,
    width=0.30\linewidth]{figs/tricliques}%
    \hfill%
    \phantom{}%
    \caption{A color and texture coded map of the proof for the
       size of tricliques in $\RIGX{\P}$}
    \figlab{tricliques}
\end{figure}

\subsection{Approximating independent set of rectangles}

\begin{lemma}
    Let $\Rects = \Set{\rectY{\p}{\q}}{\p\q\in\RIGX{\P}}$, and let $I$
    be
    \begin{equation*}
        \underset{X\subseteq \Rects}{\argmax}\Set{|X|}{\forall
           \rectA,\rectB\in X, \rectA\cap\rectB=\emptyset},
    \end{equation*}
    i.e. $I$ is a maximum size independent set of rectangles giving
    rise to edges of $\RIGX{\P}$. It is possible, in $O(n\log^2 n)$
    time, to find an independent set of $\Rects$ of size
    $\Omega(|I|/\log n)$.
\end{lemma}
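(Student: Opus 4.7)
The plan is to apply the classical $O(\log n)$-approximation for maximum independent set of axis-parallel rectangles to $\Rects$, using the biclique cover of $\RIGX{\P}$ from \thmref{smaller-biclique:2} to avoid materializing the potentially $\Theta(n^2)$ rectangles. First I would compute the biclique cover in $O(n\log^2 n)$ time, producing $O(n)$ bicliques $\{(\PB_i,\PA_i)\}$ of total weight $O(n\log^2 n)$. A key structural observation is that since each pair $(\PB_i,\PA_i)$ is quadrant-separated, every rectangle of biclique $i$ contains the common ``core'' rectangle
\begin{equation*}
   R_i^* = \bigl[\max\nolimits_{p\in \PB_i} x(p),\ \min\nolimits_{q\in \PA_i} x(q)\bigr] \times
   \bigl[\max\nolimits_{p\in \PB_i} y(p),\ \min\nolimits_{q\in \PA_i} y(q)\bigr],
\end{equation*}
so any independent subset of $\Rects$ contains at most one rectangle per biclique, and in particular $|I| = O(n)$.

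Next I would perform divide-and-conquer on a balanced binary tree over the $x$-coordinates of $\P$. At each internal node $v$ with splitting vertical line $\ell_v$, the crossing rectangles of $\Rects$ project to $y$-intervals on $\ell_v$, and their MIS is solved exactly by greedy interval scheduling. Each biclique straddling $\ell_v$ contributes a family of crossing rectangles whose $y$-projections all contain a common subinterval (the core's $y$-projection), so a single representative $y$-interval per straddling biclique suffices. By the staircase structure of the $\adom$-chains, this representative equals $\bigl[\max_{p\in \PB_i}y(p),\min_{q\in \PA_i}y(q)\bigr]$ and is computed in $O(1)$ from the presorted chains. Recurse on the left and right halves and return the best independent set found over all levels, converting each selected representative back to its biclique rectangle.

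The approximation guarantee follows the standard argument: each rectangle of an optimum $I$ is first ``crossing'' at exactly one level of the recursion (the highest level whose splitting line separates its two defining corners). Since the recursion has depth $O(\log n)$, some level contains at least $|I|/O(\log n)$ rectangles of $I$, and exact interval scheduling at that level recovers at least that many. Using the representative in place of an optimum rectangle within the same biclique preserves disjointness on $\ell_v$, since both have $y$-projection containing the common core $y$-range and the representative is the tightest such interval.

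The main obstacle is bounding the running time to $O(n\log^2 n)$. I would route each biclique to only those $O(\log n)$ nodes of the recursion tree at which it appears as a newly straddling biclique. Concretely, after processing biclique $i$ at its LCA node $v^*$, the residual sub-biclique (whichever chain was not yet cleanly split) is passed down to the appropriate subtree and processed at its new LCA there; unrolled, each biclique visits $O(\log n)$ nodes with $O(1)$ work per visit to contribute its representative interval, plus $O(k\log k)$ for interval scheduling on the $k$ representatives at each node. Summing gives $O(n\log^2 n)$, matching the cost of the biclique cover itself.
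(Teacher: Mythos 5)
Your proposal matches the paper's own proof in essence. Both rely on the $O(n)$-biclique cover of \thmref{smaller-biclique:2}, the observation that any independent set contains at most one rectangle per biclique, the choice of the ``tightest'' representative per straddling biclique (the rectangle joining the highest point of the lower chain with the lowest point of the upper chain, which simultaneously has minimal $y$-extent and maximal $x$-extent among the biclique's rectangles), greedy interval scheduling on the representatives' $y$-projections at the splitting line, and the standard $O(\log n)$-level vertical-line divide-and-conquer.

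A few remarks, all favourable or minor. Your explicit ``core rectangle'' $R_i^*$ cleanly packages what the paper leaves implicit, and it gives $|I|=O(n)$ as an immediate consequence. Your routing argument (each biclique is stabbed along a single root-to-leaf path of the recursion tree, so it appears at $O(\log n)$ nodes) is also more careful than what the paper writes: the paper's appendix proof actually concludes with a running time of $O(n\log^3 n)$, which contradicts the lemma statement of $O(n\log^2 n)$, whereas your accounting does reach the stated $O(n\log^2 n)$. The only imprecision is the claim of ``$O(1)$ work per visit'' when routing a biclique down the tree: determining where the splitting line cuts a chain generically requires a binary search, so this is $O(\log n)$ per visit, but since a biclique visits $O(\log n)$ nodes and the total number of (biclique, node) pairs is $O(n\log n)$, this still sums to $O(n\log^2 n)$ and does not affect the final bound.
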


\begin{proof}
    The proof here is similar to that of \lemref{depth-approx}, and we
    will therefore only describe the algorithm for finding a maximum
    size independent set of rectangles of $\RIGX{\P}$ intersecting a
    vertical line $\ell$. Since we can only take one rectangle from
    every biclique we traverse the biclique cover and keep, from every
    biclique that intersects $\ell$, only the rectangle created by the
    lowest point in $\P_i$ and the highest point in $\PSA_1$, assuming that $\P_i$ is above $\PSA_i$ (remember they are quadrant separated by \obsref{rects:implicit}). We then return the
    maximum independent set of intervals corresponding to the
    projections of these rectangle on $\ell$. The number of rectangles
    is $O(n)$, and thus we can get the maximum
    independent set in $O((n\log n)$ time using a greedy algorithm
    for maximum interval independent set.

    The rest of the proof, i.e. recursive algorithm and runtime
    analysis is similar to that of \lemref{depth-approx}, with the
    small differences of considering the sum of sets in each level of
    the recursion when keeping track of the maximum value, and the
    runtime itself which is dominated by the construction of the cover
    rather than the one-dimensional algorithm and thus results in an
    $O(n \log^3 n)$ overall runtime.
\end{proof}

\subsection{Induced subgraphs and edges}

\begin{lemma}
    For a set $\P\subseteq \Re^2$ of $n$ points in the plane, if $\P$
    does not contain a $\domby$-chain or anti-chain of length $l$,
    then $\RIGX{\P}$ contains $K_{k,k}$ as an induced subgraph for
    some $k=O(n/l)$.
\end{lemma}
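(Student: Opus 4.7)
The plan is to find a large antichain of $\P$ via Mirsky's theorem, split it into two halves that are candidates for the two sides of an induced biclique, and then prune obstructions using the dual Dilworth decomposition. I read the statement as asserting the existence of an induced $K_{k,k}$ of size $k = \Omega(n/l)$.

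First, since $\P$ contains no $\domby$-chain of length $l$, Mirsky's theorem partitions $\P$ into at most $l-1$ antichains (equivalently $\adom$-chains). By the pigeonhole principle some antichain $A$ has $|A| \ge n/(l-1) = \Omega(n/l)$. Sort $A$ left to right as $a_1, \ldots, a_m$; because $A$ is an $\adom$-chain, the ordering is simultaneously $y$-decreasing. Define $L = \{a_1, \ldots, a_{\lceil m/2 \rceil}\}$ and $R = \{a_{\lceil m/2 \rceil + 1}, \ldots, a_m\}$. Each point of $L$ anti-dominates each point of $R$, so every pair $(p,q) \in L \times R$ supports a well-defined rectangle $\rectY{p}{q}$, and both halves have size $\Omega(n/l)$.

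Next I would verify the ``induced'' part, which is almost automatic: for any two points $a_i, a_{i'}$ on the same side, the rectangle $\rectY{a_i}{a_{i'}}$ contains the intermediate antichain member $a_{i''}$, so no edge of $\RIGX{\P}$ lies inside $L$ or inside $R$. Hence the induced subgraph on $L \cup R$ is bipartite with all potential edges lying between $L$ and $R$, and the task reduces to ensuring that the rectangles $\rectY{p}{q}$ for $p \in L$, $q \in R$ are actually empty of $\P$-points.

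Points of $\P \setminus A$ may intrude into some of these rectangles, so I would invoke the dual assumption: since $\P$ has no $\adom$-chain of length $l$, Dilworth's theorem partitions $\P$ into at most $l-1$ $\domby$-chains $C_1, \ldots, C_s$. Any blocker $b$ inside some $\rectY{a_i}{a_j}$ lies on one of these short chains, and the set of index pairs $(i,j) \in L \times R$ whose rectangle contains a particular $b$ is a ``monotone rectangle'' in the $L \times R$ index grid (those $a_i$ lying above-left of $b$ paired with those $a_j$ lying below-right of $b$). A counting argument based on the bound $|C_t| < l$ would then cap the total number of blocked pairs, and a pigeonhole/pruning step would extract $L' \subseteq L$, $R' \subseteq R$, still of size $\Omega(n/l)$, such that every rectangle $\rectY{p}{q}$ with $p \in L'$, $q \in R'$ is empty, yielding the induced $K_{k,k}$ with $k = \Omega(n/l)$.

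The main obstacle is this pruning step: a single blocker can simultaneously destroy a large ``staircase'' of index pairs, so naive deletion may shrink one side too much to guarantee a balanced biclique. The delicate part of the proof is showing that, because the dual decomposition consists of only $l-1$ chains each of length less than $l$, the blocker set can be handled layer by layer so that one can always retain equal-size surviving halves of size $\Omega(n/l)$; I expect this to be the step that forces the sharp dependence on $l$ in the final bound.
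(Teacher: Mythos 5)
There is a fatal flaw in your very first step, before the pruning difficulty you flag ever becomes relevant. If $A=(a_1,\ldots,a_m)$ is a single $\adom$-chain sorted left to right (hence $y$-decreasing), then for any $i<j$ the rectangle $\rectY{a_i}{a_j}$ has $a_i$ as its top-left corner and $a_j$ as its bottom-right corner, and therefore contains every intermediate antichain point $a_k$ with $i<k<j$. This is exactly the geometric fact you invoke to rule out within-side edges, and it applies verbatim to the cross pairs: among all pairs in $L\times R$, the only one whose rectangle avoids $A$ itself is the single consecutive pair $(a_{\lceil m/2\rceil},a_{\lceil m/2\rceil+1})$. The graph induced on $L\cup R$ is thus (at most) a path, not a $K_{k,k}$, and no pruning of blockers from $\P\setminus A$ can repair this, because the blockers are the antichain points themselves. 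The RIG restricted to a single antichain is a path.

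The missing idea is that the two sides of a large biclique must be related by domination, not anti-domination: you need two antichains $A^+$ and $A^-$ such that every point of $A^+$ dominates every point of $A^-$. Then for $p\in A^+$ and $q\in A^-$ the rectangle $\rectY{q}{p}$ has $p$ as its top-right corner, so the remaining points of $A^+$ lie above it or to its right (and symmetrically for $A^-$), making every cross rectangle empty of the selected points, while non-consecutive within-side pairs are still blocked by intermediate antichain members. The paper obtains this configuration by choosing a center so that two antipodal quadrants each contain at least $n/4$ points of $\P$, and then extracting an antichain of size at least $n/(4l)$ in each quadrant using the same Mirsky/Dilworth partition you already use. (A secondary issue, which incidentally the paper's own sketch also leaves loose: consecutive pairs inside a single antichain generally do form edges of the RIG, so your claim that no edge lies inside $L$ or inside $R$ fails for adjacent pairs and would need separate treatment even if the cross pairs worked.)
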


\begin{proof}
    We first partition $\Re^2$ into four quadrants, such that two
    antipodal quadrants contain at least $n/4$ points of $\P$
    each. This can be done by finding a horizontal line that separates
    $\P$ into two sets of equal size ($\pm 1$), and then finding a
    vertical line with minimal $x$-coordinate that contains $n/4$
    points of one of the halves to its left.

    We now consider the two quadrants, without loss of generality
    these are the positive and negative quadrants (i.e. $(+,+)$ and $(-,-)$
    quadrants respectively) denoted $\Quad_1$ and $\Quad_3$, and denote
    $\P^+ = \Quad_1 \cap \P$ and $\P^- = \Quad_3 \cap \P$ . By Dilworth's
    Theorem, since the longest $\domby$-chain, in each quadrant is of
    length at most $l$, there is a partitioning of $\P^+$ and $\P^-$
    number into $l$ mutually disjoint $\domby$-antichains. The largest
    $\domby$-antichain in each of these sets is of length
    $\geq \frac{n}{4l}$, and the induced RIG over the set of points
    constituting these anti-chains is there $K_{k,k'}$ for
    $k,k' \geq \frac{n}{4l}$.
\end{proof}

\begin{lemma}
    Let $R$ be the set of all rectangles defined by points of $\P$
    (regardless of the number of points of $\P$ in their interior).
    There exists a point $\p\in\Re^2$ that is contained in $O(n/2)$
    rectangles of $R$ such that no two rectangles are supported
    by the same point, i.e. the set of points supporting the
    rectangles is $\P$. Furthermore we can find such a point in $O(n)$
    time.
\end{lemma}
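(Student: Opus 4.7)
The plan is to exploit the two-dimensional median of $\P$. Let $x_m$ denote the median of the $x$-coordinates of $\P$ and $y_m$ the median of the $y$-coordinates, both computable in $O(n)$ time by standard linear-time selection, and set $\p = (x_m, y_m)$.

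Let $a, b, c, d$ denote the number of points of $\P$ lying in the upper-right, upper-left, lower-left, and lower-right open quadrants centered at $\p$, respectively. By the defining property of the medians, assuming general position and (for cleanness) $n$ even, each of the halfplanes $\{x > x_m\}$, $\{x < x_m\}$, $\{y > y_m\}$, $\{y < y_m\}$ contains exactly $n/2$ points. Writing out these equalities gives $a + d = a + b = n/2$, hence $b = d$, and symmetrically $a = c$.

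For any $\pc$ in the upper-right quadrant and any $\pd$ in the lower-left quadrant, the rectangle $\rectY{\pc}{\pd}$ equals $[x(\pd), x(\pc)] \times [y(\pd), y(\pc)]$ and therefore contains $\p$; the analogous statement holds for the upper-left/lower-right antipodal pair. Since $a = c$ and $b = d$, we can form an arbitrary perfect matching between the upper-right and lower-left quadrants, and a separate perfect matching between the upper-left and lower-right quadrants. This yields $a + b = n/2$ rectangles, all containing $\p$, and because the matchings are disjoint on vertices, no point of $\P$ supports more than one rectangle in the output set.

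The only minor obstacle is handling boundary cases: odd $n$, or ties if one does not wish to assume general position. For odd $n$ the single point lying on a median line can be assigned to either side, and for coincident coordinates ties can be broken consistently; in both cases the equalities $a=c$ and $b=d$ hold up to an additive constant, which only perturbs the guaranteed matching size by $O(1)$. All ingredients---median selection, quadrant classification, and pairing---run in $O(n)$ time, giving both the $\lfloor n/2 \rfloor$ rectangle bound and the claimed runtime.
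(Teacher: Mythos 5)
Your proof is correct, and it takes a genuinely more direct route than the paper. The paper also begins with a vertical median line, but then runs an iterative search for the horizontal line -- repeatedly bisecting the surviving points, maintaining two counters, and recursing on half the set until the antipodal quadrants balance, with correctness deferred to the observation that "the process will always terminate." Your key insight is that no search is needed: once $\ell_v$ is the vertical median, the horizontal median already does the job, because the four linear constraints $a+d=a+b=b+c=c+d=n/2$ immediately force $a=c$ and $b=d$. This buys a one-shot $O(n)$ algorithm (two selections plus a scan) with a two-line correctness argument, in place of the paper's recurrence $T(n)=T(n/2)+O(n)$ and its somewhat opaque counter bookkeeping; the matching step (pair antipodal quadrants arbitrarily) is the same in both. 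Your handling of the degenerate cases is also the right call: for odd $n$ a perfect matching of $\P$ is impossible anyway, so the lemma's requirement that the supporting set be all of $\P$ can only hold up to one leftover point, and placing $x_m,y_m$ strictly between the two middle coordinates (or breaking ties consistently) makes the quadrant counts exact. The only cosmetic mismatch is that the lemma's bound ``$O(n/2)$'' should really be read as ``exactly $\lfloor n/2\rfloor$,'' which is what your matching delivers.
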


\begin{proof}
    We start by finding a vertical line $\ell_v$ that splits the
    points of $\P$ into two parts of equal size. Let $\P_0=\P$, and
    let $x^+$ and $x^-$ be two counters initialized to 0. In the
    $i$\th iteration we find a horizontal line $\ell^i_h$ that splits
    $\P_{i-1}$ into two parts of equal size, let $\P_{i}^+$ be the
    points of $\P_{i-1}$ above $\ell_h$ and to the right of $\ell_v$,
    and let $\P_{i}^-$ be the points of $\P_{i-1}$ below $\ell_h$ and
    to the left of $\ell_v$. if $\P_{i}^+ + x^+ = \P_{i}^- + x^-$ we
    return the intersection point of $\ell_v$ and $\ell^i_h$,
    otherwise without loss of generality that
    $\P_{i}^+ + x^+ > \P_{i}^- + x^-$. We set
    $x^- = x^- + |\P_{i}^-|$, and
    $\P_{i} = \P_{i-1}\setminus \Set{\p\in\P}{\p\text{ is below
       }\ell^i_h}$.

    The correctness follows the simple observation that the process
    will always terminate when every two antipodal quadrants created
    by the lines will contain the same number of points in each
    quadrant. The runtime is given by the recursive formula
    $T(n)=T(n/2) + O(n)$ using linear time median selection. The
    formula gives an overall linear runtime.
\end{proof}

\InsertAppendixOfProofs

\end{document}